\newtheorem{theorem}{Theorem}
\newtheorem{lemma}[theorem]{Lemma}
\newtheorem{corollary}[theorem]{Corollary}
\theoremstyle{definition}
\newtheorem{definition}{Definition}[section]
\newcommand{\mA}{\mathcal{A}}
\newcommand{\nn}{\nonumber}
\newcommand{\lb}{\left(}
\newcommand{\rb}{\right)}
\newcommand{\mO}{\mathcal{O}}
\newcommand{\ep}{\epsilon}
\newcommand{\p}{\partial}
\newcommand{\tr}{\text{tr}}
\newcommand{\ad}{\text{ad}}
\newcommand{\Ad}{\text{Ad}}
\begin{document}

\title{Local approximations of global Hamiltonian from inclusion of algebras}


\author{Yidong Chen$^1$,}
\affiliation{$^1$Department of Mathematics, University of Illinois at Urbana-Champaign, Illinois, IL 61801, USA}
\author{Nima Lashkari$^2$,}
\author{Kwing Lam Leung$^2$}
\affiliation{$^2$Department of Physics and Astronomy, Purdue University, West Lafayette, IN 47907, USA}
\emailAdd{yidongc2@illinois.edu}
\emailAdd{nima@purdue.edu}
\emailAdd{leung60@purdue.edu}


\abstract{We write down the global Hamiltonian of conformal field theory (CFT) in finite volume in terms of the modular Hamiltonian of the vacuum reduced to a local ball-shaped region, and use it to propose local approximations to the global Minkowski Hamiltonian in quantum field theory (QFT). The proposed Hamiltonians are motivated by the operator-algebraic property of nuclearity. They are constructed from the characteristic functions of inclusion of algebras and can be viewed as regulators of the modular Hamiltonian of local algebras of QFT.}


\maketitle

\section{Introduction}
In most applications of quantum information and entanglement theory to QFTs, we use an ultra-violet (UV) regulated Euclidean path-integral to associate density matrices $\rho_A$ to causally complete regions of spacetime $A$. For the Rindler wedge $W:x^1>|x^0|$, up to UV cut-off dependent terms, the density matrix is the exponentiated boost generator $B_1$ in the $x^1$-direction restricted to $W$. The full modular operator of $W$ is $\Delta_W=\rho_W\otimes \rho_{W'}^{-1}\sim e^{-2\pi B_1}$ where $W'$ is the causal complement of $W$ \footnote{More generally, in any geometry with a bifurcate Killing horizon, the modular Hamiltonian of half-space $W$ is given by the generator of the Killing flow $K_W$.}. As we take the UV cut-off away, the non-universal boundary terms in the modular Hamiltonian $K_W=-\log \Delta_W$ drop, and we are left with boost which has the entire real line in its spectrum and no normalizable eigenvector\footnote{In operator-algebraic language, the local algebras of $W$ and $W'$ in QFT do not split, and do not admit density matrices.}. In the presence of density matrices, for real $t$, the trace of modular flow  $\tr(\Delta^{it})=|\tr(\rho^{it})|^2$ can be interpreted as a modular spectral form factor. For local algebras of QFT, this trace diverges and needs to be regulated.

The divergences above are not specific to half-space regions or the vacuum state. It has been argued in \cite{fredenhagen1985modular} that for any state of QFT and the causal development of a ball $B$, the modular Hamiltonian always has the entire real line in the spectrum, which is the defining property of type III$_1$ von Neumann algebras. It is believed that the local algebra of observables of any QFT, in any state, and for any causality-complete local region is the unique type III$_1$ {\it hyperfinite} factor \cite{buchholz1987universal}. This uniqueness implies that a unitary change of basis takes local operators of one-dimensional free bosons on an interval to those of the causal development of a ball in any QFT in any spacetime dimension. In other words, to describe the physical properties of  QFT, we need a more refined operator-algebraic structure than just one abstract algebra. The key intuition is that the difference between different QFTs is encoded in the inclusion of the algebra of regions. 

The proof of the operator algebraic property of hyperfiniteness already hints at the intuition above \cite{buchholz1987universal}. It relies on a class of requirements called {\it nuclearity} that are operator algebraic characterizations of the physical requirement that the growth of the density of states at high energies is nicely bounded \cite{buchholz1986causal}\footnote{The nuclearity assumptions imply the split property, which in turn implies hyperfiniteness. Note that the split property does not imply nuclearity.}. They are manifestations of locality, and some authors include nuclearity in the axioms of local QFT \cite{hollands2018entanglement}. To understand these operator algebraic requirements, we recall that the GNS representation map sends operators to vectors
\begin{eqnarray}
    \Theta:\mA\to \mA\ket{\Omega}, \qquad \Theta(a)=a\ket{\Omega}\ .
\end{eqnarray}
Since local algebras (of causally complete regions of finite volume) of QFT are infinite-dimensional, the image of the map is infinite-dimensional, as well \footnote{In fact, in QFT, it follows from the Reeh-Schlieder property that $\overline{\mA_A\ket{\Omega}}=\mathcal{H}$ for all local algebras $\mA$.}. This means that by acting locally on the vacuum state, we have access to create an infinite-dimensional Hilbert space of excitations. A common regulator used in QFT is Euclidean time evolution:
\begin{eqnarray}\label{BWregulator}
    \Theta^{BW}_\ep(a)=e^{-\ep H}a\ket{\Omega}\ .
\end{eqnarray}
With the regulator, the effective dimensionality of the Hilbert space we explore becomes finite \cite{haag2012local}. To measure the effective dimensionality of the map's range, we use nuclear norms, which are natural generalizations of the trace distance norm from Hilbert spaces to Banach spaces; see Appendix \ref{subsection:defnuclear} for a definition \footnote{Note that the domain of $\Theta^{BW}_\ep$ map is the algebra itself, which is a non-commutative $L^\infty$ space, and not an $L^2$-space (Hilbert space) \cite{furuya2023monotonic}.}. The nuclear norm of $\|\Theta_\ep\|_1$ is finite but diverges as $\ep\to 0$. Intuitively, we expect that as we decrease $\ep$, the states in the range of (\ref{BWregulator}) can have larger energies. As we take $\ep\to 0$, the growth of the nuclear norm of $\Theta_\ep$ is controlled by the growth of the density of states at high energies. The Buchholz-Wichmann (BW)-nuclearity \cite{buchholz1986causal} is the requirement that there exists a constant $c$ and $n(R)>0$ such that for the local algebra of a ball of radius $R$ we have
\begin{eqnarray}
    \|\Theta^{BW}_{\ep,B(R)}\|_1\leq e^{\lb\frac{c}{\ep}\rb^{n(R)}}\ .
\end{eqnarray}
This condition is tied to the familiar statement that in physical theories, the microcanonical entropy grows as a power law in energy \cite{narnhofer1994entropy}. The downside of this condition is that it requires access to the global Hamiltonian as a regulator. It is desirable to have an operator algebraic characterization of this condition that is solely in terms of the local data. Modular nuclearity and $L^2$-nuclearity are such local conditions that, instead of the global Hamiltonian, use the inclusion of algebras as regulators \cite{buchholz1990nuclear,buchholz1990nuclear2,buchholz2007nuclearity}. For $B_s$ a linear scaling of $B$ with $s>0$, the inclusion $B_s\subset B_0$ for small $s$ can be viewed as a Lorentz-invariant regulator for $B_s$. This operator algebraic construction has been used as a regulator for entanglement measures in QFT  \cite{narnhofer2002entanglement,hollands2018entanglement,casini2015mutual}.
In modular nuclearity, we use the following local regulator
\begin{eqnarray}
&&\Theta^M_{\ep,\alpha,B_s\subset B}:\mA_{B_s}\to \Delta_B^\alpha \mA_{B_s}\ket{\Omega}
\end{eqnarray}
The domain and range of the maps above are all non-commutative $L_p$-spaces. 
It is more convenient to work with Hilbert space (non-commutative $L^2$-space) so that the nuclear norm becomes trace distance. This motivates the condition of $L^2$-nuclearity, defined as the nuclearity of the map 
\begin{eqnarray}
    \Theta^{L2}_{\ep,B}:\Delta_{B_s}^{1/4}\mA_{B_s}\ket{\Omega}\to \Delta_B^{1/4}\mA_{B_s}\ket{\Omega}
\end{eqnarray}
Then, the nuclear norm of this map is 
\begin{eqnarray}\label{L2map}
    \|\Theta^{L2}_{\ep,B}\|_1=\tr(|\Delta_B^{1/4}\Delta_{B_s}^{-1/4}|)
\end{eqnarray}
The assumption of $L^2$-nuclearity in two-dimensional CFTs was explored in \cite{buchholz2007nuclearity}. We use their work extensively here \footnote{For a recent discussion of applications of $L^2$-nuclearity to holography see \cite{Ceyhan:2025qrj}.}. They showed that $L^2$-nuclearity implies modular nuclearity, which implies BW-nuclearity. This suggests that the UV behavior of the global Hamiltonian is reflected in the local algebras. The interpolation theory of non-commutative $L_p$-spaces suggests a generalization of (\ref{L2map}) to \cite{furuya2023monotonic}
\begin{eqnarray}
    \lb\Delta_B^\alpha \Delta_{B_s}^{-2\alpha}\Delta_B^\alpha\rb^{1/(2\alpha)}
\end{eqnarray}
that we study in more depth.

In this work, we argue that the local data of the inclusion of balls $B(e^{-2\pi s}R)\subset B(R)$ in the limit $s\to 0$ can be used to obtain a local approximation to the global Hamiltonian, regulate the modular Hamiltonian, and other information-theoretic measures of entanglement. 
Specifically, for any inclusion of ball-shaped regions of radii  $e^{-2\pi s}R$ and $R$ and $s>0$, i.e. $B(e^{-2\pi s}R)\subset B(R)$, and $\Im(z)\in (-1/2,0)$ we study the {\it characteristic function} of the inclusion (see Appendix \ref{subsection:characteristicfunctions} for a review of characteristic functions of inclusion of von Neumann algebras)
\begin{eqnarray}
    T_R(z)=\Delta_{B(R)}^{iz} \Delta_{B(e^{-2\pi s}R)}^{-iz}
\end{eqnarray}
and the operator
\begin{eqnarray}\label{proposal}
    \tilde{H}_R(z)=&&\frac{{-1}}{2\sin(2\pi i z)}\p_s\log(T_R(z)T_R(-z)^{-1})_{s=0}
    \nn\\
    =&&\frac{{-1}}{2\sin(2\pi i z)}\p_s\lb\Delta_{B(R)}^{i z} \Delta_{B(e^{-2\pi s}R)}^{-2iz} \Delta_{B(R)}^{iz}\rb_{s=0}
\end{eqnarray}
Note that this definition is even in $z$, and for both real and imaginary values of $z$, $\tilde{H}_R(z)$ is self-adjoint. For $z=-i\alpha$ purely imaginary, $\tilde{H}_R(z)\geq 0$ \footnote{This follows from the fact that for $z=-i\alpha$ with $\alpha\in (0,1/2)$ for some positive coefficient $C$ and positive operator $Q$ we have $\tilde{H}_R(z)=C Q(\p_s\Delta_{B(e^{-2\pi s}R)}^{2\alpha})Q$ and we have used $\p_s\Delta^{-1}=-\Delta^{-1}(\p_s\Delta)\Delta^{-1}$. Since $\p_s=-2\pi R\p_R$, it follows from the monotonicity of the modular operator that $\p_s\Delta_s>0$, hence, $\tilde{H}_R(z)>0$.}.

$L^2$-nuclearity is the statement that $\text{tr}(e^{-\beta\tilde{H}_R(-i/4)})<\infty$ for all positive $\beta$ \footnote{Note that this trace class condition does not imply that the spectrum of $\tilde{H}_R(z)$ is discrete.}. 
In this work, we establish the following results:
\begin{enumerate}
    \item In vacuum CFT on Lorentzian cylinder of circumference $2\pi$, and the modular Hamiltonian $K_u\equiv K_{B_u}$ of ball-shaped regions of radius $R=e^{-2\pi u}$, our self-adjoint operator 
    $\tilde{H}_u(z)=\p_u K_u$ and independent of $z$. In CFT, we have $\tilde{H}_{u}= \Delta_F^{-iu}K_H \Delta_F^{iu}$ where $K_H = 2\pi H$, which is basically the dilation of the global Hamiltonian of the Lorentzian cylinder.
    To isolate the global Hamiltonian of the cylinder from the local state, we need to undo this dilatation:
    \begin{eqnarray}
        2\pi H=\cosh(2\pi u)\tilde{H}_u-\sinh(2\pi u)K_u\ .
    \end{eqnarray}
    It follows from the representation theory of the conformal group that $H$ has a discrete spectrum; see Appendix \ref{subsubsection:discrete}.
    \item In a QFT, we have
    \begin{eqnarray}\label{Htilde}
        &&\tilde{H}_R(z)=\mathcal{E}_z\lb R\p_R K_{B(R)}\rb
    \end{eqnarray}
    where $\mathcal{E}_z$ is a coarse-graining map (super-operator) that smoothes operators by averaging them over Lorentzian/Euclidean modular time
    \begin{eqnarray}\label{Ez}
        \mathcal{E}_z(X)=\frac{1}{2\sinh(2\pi z)}\int_{-z}^zdt\: e^{i t K}X e^{-i t K}\ .
    \end{eqnarray}
To understand why we call the map $\mathcal{E}_z$ coarse-graining, assume that there are eigenoperators 
\begin{eqnarray}\label{eigenop}
    e^{it K}X_\omega e^{-i t K}=e^{it \omega}X_\omega\ .
\end{eqnarray}
Then, the map $\mathcal{E}_z(X_\omega)$ acts a low-pass filter
\begin{eqnarray}\label{filter}
    \mathcal{E}_z(X_\omega)=\frac{1}{2\sinh(2\pi z)}\int_{-z}^z dz e^{i\omega z}=\frac{\sin(\omega z)}{\omega \sinh(2\pi z)}X_\omega
\end{eqnarray}
because $\sin(\omega z)/\omega\to 1$ as $\omega\to 0$, and decays like $\frac{1}{\omega}$ for large $\omega$. An alternative expression is
\begin{eqnarray}
    \tilde{H}_R(z)={\frac{-\pi}{\sin(2\pi z)}}\{\Delta^{-z},R\p_R(\Delta^z)\}
\end{eqnarray}
where $\{A,B\}=AB+BA$ is the anti-commutator.

  \item  We use the expression in Lemma \ref{Htilde} to propose local approximations of the global Minkowski Hamiltonian in QFTs in the vicinity of a UV fixed point:
    \begin{eqnarray}\label{localapprox}
        2\pi H_z=\frac{1}{R}\lb \tilde{H}_R(z)+ K_{B(R)}\rb\ .
    \end{eqnarray}
\end{enumerate}
By construction, the operator $\tilde{H}_R(z)$ is positive, however, once we add the extra term in (\ref{localapprox}) the positivity is no longer guaranteed. The approximations $H_z$ for all $z$ are expected to tend to the global Minkowski Hamiltonian at high enough energies. Different $H_z$ differ by the part of the infrared physics they preserve. More generally, we propose 
\begin{eqnarray}\label{averagemeasure}
&& 2\pi H_p=\frac{1}{R}\lb \tilde{H}_p+K_{B(R)}\rb\nn\\
&&    \tilde{H}_p=\int_\mathcal{C} dz \: p(z) \mathcal{E}_z(R\p_R K_{B(R)})
\end{eqnarray}
for some probability distribution $p(z)$ supported on some contour $\mathcal{E}$ in the complex $z$-plane.  The basic idea is that one needs to set up an optimization problem to fix $p(z)$ depending on the problem of interest, and what aspects of the infrared physics of the global Hamiltonian we would like to reproduce. For instance, one might require that the reconstruction is such that it matches the first-order correction in conformal perturbation theory, or one might require that in the vicinity of the entangling surfaces, the proposed approximation becomes exact. Note that the different choices of $p(z)$ correspond to different filters in the sense of (\ref{eigenop}). We choose $p(z)$ to be a probability distribution so that the positivity of $\tilde{H}_R(z)$ implies the positivity of $\tilde{H}_p$. The normalization $\int_{\mathcal{C}} dz p(z)=1$ is required by the desire that the answer matches the CFT Hamiltonian at the fixed point.

Our work has implications for quantum chaos. To see this, consider a thermal density matrix (Gibbs state): $\rho\sim e^{-\beta H}$. Then, $\tr(\Delta^{i T})$ for real $T$ computes the spectral form factor: $\tr(\Delta^{i T})\sim |\tr(e^{-i\beta H T})|^2=|Z(\beta i T)|^2$. It is a key conjecture in quantum chaos that for a chaotic Hamiltonian, the coarse-grained spectral form factor shows a universal decay, ramp, and plateau \footnote{This can be viewed either as an infinite temperature, or microcanonical spectral form factor. A form of coarse-graining often discussed in the literature is averaging over time, which resembles our coarse-graining in (\ref{Ez}).}. In Section \ref{sec:regulate}, we will see that our operators $\tilde{H}_z$ can be used to regulate the modular spectral form factor.

In holography, thermal states of the boundary are dual to eternal black holes in Anti-de Sitter (AdS) space. Saad-Shenker-Stanford suggested that the gravity partition function of the double-cone wormhole one obtains by periodically identifying real time $t\sim t+T$ in the eternal AdS black hole computes the spectral form factor \cite{saad2018semiclassical}. However, the real-time identification induces a singularity at the bifurcate Killing horizon. In \cite{saad2018semiclassical}, the authors suggested regulating this trace using an $i\ep$-prescription. This regulator, in essence, coarse-grains the modular Hamiltonian using a non-self-adjoint deformation $K_W \to K_W +i \epsilon H$, where $H$ is the global Hamiltonian. They argued that the spectrum of this modified modular Hamiltonian captures modular quasi-normal modes \cite{chen2024comments}. We discuss the connection between this bulk regulator and our operator $\tilde{H}_R(z)$ in Section \ref{sec:regulate}.



In Section \ref{sec:Rindler}, we study the characteristic function of the inclusion of wedges in QFT and show that our proposed Hamiltonian in (\ref{proposal}) matches the global Hamiltonian.
Section \ref{sec:CFT} derives the local reconstruction of the global Hamiltonian of the Lorentzian cylinder in vacuum CFTs from the characteristic function of the inclusion of ball-shaped regions. In Section \ref{sec:QFT}, we extend our proposal to QFT and study it in perturbation theory. 

\section{Regulating Modular Hamiltonian}\label{sec:regulate}

To study the chaotic properties of modular evolution, we would like to compute and regulate $\tr(\Delta^{iT})=\tr(\rho^{iT})\tr(\rho^{-iT})$. To build intuition, let us start by regulating $\tr(\Delta^\alpha)$ with some real $\alpha\in(0,1/2)$. This cuts off the large modular-energy behavior of $K$, but in our cases of interest, our local algebras are type III$_1$, and as we discussed before, the spectrum of $K$ is the entire real line. The modular Hamiltonian has no normalizable eigenvectors, and the modular spectral form factor diverges. Euclidean path-integrals help us understand the physical origin of this divergence. For Rindler wedges $W$, the operator $\Delta^\alpha$ is a Euclidean rotation by angle $\alpha$ around the entangling surface. The trace $\tr(\Delta^\alpha)$ diverges because there are matrix elements $\braket{a\Omega|\Delta^\alpha |a\Omega}$ that become arbitrarily large \footnote{The operator $a$ have support on large negative modular frequencies. They can be regulated by filters of the form (\ref{filter}) that suppress large $\omega$.}.
Our intuition from the Euclidean path integral tells us that we can regulate this divergence by separating the operator insertions, by acting on the bra and ket by Euclidean translations in space or time in opposite directions:
\begin{eqnarray}
&&\ket{a\Omega}\to \ket{e^{-\ep P_1}a\Omega},\qquad \bra{a\Omega}\to \bra{e^{\ep P_1} a\Omega}\nn\\
&&\ket{a\Omega}\to \ket{e^{-\ep H}a\Omega},\qquad \bra{a\Omega}\to \bra{e^{\ep H} a\Omega}\ .
\end{eqnarray}
The first case is equivalent to deforming the modular Hamiltonian by
\begin{eqnarray}
        &&\Delta^\alpha\to e^{\ep P_1}\Delta^\alpha e^{-\ep P_1}=e^{-\alpha (K+\ep[P_1,K])}+O(\ep^2)=e^{-\alpha(K-i\ep H)}+O(\ep^2)
\end{eqnarray}
where we have used the Poincaré algebra\footnote{We are using the  convention$(-1,+1,\cdots, +1)$ for the signature of the Minkowski spacetime.}
\begin{eqnarray}
     [K_1,P_1]=i H,\qquad [K_1,H]=i P_1\ .
\end{eqnarray}
This is precisely the analytic continuation suggested by Saad-Shenker-Stanford in \cite{saad2018semiclassical}. In \cite{chen2024comments}, it was argued that the regulated spectral form factor computes the spectrum of quasi-normal modes:
\begin{eqnarray}
    \tr(e^{-i(K-i\ep H)T})=\prod_{\omega_{QNM}}\frac{1+e^{-i\omega_f T}}{1-e^{-i\omega_b T}}
\end{eqnarray}
where $\omega_b$ and $\omega_f$ are the bosonic and fermionic quasi-normal modes, respectively.

In comparing to the double cone regulator above, our regulators that are based on the inclusion of algebras, regulate correlators by separating them in Euclidean time (instead of space), as in the case of BW-nuclearity:
\begin{eqnarray}
    f_{aa}^{\ep}(\alpha)&&=\braket{\Delta^{1/4}a\Omega|e^{\ep H}\Delta^{\alpha} e^{-\ep H}\Delta^{1/4}a\Omega}\nn\\
    &&=\braket{\Delta^{1/4}e^{-i\ep P} a\Omega|\Delta^\alpha\Delta^{1/4}e^{i\ep P}a\Omega}
\end{eqnarray}
where we have used the relations 
\begin{eqnarray}\label{PoincareQEC}
    \Delta^{1/4}e^{-i\ep P}=e^{-\ep H}\Delta^{1/4}\nn\\
    \Delta^{1/4}e^{-i\ep H}=e^{-\ep P}\Delta^{1/4}\ .
\end{eqnarray}
We can write the first relation above as
\begin{eqnarray}
   &&T_\ep=\Delta^{1/4}_W\Delta_{W(\ep)}^{-1/4}= e^{-\ep (H-i P)}\nn\\
   &&\mA_{W(\ep)}=e^{-i\ep P}\mA_W e^{i\ep P}\ .
\end{eqnarray}
Here, $\mA_{W(\ep)}\subset \mA_W$ is a split inclusion of Rindler wedges. Note that in this case, the global Hamiltonian is simply
\begin{eqnarray}
    H=\frac{-1}{2\ep}\log\lb T_\ep T_\ep^\dagger\rb=-\frac{1}{2}\p_\ep\lb T_\ep T_\ep^\dagger\rb_{\ep=0}\ .
\end{eqnarray}
The second relation in (\ref{PoincareQEC}) suggests that the double cone regulator of \cite{chen2024comments} that encodes quasi-normal modes, involves a translation of the region in time. For the algebra of wedges, this does not result in inclusions. However, for the algebra of forward light-cone, time translation gives half-sided inclusions. 

In the discussion above, we assumed the existence of a Poincaré algebra of $K$, $P$, and $H$. A natural question to ask is whether this can be generalized to regulate the modular Hamiltonian of type III$_1$ algebras of a more general quantum system. Here, the key intuition comes from non-commutative ergodic theory and quantum Anosov systems. Quantum Anosov systems in modular dynamics admit a pair of future and past modular subalgebras \cite{ouseph2024local}. It was shown in \cite{ouseph2024local} that in such systems, by considering the inclusions $A(s)\subset A(0)$ in the limit $s\to 0$, we can find an approximate local two-dimensional Poincare algebra required for our discussion above. 
This includes examples of General Free Fields (GFF) dual to the double cone geometry studied in \cite{chen2024comments}.

\section{QFT Hamiltonian from Half-space}\label{sec:Rindler}

Let us start with the vacuum state of QFT in Minkowski spacetime $\mathbb{R}^{d,1}$ with flat metric $ds^2=dx^+dx^-+dx^idx^i$ and null coordinates $x^\pm=x^1\pm x^0$. A pair $(z^+,z^-)$ defines a right wedge and a left wedge that is the causal complement
\begin{eqnarray}
 &&W(z^+,z^-)=(x^+>z^+,x^->z^-,x^i)\nn\\
 &&W'(z^+,z^-)=(x^+<z^+,x^-<z^-,x^i)\nn\\
 &&W(z_0)\equiv W(z_0,z_0)\ .
\end{eqnarray}
In flat space, for Rindler wedges $W(x_0)$ we have
\begin{eqnarray}
    K_{W(x_0)}=2\pi\int dx_\perp\int dx\: (x-x_0)T_{00}(x)
\end{eqnarray}
It is not surprising that one can construct the global Minkowski Hamiltonian using only the data of the Rindler wedge: 
\begin{eqnarray}
    -\frac{1}{2\pi}\p_{x_0}K_{W(x_0)}=H\ .
\end{eqnarray}
Note that as we increase $x_0$, the region becomes smaller, and it follows from the monotonicity of $K_W$ that $H>0$.
\begin{theorem}\label{thmRinlder}
    For the inclusion of wedges $W(a)\subset W(0)$, for all complex $z$ with $\Im(z)\in (-1/2,0)$ we have
    \begin{eqnarray}
    &&T_a(z)\equiv \Delta_{W(0)}^{iz}\Delta_{W(a)}^{-iz}=e^{i a(\sinh(2\pi z) P_0+(\cosh(2\pi z)-1)P_1)}\nn\\
    \label{equation:keyidentity}
    &&T_a(z)T_a(-z)^{-1}=\Delta_{W(0)}^{i z}\Delta_{W(a)}^{-2i z}\Delta_{W(0)}^{iz}=e^{2i a H\sinh(2\pi z)}\label{T_a2}\\
    &&H_a(z)\equiv \frac{1}{2\sin(2\pi iz)}\p_a \log (T_a(z)T_a(-z)^{-1})=H\ .
\end{eqnarray}
\end{theorem}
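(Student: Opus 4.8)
The plan is to reduce the entire statement to the two--dimensional Poincaré algebra of $K_1$, $H=P_0$, and $P_1$, using two structural facts about the QFT vacuum. The first is the Bisognano--Wichmann theorem, which identifies the modular flow of the Rindler wedge with the boost, $\Delta_{W(0)}^{iz}=e^{-2\pi i z K_1}$ on the appropriate domain. The second is translation covariance of the vacuum net: the wedge $W(a)$ is the image of $W(0)$ under the spatial translation $U(a)=e^{-iaP_1}$, so $\mA_{W(a)}=U(a)\mA_{W(0)}U(a)^{-1}$ and hence $\Delta_{W(a)}=U(a)\Delta_{W(0)}U(a)^{-1}$. With these two inputs the characteristic function $T_a(z)$ becomes a purely group-theoretic object and the three identities follow in sequence.

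First I would rewrite the characteristic function as a group commutator. Setting $D=\Delta_{W(0)}^{iz}$ and using covariance,
\begin{eqnarray}
T_a(z)=\Delta_{W(0)}^{iz}\Delta_{W(a)}^{-iz}=D\,U(a)\,D^{-1}U(a)^{-1}=\exp\!\big(-ia\,D P_1 D^{-1}\big)\,\exp(iaP_1)\ .
\end{eqnarray}
The inner conjugation $DP_1D^{-1}$ is the adjoint action of a (complexified) boost on the translation generator. Differentiating $f(\lambda)=e^{i\lambda K_1}P_1 e^{-i\lambda K_1}$ and $g(\lambda)=e^{i\lambda K_1}H e^{-i\lambda K_1}$ and using $[K_1,P_1]=iH$, $[K_1,H]=iP_1$ gives the closed coupled system $f'=-g$, $g'=-f$, whose solution at $\lambda=-2\pi z$ yields $DP_1D^{-1}=\cosh(2\pi z)P_1+\sinh(2\pi z)H$. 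Since $[P_0,P_1]=0$, the two exponentials combine into a single translation, producing the closed form $T_a(z)=e^{ia(\sinh(2\pi z)P_0+(\cosh(2\pi z)-1)P_1)}$ (the overall sign being fixed by the orientation convention for $U(a)$), which is the first identity.

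The remaining two identities are then elementary. The middle identity $T_a(z)T_a(-z)^{-1}=\Delta_{W(0)}^{iz}\Delta_{W(a)}^{-2iz}\Delta_{W(0)}^{iz}$ follows by direct substitution of the definition of $T_a$. Multiplying the closed forms, I would observe that under $z\to-z$ the $\sinh$ (odd) term flips sign while the $\cosh-1$ (even) term is unchanged, so in $T_a(z)T_a(-z)^{-1}$ the $P_1$-contributions cancel and the $H$-contributions add, leaving $e^{2iaH\sinh(2\pi z)}$. Taking the logarithm, differentiating in $a$, and using $\sin(2\pi i z)=i\sinh(2\pi z)$ collapses the prefactor and gives $H_a(z)=H$, manifestly independent of $z$.

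The main obstacle is analytic rather than algebraic: for complex $z$ with $\Im(z)\in(-1/2,0)$ the operators $\Delta^{iz}$ are unbounded, so the commutator manipulation, the adjoint-action formula at complex rapidity $\lambda=-2\pi z$, and the recombination of the exponentials must all be justified on a common dense domain rather than formally. The natural way to control this is to invoke the theory of characteristic functions of the inclusion $W(a)\subset W(0)$ (Appendix~\ref{subsection:characteristicfunctions}) together with Borchers' commutation theorem for modular flow and positive-energy translations: $T_a(z)$ is analytic and bounded precisely on the strip $\Im(z)\in(-1/2,0)$, and this strip condition is exactly what keeps $e^{2iaH\sinh(2\pi z)}$ well-defined given $H\ge 0$. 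I would therefore first establish all three identities for real $z$, where every operator is unitary, and then extend to the strip by analytic continuation, using boundedness on the strip to guarantee uniqueness of the continuation.
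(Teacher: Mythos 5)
Your proof is correct and follows essentially the same route as the paper, which states that the result "follows from direct calculation and the Poincaré algebra" (citing Buchholz--D'Antoni--Longo) without writing out the details; your use of Bisognano--Wichmann plus translation covariance to reduce $T_a(z)$ to the adjoint action of the boost on $P_1$, followed by the parity argument in $z$, is precisely that direct calculation, and your remarks on analytic continuation from real $z$ to the strip supply the standard justification the paper leaves implicit.
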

The proof follows from direct calculation and the Poincaré algebra; see \cite{buchholz2007nuclearity}. It follows that 
\begin{eqnarray}
    &&T_a(-i/4)=\Delta_{W(0)}^{1/4}\Delta_{W(a)}^{-1/4}=e^{-a P_0}e^{-i a P_1}\nn\\
    &&T_a(-i/2)=J_BJ_A=e^{-2ia P_1}
\end{eqnarray}
where $J_B$ and $J_A$ are the modular conjugations of the corresponding algebras.

Euclidean path-integrals offer an intuitive and geometric way of seeing why Theorem \ref{thmRinlder} holds. The operator $\Delta_{W(a)}^\alpha$ for real $\alpha$ is rotation around $x^1=a$ by angle $2\pi \alpha$.
It is simply a geometric observation that the successive application of two rotations by $\pi/2$ in opposite directions around the two entangling surfaces, i.e. $T=\Delta_{W(0)}^{1/4}\Delta_{W(a)}^{-1/4}$ results in a translation in Euclidean time by $a$; see Figure \ref{fig1}. Repeating the argument with a general angle $\alpha$, we obtain the expression in (\ref{T_a2}). 

The length scale $a$ parameterizes the relative size of the inclusion. The small $a$ limit of $T(z)T(-z)^{-1}$ for imaginary $z$ corresponds to the high-temperature Gibbs state. As we send $a\to 0$, we obtain the identity operator, whose one-norm is infinite. Therefore, a small $a$ can be viewed as a UV regulator of the BW form. However, note that in the case of flat space, the finite temperature partition function is infrared divergent due to the infinite volume of spacetime, and the extensivity of free energy. Another important observation is that only the combination $a\sin\alpha$ appears in the characteristic function. Therefore, an alternate high-temperature limit is to keep $a$ fixed and send $z=i\alpha\to0$. We will come back to this in Section \ref{sec:CFT}.

 \begin{figure}[t]
     \centering
     \includegraphics[width=0.8\linewidth]{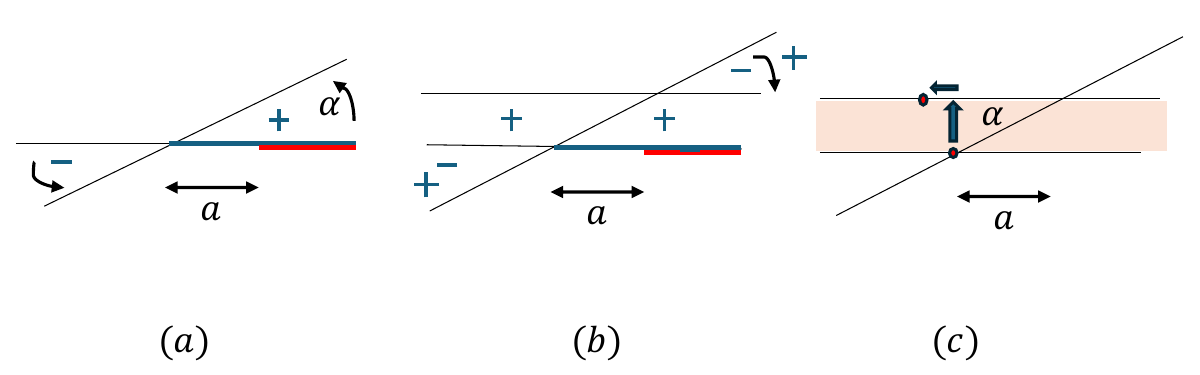}
     \caption{\small{(a) Euclidean modular flow $\Delta_W^\alpha$ is a rotation by angle $\alpha$ on the Euclidean plane around $x=0$. (b) $\Delta_{W(a)}^{-\alpha}$ rotates back by the same angle, but around $x=a$. (c) The new result is a Euclidean strip of width $a\sin(2\pi\alpha)$. (c) The end point of $W(0)$ (red-dot) has moved in the $x^1$ direction by $a\sin(2\pi\alpha)$ in the Euclidean time direction combined with the translation by $a(1-\cos(2\pi \alpha))$.}}
     \label{fig1}
 \end{figure}

\section{CFT Hamiltonian from a ball}\label{sec:CFT}

In vacuum CFT, once again, we can write down the Minkowski Hamiltonian in terms of the local data of a ball-shaped region. The intuitive way to see it is this:
\begin{eqnarray}
    K_{B(R)}=2\pi\int dr d\Omega_{d-2}\: \frac{(R^2-r^2)}{2R}T_{00}(x)
\end{eqnarray}
Therefore,
\begin{eqnarray}\label{HMinkCFT}
    R\p_R K_{B(R)}=2\pi \int \frac{R^2+r^2}{2R}T_{00}(x)
\end{eqnarray}
Adding the two terms gives the Hamiltonian in Minkowski space:
\begin{eqnarray}
    (R\p_R +1)K_{B(R)}=2\pi R \int T_{00}(x)
\end{eqnarray}
Note that this operator has a continuous spectrum. 

In CFTs, we extend the correlators to the Lorentzian cylinder via the embedding map $\tan(T\pm \theta/2)=x^0\pm r$ which maps $x^0=0$ surface to $T=0$ with $\tan(\theta/2)=r$ of the cylinder  \cite{luescher1975global}; see figure \ref{fig2} . Consider the vacuum state of a CFT in $\mathbb{R}^{d-1,1}$ conformally embedded inside the Lorentzian cylinder $\mathbb{S}^{d-1}\times \mathbb{R}_T$ with $T$ the global time on the cylinder\footnote{The universal cover of the conformal group $\widetilde{SO(d+1,2)}$ acts on the Lorentzian cylinder.}. The global Hamiltonian of the cylinder whose discrete spectrum contains the data of the conformal primaries is \footnote{See Appendix \ref{app:CFT} for a detailed discussion of our notation.}
\begin{eqnarray}\label{globalHam}
J_{(-1)0}&&=\int_{0}^\pi d\theta d\Omega_{d-2}\: T_{00}(\theta)\nn\\
&&=\frac{1}{2}\int dr d\Omega_{d-2}\:(1+r^2) T_{00}(x)
\end{eqnarray}
To isolate this Hamiltonian, we need to compute 
\begin{eqnarray}
    J_{(-1)0}=\lb \frac{R^2-1}{2R}-\frac{R^2+1}{2}\p_R\rb K_{B(R)}\ .
\end{eqnarray}
Theorem \ref{theoremCFT} makes the argument above rigorous.

 \begin{figure}[t]
     \centering
     \includegraphics[width=0.8\linewidth]{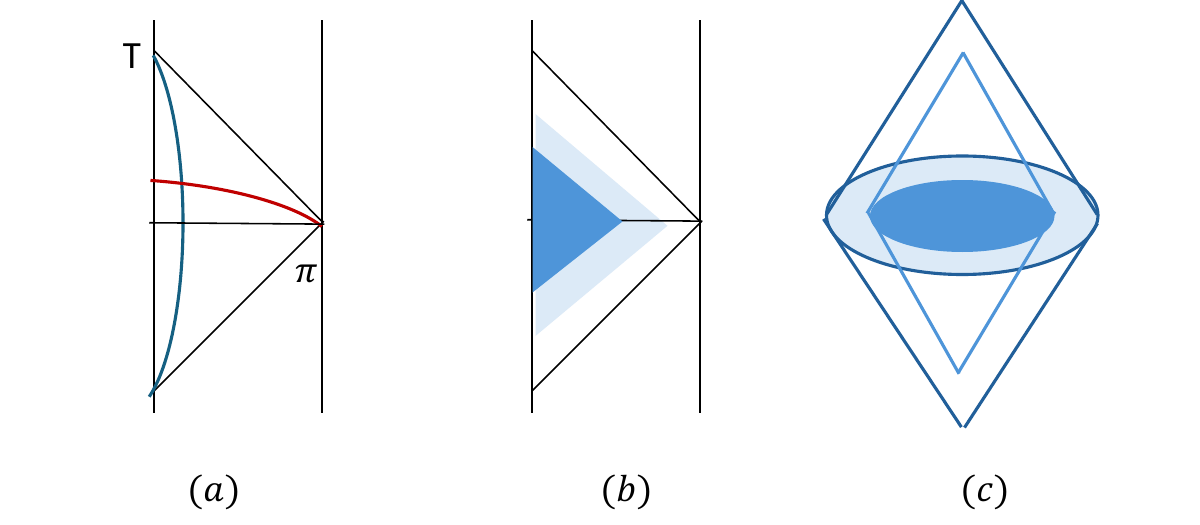}
     \caption{\small{(a) Extension of CFT from Minkowski space to Lorentzian cylinder. The blue and red lines are surfaces of constant $x^0$ and $r$ in Minkowski coordinates, and the range of $\theta\in(0,\pi)$. (b) The inclusion of concentric balls $B(e^{-2\pi (s+ u)})\subset B(e^{-2\pi u})$ with $s>0$ on Lorentzian cylinder. (c) The inclusion of concentric balls in Minkowski space.}}
     \label{fig2}
 \end{figure}

Consider the inclusion of ball-shaped regions $B_{u+s}\subset B_{u}$ centered at the $x^\mu=0$ where the radius of $B_{u+s}$ is $e^{-2\pi (s+u)}$ and $u\in \mathbb{R}$ and $s>0$. 
\begin{theorem}\label{theoremCFT}
    The characteristic function of the inclusion $B_{u+s}\subset B_u$ for $\Im (z)\in (-1/2,0)$ is
    \begin{eqnarray}
        &&\log T_{u,s}(z)=-(\beta_BK_B+\beta_H K_H+\beta_F K_F)=-\beta \lb\frac{1}{2\pi}\p_uK_{B_{u+s/2}} +\tanh(\pi z)K_F\rb\nn\\
        &&\beta_F=\beta\cosh(\pi s)\tanh(\pi z),\qquad\beta_B=\beta \sinh(\pi (2u+s)),\qquad 
        \beta_H=\beta\cosh(\pi (2u+s)),\nn\\
        &&\beta=-\text{sign}(z)\frac{i\cos^{-1}(A)  \cosh{(\pi z)}}{\pi\sqrt{1-\sinh^2(\pi z)\sinh^2(\pi s)}}= -\frac{\cos^{-1}(A)}{\sqrt{1-A^2}} \frac{2i}{\pi}\sinh(\pi s)\sinh(\pi z)\cosh(\pi z) \nn\\
        &&A=1-2\sinh^2(\pi s)\sinh^2(\pi z)
    \end{eqnarray}
where $B=B_0$ is a unit ball, $e^{-2\pi s}=|B_{u+s}|/|B_u|$ is the relative size of the inclusion and $|B|$ is the linear size of $B$, $K_H = 2\pi H$, and we have used 
\begin{eqnarray}
        &&K_{B_u}=\cosh(2\pi u)K_B+\sinh(2\pi u) K_H
\end{eqnarray}
and $H=J_{0(-1)}$ is the global Hamiltonian of the cylinder. The boundary value at $z=-i/2$ is
    \begin{eqnarray}
        \log T_{u,s}(-i/2)= -2 s i K_F\ .
    \end{eqnarray}
\end{theorem}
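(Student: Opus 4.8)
My plan is to reduce the statement to a finite-dimensional computation in $SL(2,\mathbb{R})$, exactly mirroring the way Theorem~\ref{thmRinlder} follows from the Poincaré algebra, but now using the conformal $\mathfrak{sl}(2,\mathbb{R})$. The three operators $K_B$, $K_H$, $K_F$ close into $\mathfrak{sl}(2,\mathbb{R})$: $K_H=2\pi J_{(-1)0}$ is the compact generator (cylinder energy, discrete spectrum), while $K_B$ and $K_F$ are the two non-compact generators, with $K_F$ the generator of the $u$-flow that rescales the ball. The only input I need is the given relation $K_{B_u}=\cosh(2\pi u)K_B+\sinh(2\pi u)K_H$, which shows that every modular Hamiltonian in the family lies in the two-dimensional subspace $\mathrm{span}(K_B,K_H)$; differentiating it and using that $K_F$ implements $u$-translation fixes the commutators $[K_F,K_B]\propto iK_H$, $[K_F,K_H]\propto iK_B$, $[K_B,K_H]\propto iK_F$ up to the overall normalization $2\pi$. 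I would first record these commutation relations explicitly, since they determine all the hyperbolic/trigonometric factors in the final answer.

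Since $T_{u,s}(z)=\Delta_{B_u}^{iz}\Delta_{B_{u+s}}^{-iz}=e^{-izK_{B_u}}e^{izK_{B_{u+s}}}$ is a product of two one-parameter subgroup elements of $SL(2,\mathbb{R})$, it is itself a group element, and the identity to be proved is a pure Baker--Campbell--Hausdorff statement inside $\mathfrak{sl}(2,\mathbb{R})$. I would therefore pass to the faithful $2\times 2$ representation $\pi$, sending $K_B,K_H,K_F$ to explicit traceless matrices $M_B,M_H,M_F$, and set $M_{B_u}=\cosh(2\pi u)M_B+\sinh(2\pi u)M_H$. Each factor $e^{\mp izM_{B_v}}$ is evaluated in closed form by Cayley--Hamilton, the two matrices are multiplied, and the product is decomposed as $\exp\!\big(-(\beta_BM_B+\beta_HM_H+\beta_FM_F)\big)$. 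Concretely, the trace of the product gives $\tfrac12\tr\,\pi(T_{u,s})=A=1-2\sinh^2(\pi s)\sinh^2(\pi z)$, which fixes the rotation angle $\cos^{-1}(A)$; dividing the traceless part by $\sqrt{1-A^2}=\sin(\cos^{-1}A)$ and multiplying by $\cos^{-1}(A)$ produces the overall factor $\beta$, while reading off the $M_B,M_H,M_F$ components reproduces $\beta_B=\beta\sinh(\pi(2u+s))$, $\beta_H=\beta\cosh(\pi(2u+s))$ and $\beta_F$. The reorganization in terms of $\tfrac{1}{2\pi}\p_uK_{B_{u+s/2}}$ is then automatic from $K_{B_{u+s/2}}=\cosh(\pi(2u+s))K_B+\sinh(\pi(2u+s))K_H$.

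The step I expect to be the main obstacle is the branch bookkeeping of the matrix logarithm. Because $SL(2,\mathbb{R})$ is not simply connected, $\log$ of the product is multivalued, and the correct sheet is the one singled out by analyticity of $z\mapsto\Delta^{iz}$ in the Tomita--Takesaki strip $\Im(z)\in(-1/2,0)$ together with continuity as $s\to0$ (where $T_{u,s}\to\mathbbm{1}$); this is precisely what forces the factor $-\mathrm{sign}(z)$ and the principal branch of $\cos^{-1}$ in $\beta$. I would fix the branch by analytically continuing from the regime where $\sinh(\pi s)\sinh(\pi z)$ is small, so that $\pi(T_{u,s})$ is near the identity and $\cos^{-1}(A)$ is unambiguous, and then tracking the continuation across the strip. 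A second, more formal point is to justify that the $2\times2$ identity lifts to an operator identity on the type~III$_1$ Hilbert space: this holds because the modular unitaries $\Delta_{B_v}^{iz}$ are, by Tomita--Takesaki theory, strongly continuous one-parameter groups that, by conformal covariance, integrate the $\mathfrak{sl}(2,\mathbb{R})$ representation to a unitary representation of the corresponding subgroup of the universal cover of the conformal group, so any relation valid in the faithful finite-dimensional representation holds for the unitaries.

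Finally, I would obtain the boundary value at $z=-i/2$ by taking the limit in the closed-form answer. There $\sinh^2(\pi z)\to-1$, so $A\to1+2\sinh^2(\pi s)=\cosh(2\pi s)$ and $\cos^{-1}(A)\to 2\pi s\,i$, while $\cosh(\pi z)\to0$ makes $\beta\to0$ and $\beta_B,\beta_H\to0$. The coefficient $\beta_F=\beta\cosh(\pi s)\tanh(\pi z)$ nonetheless stays finite because the zero of $\cosh(\pi z)$ in $\beta$ cancels the pole of $\tanh(\pi z)$; a short limiting computation gives $\beta_F\to2s\,i$, so that $\log T_{u,s}(-i/2)=-2s\,iK_F$, consistent with $T_{u,s}(-i/2)=J_{B_u}J_{B_{u+s}}$ being the non-compact (boost-like) product of two modular conjugations.
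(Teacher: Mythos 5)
Your proposal is correct and follows essentially the same route as the paper: both reduce the claim to an explicit $2\times 2$ computation in the $\mathfrak{sl}(2,\mathbb{R})$ subalgebra spanned by $K_B,K_H,K_F$, extract the angle from $\tfrac12\mathrm{tr}=A$ and the coefficients from the (suitably normalized) traceless part, fix the branch of the matrix logarithm by continuity from small $s$, and handle general $u$ by the dilatation covariance $K_{B_u}=\Delta_F^{-iu}K_B\Delta_F^{iu}$. The only cosmetic difference is that the paper obtains the $z=-i/2$ boundary value directly from $\Delta_B^{1/2}\Delta_{B_s}^{-1/2}=e^{-4\pi i s J_{(-1)(d+1)}}$ rather than as a limit of the closed-form answer, and your limit computation agrees with it.
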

See Appendix \ref{app:CFT} for a proof. Note that for real $z$ under the transformation $z\to -z$, $\beta$ gets a minus sign, but the coefficient of $K_F$ is invariant. We have the following corollary
\begin{corollary}
For general positive values of $s$ and real $z$ with $\sinh^2(\pi s)\sinh^2(2\pi z) < 1$, we have
\begin{align}
    &\log T_{u,s}(z)T_{u,s}(-z)^{-1} \nn\\
    &= i\kappa(\sinh(\pi s) \cosh(2\pi z) K_{B_u} + \cosh(\pi s) \tilde{H}_u),\\
    &\kappa = \frac{\log(C+i\sqrt{1-C^2})}{2\pi i\sqrt{1-C^2}} 4\sinh(\pi s) \sinh(2\pi z),\\
    &C = 1 -2 \sinh^2(\pi s)\sinh^2(2\pi z)\ .
\end{align}    
\end{corollary}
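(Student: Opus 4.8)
The plan is to reduce everything to a computation in the three-dimensional Lie algebra spanned by $K_B$, $K_H$, $K_F$, which closes into $\mathfrak{so}(2,1)\cong\mathfrak{sl}(2,\mathbb{R})$ (two non-compact generators and one compact generator), and then carry out a single $2\times 2$ matrix calculation. Concretely, I would first use Theorem \ref{theoremCFT} to write $\log T_{u,s}(z)=-(\beta_B K_B+\beta_H K_H+\beta_F K_F)$, and then invoke the remark following the theorem — that under $z\to -z$ the prefactor $\beta$ flips sign while the coefficient of $K_F$ is invariant — to obtain $\log T_{u,s}(-z)^{-1}=-\beta_B K_B-\beta_H K_H+\beta_F K_F$. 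Writing $Z=-\beta_B K_B-\beta_H K_H$ and $W=\beta_F K_F$, the object to analyze is $T_{u,s}(z)T_{u,s}(-z)^{-1}=e^{Z-W}e^{Z+W}$, and the goal is its single-exponential logarithm.

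Second, I would fix a faithful $2\times2$ representation of $\{K_B,K_H,K_F\}$. The structure constants are pinned down by the first result of the introduction: $\Delta_F^{iu}$ implements the radial $u$-flow and acts as a hyperbolic rotation $K_B\mapsto\cosh(2\pi u)K_B+\sinh(2\pi u)K_H=K_{B_u}$ with $\tilde{H}_u=\p_u K_{B_u}$, so $K_F$ mixes $K_B\leftrightarrow K_H$ while $[K_B,K_H]\propto K_F$ closes the algebra. Because exponentials and the entire BCH series depend only on the structure constants, the operator identity holds if and only if it holds in the faithful $2\times2$ representation, where every algebra element $M$ obeys Cayley--Hamilton, $e^{M}=\cosh\mu\,\mathbbm{1}+\mu^{-1}\sinh\mu\,M$ with $\mu=\sqrt{-\det M}$ (and its trigonometric version when $\det M>0$).

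Third, I would compute $e^{Z-W}$ and $e^{Z+W}$ from this formula, multiply the two $2\times2$ matrices, and take the matrix logarithm of the unit-determinant product $P$ via $\operatorname{tr}P=2\cos\phi$ and $\log P=\tfrac{\phi}{\sin\phi}\bigl(P-\tfrac12\operatorname{tr}P\,\mathbbm{1}\bigr)$, up to the usual branch. Reading off the coefficients of $K_B,K_H,K_F$ then yields the claim, and two simplifications must fall out here: (i) the $K_F$ component of $\log P$ must cancel, leaving a combination of $K_B,K_H$ only, which reorganizes into $\sinh(\pi s)\cosh(2\pi z)K_{B_u}+\cosh(\pi s)\tilde{H}_u$ using $K_{B_u}=\cosh(2\pi u)K_B+\sinh(2\pi u)K_H$ and $\tilde{H}_u=\p_u K_{B_u}$; and (ii) the scalar prefactor, which for a single $T$ involved $A=1-2\sinh^2(\pi s)\sinh^2(\pi z)$, must upgrade to $\kappa$ built from $C=1-2\sinh^2(\pi s)\sinh^2(2\pi z)$ — the half-argument $\pi z$ being promoted to $2\pi z$ is precisely the fingerprint of combining $T(z)$ with $T(-z)^{-1}$, i.e. of adding the two modular rotation angles. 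The identity $\cos^{-1}(C)=\tfrac1i\log\!\bigl(C+i\sqrt{1-C^2}\bigr)$ then matches the stated form of $\kappa$.

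The main obstacle is the algebra in the third step: the cancellation of the $K_F$ component and the collapse of the trace/determinant data into the clean pair $(K_{B_u},\tilde{H}_u)$ with prefactor controlled by $C$. The $K_F$ cancellation is invisible to a truncated BCH expansion, since the triple commutator $[Z,[Z,W]]$ does feed back into $K_F$; it is genuinely a closed-form (symmetry) statement. I would therefore either read it off directly from the $2\times2$ product, or exhibit the involution of $\mathfrak{sl}(2,\mathbb{R})$ under which $e^{Z-W}e^{Z+W}$ is stable and whose fixed-point set is $\mathrm{span}(K_B,K_H)$. Useful sanity checks are the small-$s$ limit, where the leading order reproduces $\tilde{H}_u=\p_u K_u$ and hence the first result, and the boundary value, which should be consistent with $\log T_{u,s}(-i/2)=-2si\,K_F$ from Theorem \ref{theoremCFT} under analytic continuation.
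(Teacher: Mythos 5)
Your proposal is correct and follows essentially the same route as the paper: Appendix B carries out exactly this computation in the faithful $2\times 2$ representation of the $\mathfrak{sl}(2,\mathbb{R})$ subalgebra spanned by $K_B$, $K_H$, $K_F$, forms the product matrix for $T_{0,s}(z)T_{0,s}(-z)^{-1}=\Delta_B^{iz}\Delta_{B_s}^{-2iz}\Delta_B^{iz}$, takes the matrix logarithm on the principal branch (whence the restriction $\sinh^2(\pi s)\sinh^2(2\pi z)<1$, i.e.\ $|C|<1$), observes the vanishing of the $K_F$ component, and then conjugates by $\Delta_F^{-iu}$ to send $K_B\to K_{B_u}$ and $K_H\to\tilde{H}_u$. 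The only cosmetic difference is that the paper multiplies the explicit group-element matrices directly rather than re-exponentiating the single-$T$ logarithms $Z\mp W$ via Cayley--Hamilton, but this lands on the same matrix product.
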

We are particularly interested in the limit $s\ll 1$ and $s e^{\pi \Re z}\ll 1$. In this limit, we define 
\begin{eqnarray}
    \dot{\beta}=(\p_s \beta)_{s=0}=-i\sinh(2\pi z)
\end{eqnarray}
and $T_{u,s}(z)=1+i M_{u,s}(z)$ so that 
\begin{eqnarray}
    &&M_{u,s}(z)=2 s \sinh(\pi z)\lb \cosh(\pi z)\frac{\p_u K_{B_u}}{2\pi}+\sinh(\pi z)K_F\rb+O(s^2),\nn\\
    &&\frac{1}{2\pi}\p_u K_{B_u}=\sinh(2\pi u)K_B+\cosh(2\pi u) K_H\ .
\end{eqnarray}
In this limit, we find
\begin{eqnarray}
    T(z)T(-z)^{-1}&&\simeq 1+i(M_{u,s}(z)-M_{u,s}(-z))\nn\\
    &&=1 + i 2s \sinh(2\pi z)\frac{\p_u K_{B_u}}{2\pi}=1-\frac{s \dot{\beta}}{\pi}\p_u K_{B_u}\nn\\
    \tilde{H}_u(z)&&=\frac{1}{2\dot{\beta}}\p_s\log\lb T_{u,s}(z)T_{u,s}(-z)^{-1}\rb_{s=0}\nn\\
    &&=\frac{{-}1}{2\pi}\p_u K_{B_u}=R \p_R K_{B(R)}={-}(\sinh(2\pi u)K_B+\cosh(2\pi u)K_H)\nn\ .
\end{eqnarray}
Note that, importantly for us, the operator $\tilde{H}_z$ is independent of $z$. 
Furthermore, the combination $s \sin(2\pi i z)$ appears in the expansion of $T(z)T(-z)^{-1}$. This means that we can compensate for a change in $s$ by a change in $z$. This is analogous to what we encountered in the inclusion of wedges in Section \ref{sec:Rindler}. 

We can construct both the global Hamiltonian $J_{(-1)0}$ and the generator of rotations $J_{0(d+1)}$ on the Lorentzian cylinder in terms of the local modular Hamiltonian
       \begin{eqnarray}
    2\pi \begin{pmatrix}
    J_{(-1)0}\\
        J_{0(d+1)}
    \end{pmatrix}
    &&={(-1)}\begin{pmatrix}
        &\cosh(2\pi u) & \sinh(2\pi u)  \\
         &\sinh(2\pi u)& \cosh(2\pi u)
    \end{pmatrix}
    \begin{pmatrix}
        & \tilde{H}_u \\
        & -K_{u}
    \end{pmatrix}\nn\\
    &&=\frac{{+}1}{2R}\begin{pmatrix}
        &1+R^2 & 1-R^2  \\
         &1-R^2& 1+R^2
    \end{pmatrix}
    \begin{pmatrix}
        & R\p_R K_{B(R)}\\
        & K_{B(R)}
    \end{pmatrix}\ .
\end{eqnarray}
Defining charges that correspond to simultaneous time evolution and rotation in $\theta$ we find
\begin{eqnarray}
    2\pi(J_{(-1)0}\pm J_{0(d+1)})={-}e^{\pm 2\pi u}\lb\tilde{H}_u\mp K_u \rb={+}R^{\mp 1}\lb R\p_R\pm 1\rb K_{B(R)}\ .
\end{eqnarray}
This is a result of the pair of modular future and past algebras in vacuum CFT  (positive and negative half-sided modular inclusions) \cite{ouseph2024local}.
In the limit of very large and very small regions, we find
\begin{eqnarray}
    &&2\pi H_{R\ll 1}\sim \frac{{+}1}{2R}(R\p_R+1)K_{B(R)}\nn\\
    &&2\pi H_{1\ll R}\sim \frac{{-}R}{2}(R\p_R-1)K_{B(R)}\label{Minkwoski}\ .
\end{eqnarray}
The first expression matches (\ref{HMinkCFT}) up to an extra factor of $1/2$ which comes from (\ref{globalHam}).


Our result of Theorem \ref{theoremCFT} solely relies on the relations between the modular Hamiltonian and conformal generators on the Lorentzian cylinder. This relation remains the same on a Lorentzian cylinder that is conformally flat at every point \cite{frob2023modular}. Therefore, our theorem extends our theorem to conformally flat spacetimes. 

\section{Local approximation to global QFT Hamiltonian}\label{sec:QFT}

Here, we use our CFT results of the previous section to propose local approximations to the global Minkowski Hamiltonian in QFT or excited states of the CFT. In particular, our proposal is the following generalization of \ref{Minkwoski} as local approximations of the global Minkowski Hamiltonian 
\begin{eqnarray}
    2\pi H_z(R)=\frac{1}{R}\lb \mathcal{E}_z(R\p_R K_{B(R)})+K_B(R)\rb
\end{eqnarray}
where we further apply a coarse-graining map $\mathcal{E}_z$ to $R\p_R K_{B(R)}$. In the case of vacuum CFT, this coarse-graining becomes the trivial identity map, and we recover (\ref{Minkwoski}). Our proposal follows from a direct calculation of (\ref{proposal}) in QFT:
\begin{eqnarray}
    \tilde{H}_R(z)&&=\frac{\pi}{\sin(2\pi i z)}\Delta_{B(R)}^{i z}R\p_R\lb \Delta_{B(R)}^{-2iz} \rb \Delta_{B(R)}^{iz}=\mathcal{E}_z(R\p_R K_{B(R)})\ .
\end{eqnarray}
\begin{theorem}\label{theoremQFT}
Consider the net of inclusion of ball-shaped regions $B(e^{-2\pi s}R)\subset B(R)$ with $s>0$ and $K(s)$ their corresponding modular Hamiltonians on a Hilbert space and the charateristic function $T_s(z)=\Delta^{iz}_{B(R)}\Delta^{-iz}_{B(e^{-2\pi s}R)}$ with $\Im(z)\in (-1/2,0)$. Then,
    \begin{eqnarray}
        \tilde{H}(z) &&\equiv \frac{{-1}}{2\sin(2\pi iz)}\p_s\log T_s(z)T_s(-z)^{-1}\Big|_{s=0}=\mathcal{E}_z(R\p_R K_{B(R)})\nn\\
        \mathcal{E}_z(R\p_R K_{B(R)})&&=\frac{\pi}{2\sinh(2\pi z)}\int_{-z}^{z} dv \, e^{i v \ad_{K(0)}} R\p_R K_{B(R)}\nn\\
        &&=\frac{2\pi \sin(z \ad_{K(0)})}{\sinh(2\pi z)\ad_{K(0)}} R\p_R K_{B(R)}\nn\\
        &&=\frac{{-}\pi}{\sinh(2\pi z)}\{\Delta_{B(R)}^{iz},R\p_R\Delta_{B(R)}^{-iz}\}\ .
    \end{eqnarray}
    where the adjoint action is defined as $\ad_A B = [A, B]$. In the limit of $z\to 0$ we have $\mathcal{E}_0(X)=\frac{1}{2\pi}X$.
\end{theorem}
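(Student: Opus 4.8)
The plan is to differentiate the symmetrized characteristic function
\[
T_s(z)T_s(-z)^{-1}=\Delta_{B(R)}^{iz}\,\Delta_{B(e^{-2\pi s}R)}^{-2iz}\,\Delta_{B(R)}^{iz}
\]
directly, exploiting that the two balls coincide at $s=0$. First I would note that at $s=0$ the smaller ball equals $B(R)$, so the product telescopes to the identity, $T_0(z)T_0(-z)^{-1}=\Delta_{B(R)}^{iz-2iz+iz}=1$. Because the argument of the logarithm starts at the identity and the Fr\'{e}chet derivative of $\log$ at $1$ is the identity map, the logarithm can be dropped: $\p_s\log\!\big(T_s(z)T_s(-z)^{-1}\big)\big|_{s=0}=\p_s\big(T_s(z)T_s(-z)^{-1}\big)\big|_{s=0}$.

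Next I would write $\Delta_{B(e^{-2\pi s}R)}^{-2iz}=e^{2iz K(s)}$ with $K(s)$ the modular Hamiltonian of the smaller ball, and apply Duhamel's formula $\p_s e^{2iz K(s)}|_{s=0}=\int_0^1 e^{2iz\lambda K}(2iz\dot K)e^{2iz(1-\lambda)K}\,d\lambda$, writing $K\equiv K(0)=K_{B(R)}$ and $\dot K\equiv\p_s K(s)|_{s=0}$. The crucial step is to absorb the two flanking factors $\Delta_{B(R)}^{iz}=e^{-izK}$: since they commute with $e^{2iz\lambda K}$ through the functional calculus of $K$, the exponents collapse symmetrically, and the substitution $v=z(2\lambda-1)$ turns the derivative into a manifestly self-adjoint, symmetric modular-time average
\[
\p_s\big(T_s(z)T_s(-z)^{-1}\big)\big|_{s=0}=i\!\int_{-z}^{z}\! e^{iv\,\ad_K}\dot K\,dv .
\]
The symmetric window $[-z,z]$ centered at the origin, the signature of a real low-pass filter rather than a one-sided flow, is exactly what the symmetrization $T(z)T(-z)^{-1}$ and the balanced powers $\Delta^{iz}\cdots\Delta^{iz}$ were built to produce.

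To finish I would identify $\dot K$ by the chain rule: since the radius is $R_s=e^{-2\pi s}R$, one has $\dot K=-2\pi R\,\p_R K_{B(R)}$, i.e. $R\p_R K_{B(R)}=-\tfrac{1}{2\pi}\dot K$. Inserting this together with $\sin(2\pi iz)=i\sinh(2\pi z)$ into the prefactor $-1/(2\sin(2\pi iz))$ collects all the constants and yields the coarse-graining map $\mathcal{E}_z$ applied to $R\p_R K_{B(R)}$. The remaining two forms are purely algebraic rewrites of the same integral: evaluating $\int_{-z}^{z}e^{iv\,\ad_K}\,dv=2\sin(z\,\ad_K)/\ad_K$ gives the $\sin(z\,\ad_K)/\ad_K$ expression, while reading the symmetric integral as a Duhamel representation run backwards identifies it with the anticommutator $\{\Delta_{B(R)}^{iz},\,R\p_R\Delta_{B(R)}^{-iz}\}$, whose sign and reality are fixed using $\p_R(\Delta^{iz}\Delta^{-iz})=0$. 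The $z\to 0$ limit then follows by expanding $\int_{-z}^{z}(\cdots)\,dv\simeq 2z\,R\p_R K_{B(R)}$ and $\sinh(2\pi z)\simeq 2\pi z$, giving $\mathcal{E}_0(X)=\tfrac{1}{2\pi}X$.

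The main obstacle is analytic rather than algebraic. Every manipulation involves the unbounded modular Hamiltonians of type III$_1$ algebras and the complex powers $\Delta^{iz}$ with $\Im(z)\in(-1/2,0)$, so one must justify the Duhamel differentiation, the convergence of the modular-time integral, and the interchange of the flanking analytic powers with the integral on a suitable dense domain of Tomita--Takesaki/KMS analytic vectors. Equally delicate is showing that the family $s\mapsto K(s)$ of modular Hamiltonians along the net of nested balls is differentiable at $s=0$, with $\dot K=-2\pi R\p_R K_{B(R)}$ a well-defined self-adjoint operator; in a general QFT this is precisely where the nuclearity/analyticity input enters, whereas near the UV fixed point it can be controlled perturbatively.
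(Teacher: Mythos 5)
Your proposal is correct and follows essentially the same route as the paper's proof (Lemmas \ref{Htilde} and \ref{lemmaanitocmmuteApp} in Appendix \ref{app:Ht}): dropping the logarithm because $T_s(z)T_s(-z)^{-1}=1$ at $s=0$, applying Duhamel's formula to $e^{2izK(s)}$, absorbing the flanking powers of $\Delta_{B(R)}$ via the change of variables $v=z(2\lambda-1)$ to obtain the symmetric modular-time average, identifying $\dot K=-2\pi R\p_R K_{B(R)}$, and recovering the anticommutator form by running the Duhamel/product-rule identity in reverse. Your closing remarks on the domain and differentiability issues go beyond what the paper justifies, but they do not change the argument.
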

The proof follows from Lemmas \ref{Htilde} and \ref{lemmaanitocmmuteApp} in appendix \ref{app:Ht}. The super-operator $\mathcal{E}_z(X)$, up to an overall $z$-dependent factor averages modular flown $X$, i.e. $X_z=e^{i z \ad_K}X$, over the range $(-z,z)$:
\begin{eqnarray}
&&\tilde{H}_R(z)=\mathcal{E}_z(R \p_R K_R)\nn\\
   && \mathcal{E}_z(X)=\frac{\pi}{\sinh(2\pi z)}\int_{-z}^z d\tau X_\tau\ .
\end{eqnarray}
In the limit $z\to 0$, we recover the CFT answer 
\begin{eqnarray}
    \tilde{H}_R(0)=R \p_R K_{B(R)}\ .
\end{eqnarray}
Another way to understand the operator $\tilde{H}_R(z)$ is that it captures the response of the double cone $\Delta^{iz}$ operator to a change in scale:
\begin{eqnarray}
    \tilde{H}_R(z)=\frac{{-}\pi}{\sinh(2\pi iz)}\{\Delta_R^{iz},R\p_R \Delta_R^{-iz}\}\ .
\end{eqnarray}

For a QFT in Minkowski space with a mass gap $m$, and the modular Hamiltonians of ball-shaped regions of radius $R$ with $e^{-2\pi s}R$, we propose the following set of operators as local approximations to the global dimensionless Hamiltonian of Minkowski space
\begin{eqnarray}
    H_x(z)=\mathcal{E}_z(\p_x K_{B(R)})+ x^{-1} K_{B(R)}
\end{eqnarray}
where $x=m R$. In a general QFT, $H_x(z)$ depends on $R$. For small regions $x\ll 1$, the reduced state on a small enough ball-shaped region tends to the vacuum CFT answer, and we obtain the Minkowski Hamiltonian of the UV fixed point. Below, we set up a perturbation theory around this fixed-point answer.

\subsection*{Conformal perturbation theory}

Consider a one-parameter family of vectors $\ket{\Omega(\lambda)}$ continuously connected to the vacuum of the CFT. This can be viewed either as excited states of CFT or the vacuum of conformal perturbation theory. 

\begin{lemma}\label{lemma:perturb}
Consider the net of inclusion of ball-shaped regions $B(e^{-2\pi s}R)\subset B(R)$ with $s>0$ and a one-parameter family of states $\ket{\Omega(\lambda)}$ such that the modular Hamiltonian $K(s, \lambda)$ is a two-parameter family of modular Hamiltonians on a Hilbert space, smooth in both $s$ and $\lambda$. The operator $\tilde{H}(z,\lambda)$ defined in theorem \ref{theoremQFT} admits the following expansion to first order in $\lambda$:
\begin{equation}
    \tilde{H}(z, \lambda) = \tilde{H}(z,0) + \lambda \tilde{H}'(z,0) + \mathcal{O}(\lambda^2)
\end{equation}
where 
\begin{align}
    \tilde{H}(z,0) &= \frac{{-}\sin(z \ad_K)}{\sinh(2\pi z) \ad_K} \dot{K}(0,0)\\
    \tilde{H}'(z,0) &= \frac{{-}\sin(z \ad_K)}{\sinh(2\pi z) \ad_K} (\dot{K}') {-} \frac{iz^2}{2\sinh(2\pi z)} \int_{-1}^{1} dy \, y \int_0^1 dw \, e^{izy(1-w)\ad_K} \Big[ K', \, e^{izyw\ad_K} \dot{K} \Big]\label{Hprimez0}
\end{align}
with $K = K(0,0)$, $\dot{K} = \p_s K=-2\pi R\p_R K$, $K' = \p_\lambda K$, $\dot{K}' = \p_\lambda \p_s K$. 
\end{lemma}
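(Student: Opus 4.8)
The plan is to treat $\tilde{H}(z,\lambda)$ as the closed-form expression from Theorem \ref{theoremQFT} applied to the state $\ket{\Omega(\lambda)}$, and then differentiate once in $\lambda$ at $\lambda=0$. The key preliminary observation is that the derivation of Theorem \ref{theoremQFT} never uses that the reference state is the vacuum; it only uses the modular data of the inclusion $B(e^{-2\pi s}R)\subset B(R)$. Hence for each $\lambda$ we may write $\tilde{H}(z,\lambda)=-f\lb\ad_{K(0,\lambda)}\rb\dot{K}(0,\lambda)$, where $K(0,\lambda)$ is the modular Hamiltonian of the large ball $B(R)$ in the state $\ket{\Omega(\lambda)}$, $\dot{K}(0,\lambda)=\p_s K(s,\lambda)|_{s=0}$, and $f$ is the operator function $f(\ad)=\frac{\sin(z\,\ad)}{\sinh(2\pi z)\,\ad}$. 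Since $\mathcal{E}_z=2\pi f(\ad_K)$ and $R\p_R K_{B(R)}=-\frac{1}{2\pi}\dot{K}$, this reproduces the stated zeroth-order term $\tilde{H}(z,0)=-f(\ad_K)\dot{K}$.

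First I would replace $f$ by the integral representation $f(\ad)=\frac{1}{2\sinh(2\pi z)}\int_{-z}^z dv\,e^{iv\,\ad}$, which follows from $\int_{-z}^z dv\,e^{iva}=\frac{2\sin(za)}{a}$ applied to the spectrum of $\ad$. Differentiating the product $\tilde{H}(z,\lambda)=-f(\ad_{K(0,\lambda)})\dot{K}(0,\lambda)$ by the Leibniz rule produces two contributions. The contribution from varying the argument $\dot{K}(0,\lambda)$, holding the adjoint action fixed, is immediate: it equals $-f(\ad_K)\dot{K}'=\frac{-\sin(z\,\ad_K)}{\sinh(2\pi z)\,\ad_K}\dot{K}'$, the first term of $\tilde{H}'(z,0)$.

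The substance of the proof is the second contribution, $-\big[\p_\lambda f(\ad_{K(0,\lambda)})\big]_{\lambda=0}\dot{K}$, which requires differentiating the operator exponential $e^{iv\,\ad_{K(0,\lambda)}}$. The main obstacle is that $\ad_{K(0,\lambda)}$ and its $\lambda$-derivative do not commute, so one cannot simply pull down a factor of $\ad_{K'}$. I would instead invoke the Duhamel (Dyson) formula $\p_\lambda e^{iv\,\ad_{K(0,\lambda)}}|_{\lambda=0}=iv\int_0^1 dw\,e^{ivw\,\ad_K}\,\ad_{K'}\,e^{iv(1-w)\,\ad_K}$, where linearity of $\ad$ in its argument gives $\p_\lambda\ad_{K(0,\lambda)}|_{\lambda=0}=\ad_{K'}$ with $K'=\p_\lambda K$. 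Applying this inside the $v$-integral, writing $\ad_{K'}(X)=[K',X]$, and rescaling $v=zy$ yields $\frac{-iz^2}{2\sinh(2\pi z)}\int_{-1}^1 dy\,y\int_0^1 dw\,e^{izyw\,\ad_K}\big[K',e^{izy(1-w)\,\ad_K}\dot{K}\big]$. A final change of variables $w\to 1-w$ (the measure on $[0,1]$ is invariant) interchanges the two exponential factors and brings this into exactly the form quoted in (\ref{Hprimez0}); the apparent discrepancy is only the standard ambiguity in where the Duhamel variation is inserted. Throughout, the smoothness hypotheses on $K(s,\lambda)$ are what justify interchanging $\p_\lambda$ with the $v$- and $w$-integrals and differentiating under the integral sign.
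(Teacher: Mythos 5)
Your proposal is correct and follows essentially the same route as the paper's proof of Lemma \ref{lemmaperturbApp}: split the $\lambda$-derivative via Leibniz into the $\dot{K}'$ term plus the variation of the operator function, use the integral representation $\frac{\sin(z\,\ad_K)}{z\,\ad_K}=\frac{1}{2}\int_{-1}^{1}dy\,e^{iyz\,\ad_K}$, and apply Duhamel's formula with $\delta(\ad_K)=\ad_{K'}$ to obtain the double integral. The only difference is the cosmetic placement of the $w$ versus $1-w$ exponentials in Duhamel's formula, which your change of variables correctly reconciles.
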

See Lemma \ref{lemmaperturbApp} in Appendix \ref{app:Ht} for a proof. We apply the lemma to the following two setups:
\begin{enumerate}    
 \item {\bf Excited states of CFT:} 

Instead of considering the vacuum of QFT, we can consider the excited states of CFT. More generally, for an inclusion $B(e^{-2\pi s}R)\subset B(R)$ and any pair of vectors $\Phi$ and $\Psi$ in the Hilbert space that are common cyclic and separating with respect to the inclusion (see Appendix \ref{subsection:common}), we can consider the following relative characteristic function (c.f. Appendix \ref{subsubsection:relativecharacteristicfunction})
\begin{eqnarray}
   T_{\Phi|\Psi}(z)=\Delta_{\Phi|\Psi;B(R)}^{iz}\Delta_{\Phi|\Psi;B(e^{-2\pi s}R)}^{-iz}
\end{eqnarray}
and motivated by Theorem \ref{theoremQFT}, define
\begin{eqnarray}
    \tilde{H}_{R;\Phi|\Psi}(z)=-\frac{1}{2\sinh(2\pi z)}\int_{-z}^{z}du\:e^{i u K_{\Phi|\Psi}}(\p_s K_{\Phi|\Psi})e^{-i u K_{\Phi|\Psi}}\ .
\end{eqnarray}
The simplest case to consider is the case of excited states $UU'\ket{\Omega}$ and $VV'\ket{\Psi}$ where $U$ and $V$ are unitaries supported inside $B(e^{-2\pi s}R)$ and $U'$ and $V'$ are supported outside $B(R)$. Then, we know that for both $B(R)$ and $B(e^{-2\pi s}R)$ we have \cite{lashkari2021modular}
\begin{eqnarray}
    &&\Delta_{\Phi|\Psi}=V'U\Delta_{\Omega;A/B}U^\dagger (V')^\dagger
\end{eqnarray}
which implies
\begin{eqnarray}
     &&T_{\Phi|\Psi;s}(-i\alpha)=\Delta_{\Phi|\Psi;B(R)}^{\alpha}\Delta_{\Phi|\Psi;B(e^{-2\pi s}R)}^{-\alpha}=V'U T_{\Omega|\Omega;s} U^\dagger (V')^\dagger\nn\\
     &&\tilde{H}_{R;\Phi|\Psi}(z)=V'U \tilde{H}_{R;\Omega|\Omega}U^\dagger (V')^\dagger\ .
\end{eqnarray}
To make the example more non-trivial, we consider a one-parameter family of states $\ket{\Phi(\lambda)}$ and $\ket{\Psi(\lambda)}$ created by the action of operators inside the small ball
\begin{eqnarray}
    &&\Phi_\lambda=1+\lambda \phi^{(1)}+\frac{\lambda^2}{2}\phi^{(2)}+\cdots\nn\\
    &&\Psi_\lambda=1+\mu\psi^{(1)}+\frac{\mu^2}{2}\psi^{(2)}+\cdots
\end{eqnarray}
on the vacuum. From \cite{lashkari2021modular,lashkari2023perturbation} we know that 
\begin{eqnarray}
    &&K_{\Phi|\Psi}-K_\Omega=\lambda K^{(10)}+\mu K^{(01)}+\lambda\mu K^{(11)}+O(\lambda^2)+O(\mu^2)\nn\\
    &&K^{(10)}=-\frac{1}{2}\mathcal{F}(\phi^{(1)}_{-i/2}+\phi^{(1)}_{i/2}),\nn\\
    &&K^{(01)}=\mathcal{F}(J\psi^{(1)}J),\qquad K^{(11)}=0\nn\\
    &&\mathcal{F}(X)=\pi\int_{-\infty}^\infty \frac{dt}{\cosh^2(\pi t)}e^{i t \ad_K}(X)=\frac{\pi\ad_K}{\sinh(\ad_K/2)}(X)\ .
\end{eqnarray}
As a result, for the one-parameter family of excited states $\ket{\Phi(\lambda)}$, we set $\phi^{(1)}=\psi^{(1)}$ and
$K'=K^{(10)}+K^{(01)}$. Plugging this into (\ref{Hprimez0}) of Lemma \ref{lemma:perturb} and taking $R\p_R$ derivatives gives the first-order correction to our Hamiltonian away from the fixed point.

\item  {\bf Conformal perturbation theory:}

Consider a CFT with Hamiltonian $H_0$ in $\mathbb{R}^{1,d-1}$, the vacuum state $\ket{\Omega_0}$, and the modular Hamiltonian of a ball-shaped region $B$ of radius $R$ which generates a geometric flow.
We deform the Hamiltonian by a relevant operator $\mathcal{O}$:
\begin{eqnarray}
    &&H=H_0+\lambda V\nn\\
    &&V=\int d^{d-1}x \:\mathcal{O}(\tau,x),\ .
\end{eqnarray}
The vacuum of the deformed theory, in the interaction picture, can be written using the Dyson formula as
\begin{eqnarray}
 \ket{\Omega(\lambda)}&&\sim \mathcal{T}\lb e^{-\lambda\int_{-\infty}^0  d\tau V(\tau)}\rb\ket{\Omega_0}\nn\\
 &&=\sum_{n\geq 0}\frac{(-1)^n}{n!}\int_{-\infty}^0d\tau_1\cdots d\tau_n \:\mathcal{T}\lb V(\tau_1)\cdots V(\tau_n)\rb\ket{\Omega_0} \nn\\
 &&V(\tau)=e^{H_0\tau}V e^{-H_0\tau}
\end{eqnarray}
where $\ket{\Omega_0}$ is the CFT vaccum. We use the expansion \cite{lashkari2023perturbation}
\begin{eqnarray}\label{logexpansion}
    \p_\lambda K&&=\frac{-\pi}{2}\int_{-\infty}^\infty \frac{dt}{\cosh^2(\pi t)}\Delta^{it}\lb \Delta^{-1/2}(\p_\lambda\Delta)\Delta^{-1/2}\rb \Delta^{-it}\nn\\
   && =\frac{-\pi}{2}\int_{-\infty}^\infty \frac{dt}{\cosh^2(\pi t)}\Delta^{it} \{\Delta^{-1},(\p_\lambda\Delta)\}\Delta^{-it}
\end{eqnarray}

In terms of the UV-regulated density matrix, we find
\begin{eqnarray}
    \rho^{-1}\delta_\lambda \rho=-\int d\tau\:\mathcal{T}\lb V(\tau)\rb\ .
\end{eqnarray}
Plugging this into the expansion of the logarithm in (\ref{logexpansion}) gives the first-order correction to the half-sided modular Hamiltonian of a ball-shaped region in conformal perturbation theory, see \cite{faulkner2015bulk} for explicit expressions. Next, we plug this $\p_\lambda K$ into Lemma \ref{lemma:perturb}. The final expression depends on the value of $z$, and does not simplify beyond the form in Lemma \ref{lemma:perturb}.
Schematically, it results in approximations of the global Hamiltonian that take the form
\begin{eqnarray}
    &&H_z=H_{CFT}+H'_z\lambda+O(\lambda^2)\nn\\
    &&H'_z=\int f_z(x,\tau)\mO(x,\tau),
\end{eqnarray}
where $H_{CFT}$ is the Minkowski CFT Hamiltonian which is $z$-independent, and the first-order correction takes the form of an integral with a single insertion of the relevant conformal primary operator $\mO$ in spacetime with some function $f_z(x,t)$. This general form is a consequence of continuity in $\lambda$. Since $\mO$ is a relevant deformation, intuitively, one expects that its high-energy physics of all $H_z$ are the same and match that of the actual Hamiltonian of the theory $H_{CFT}$. However, for no value of $z$, we reproduce the exact first-order expression for $H(\lambda)$. It is an interesting question to ask whether one can use the optimization of function $p(z)$ of the form
\begin{eqnarray}
    \int dz p(z) \mathcal{E}_z(X)
\end{eqnarray}
to make sure the local approximation matches the conformal perturbation theory at the first order in $\lambda$. We postpone an exploration of this and its potential connections to the universal recovery map to upcoming work.


We can compare the spectrum of $H_z$ operator with that of the CFT using non-degenerate perturbation theory:
\begin{eqnarray}
    (H_{CFT}-E_0+\lambda (H'_z-E_1))\lb \ket{n^{(0)}}+\lambda\ket{n^{(1)}}\rb=O(\lambda^2)\ .
\end{eqnarray}
The first-order change in eigenvalues and eigenvectors are given by 
\begin{eqnarray}
    E_1(n)=\int f_z(x,\tau)\bra{n^{(0)}}\mO(x,\tau)\ket{n^{(0)}}\nn\\
    \ket{n^{(1)}}=\int f_z(x,\tau)\sum_{k\neq n}\frac{\bra{k^{(0)}}\mO(x,\tau)\ket{n^{(0)}}}{(E_0(n)-E_0(k))}\ket{k^{(0)}}\ .
\end{eqnarray}
We are interested in the high-energy part of the spectrum of $H_z$. In a chaotic CFT, it follows from the eigenstate thermalization hypothesis that the matrix elements of light operator $\mO_p$ in these heavy energy eigenbasis, at high energies and for the vast majority of energy eigenstates, are given by 
\begin{eqnarray}
&&\bra{E_h}\mO_p\ket{E_{h'}}=\mO_p(E_h)\delta_{hh'}+O^p_{hh'}
\end{eqnarray}
where the first term is diagonal and depends only on the energy $E_h$, and the dependence of the microstate $h$ or $h'$ is all captured in the second term $O^p_{hh'}\sim e^{-S(E)}$ with $E=(E_h+E_{h'})/2$. This implies that the eigenvalues and eigenvectors of our local approximation $H_\lambda$, to the first order $\lambda$, do not behave erratically from one microstate to the next. 

\end{enumerate}

\section{Discussion}

The operator algebraic approach to QFT focuses on local data that is a net of von Neumann algebras corresponding to causally complete regions of spacetime. From the perspective of algebraic QFT, we start either with the algebra of a wedge $W$ or the causal development of a ball $B$, and a state that summarizes the entire knowledge of a local observer \footnote{Note that by the time-like tube theorem, the von Neumann algebras associated to an observer is the causal development of a ball.}. Then, the GNS representation constructs the purification of this state in some fictitious commutant $(\mA_W)'$ or $(\mA_B)'$ that is inaccessible to the observer. The Bisogniano-Wichmann theorem is the surprising statement in vacuum QFT, for the algebra of a wedge, this fictitious purification describes the physics of the causal complement of $W$. The local Rindler observer can use the inclusions inside the wedge (a pair of future and past modular subalgebras) to reconstruct translations in space and time, and describe operators anywhere in spacetime \cite{revolutionizing}. This intuition generalizes to ball-shaped regions of vacuum CFT on a Lorentzian cylinder. In this work, we described how a local observer who has access only to a finite-sized ball, in a CFT, can explicitly reconstruct the global Hamiltonian on the Lorentzian cylinder that encodes all the global data of the theory. We used this to propose local approximations to the global Hamiltonian in QFT. From the perspective of local physics and operator algebras, for an observer with access to a finite ball-shaped region in some state of QFT, the commutant created by the GNS construction is a fictitious universe that purifies the state of the observer. The global Hamiltonian we proposed are candidate global Hamiltonians in this fictitious purified state, to the best of observer's knowledge, in a Bayesian sense. Different values of $z$ are choices that maximize various (Renyi) entropic measures. These choices can be further refined using (\ref{averagemeasure}) to maximize other more physically relevant measures such as preserving various infrared data. 
For instance, one might require that the reconstruction is such that it matches the first-order correction in conformal perturbation theory, or one might require that in the vicinity of the entangling wedge, the proposed approximation becomes exact. We postpone a further exploration of this optimization to upcoming work.

We point out two potential applications of our results to QFT in cosmological spacetimes and quantum chaos. In QFT in curved spacetime background, in the absence of a global time-like Killing vector, there is no preferred choice of Hamiltonian, and hence no preferred choice of vacuum. In cosmological or black hole spacetimes, an observer might not have access to the entire spacetime manifold due to the presence of cosmological or event horizons. Our Hamiltonian $H(s)$ is constructed entirely from local data accessible to the observer provides a canonical choice of extending the spacetime beyond the horizon to define a sensible closed quantum system. The degrees of freedom behind the horizon $\mA'$ are the canonical purification of the local data $\mA$, and together they evolve under our global Hamiltonian. In particular, it would be interesting to study the implications of our discussion for QFT in the Poincare patch of de Sitter space, which is conformally flat, and our theorem \ref{theoremQFT} applies.

Another application of our result is that it allows for a local characterization of quantum chaos. For early-time measures of chaos, our proposed Hamiltonians can be used to regulate the modular Hamiltonian, and hence the modular spectral form factor. We postpone exploring the connection between the quasi-normal modes as discussed in \cite{chen2024comments}, and those of non-commutative Anosov systems, to future work. 

Most discussions of late-time quantum chaos rely on the spectral properties of the global Hamiltonian. For instance, at times of order $t\sim e^{S/2}$ the spectral form factor of quantum chaotic systems, appropriately coarse-grained, is expected to show a universal ramp whose origin is in the level repulsion in the spectrum of the global Hamiltonian. Whereas at times of order $e^S$, the spectral form factor is expected to plateau due to the discreteness in the spectrum of the global Hamiltonian. It is desirable to have a definition of quantum chaos that does not rely on the degrees of freedom far away, inaccessible to the observer. Our local reconstruction of the cylinder Hamiltonian in the CFT can be used to provide a local definition for late-time spectral measures of chaos that are sensitive to the discreteness of the spectrum at late-enough times. Note that it is a highly non-trivial result that in the case of CFT, the reconstructed global Hamiltonian has a discrete spectrum.

Our approximation of the global Hamiltonian using the local state of a topologically trivial region resonates well with the philosophy of entanglement bootstrap. Traditionally, the classification of low-energy phases of gapped systems is discussed using the space of gapped Hamiltonians. In the entanglement bootstrap, the key idea is that one classifies the phases using the reduced state in the vacuum, instead of the global Hamiltonian. Our work can be interpreted as a generalization of this principle to gapless systems.



Consider the eternal black hole in AdS in the limit of $G_N\to 0$. The boundary theory is a GFF with time-band algebras. Assuming spherical symmetry, we can view time-band algebras as time-interval algebras of $0+1$ GFF with an infinite tower of fields corresponding to the spherical harmonics. Then, the assumption of nuclearity implies that the split property, which in turn implies that the emergent type III$_1$ algebras of GFF are hyperfinite. This is related to the conjecture of Leutheusser and Liu \cite{leutheusser2023emergent}. It would be interesting to use our work to find constraints on the spectral density $\rho(\omega)$ of GFF required by the hyperfiniteness of the bulk algebra, refining previous work in \cite{furuya2023information}.


Instead of the inclusion of wedges, we can consider $W(\ep,x_\perp)\subset W(0)$ such that we have deformed the wedge inwards at a particular point $x_\perp$ on the entangling wedge. Then, instead of the Poincare algebra, we have to consider the algebra generated by ANEC operators and boost. One can do an analgous geometric deformation of ball that is rotation asymmetric. We postpone this to future work.

Finally, we make the following comments about our upcoming works: 1) Our operator $T(z)T(-z)^{-1}$ can be interpreted as a modular scattering matrix, and satisfies a modular chaos bound. 2) In lattice quantum systems with a continuum limit, one can compute our local approximations of the global Hamiltonian and compare with the actual lattice Hamiltonian. This approximation should become exact as we approach the fixed point.












\section*{Acknowledgements}
We thank Matthew Dodelson, Keiichiro Furuya, Victor Ivo, Albert Law, Hong Liu, Juan Maldacena, Mudassir Moosa, Shoy Ouseph, Laimei Nie, and Xiaoliang Qi for many insightful conversations. NL’s work is supported by the DOE grants DE-SC0007884 and DE-SC0025547, and the Templeton Foundation grant 63670 on ``Emergence of time from chaos in operator algebras".


\appendix

\section{Operator Algebraic Structures of $L^2$-Nuclearity}\label{appendix:operatoralgebraicstructure}
In this appendix, we introduce the necessary operator algebraic structures underlying the physics discussions in the main text. We start with a discussion on the existence and density of common cyclic and separating vectors. This provides the necessary backgrounds to introduce the characteristic function of von Neumann algebra inclusions. Finally, we introduce the $L^2$-nuclearity condition central to this paper.

From the outset, we comment on the basic assumptions that we will use throughout. These assumptions are standard and are satisfied by most well-behaved algebraic quantum field theories.
\begin{enumerate}
    \item All local algebras of observables are assumed to be of type III;
    \item In addition, all local algebras of observables are assumed to have separable preduals
\end{enumerate}
Notice that for our discussion, we do not need the axioms of algebraic quantum field theories.
\subsection{Common cyclic and separating vectors}\label{subsection:common}
In this subsection, we consider two type III von Neumann algebras $\mathcal{A},\mathcal{B}$ acting on a common separable Hilbert space $\mathcal{H}$. We would like to know:
\begin{enumerate}
    \item if there exists a common cyclic and separating vector;
    \item if there exists a dense subset of common cyclic and separating vectors.
\end{enumerate}
It turns out the answers to both questions are affirmative.

First recall the following fact\cite{blackadar}:
\begin{lemma}\label{lemma:blackadar}
    Every type III von Neumann algebra acting on a separable Hilbert space is in standard form.
\end{lemma}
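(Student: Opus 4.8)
The plan is to reduce the statement to the existence of a single cyclic and separating vector, and then to produce one. Recall that a concretely represented von Neumann algebra $M\subset B(\mathcal{H})$ is in standard form precisely when the representation is spatially isomorphic to the one furnished by Tomita--Takesaki theory from a cyclic and separating vector: given such a vector $\Omega$, the data $(M,\mathcal{H},J_\Omega,\mathcal{P}_\Omega)$ built from the modular conjugation and the natural positive cone is a standard form, and by Haagerup's uniqueness of the standard form every such representation is of this type. So the entire content of the lemma is to exhibit a vector $\Omega\in\mathcal{H}$ that is cyclic and separating for $M$. I would further note that $\Omega$ is cyclic and separating for $M$ if and only if it is simultaneously cyclic for $M$ \emph{and} cyclic for $M'$, since ``separating for $M$'' is the same as ``cyclic for $M'$.'' This splits the task into producing a single vector that is cyclic for both $M$ and $M'$.

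The structural inputs I would invoke are two: first, that the commutant of a type III algebra is again type III (Takesaki), so that both $M$ and $M'$ are \emph{properly infinite}; second, that since $\mathcal{H}$ is separable both $M$ and $M'$ are $\sigma$-finite, because an orthogonal family of nonzero projections in either algebra produces an orthogonal family of vectors in the separable space $\mathcal{H}$ and hence must be countable. With these in hand, I would apply the fact that a properly infinite von Neumann algebra acting on a separable Hilbert space always possesses a cyclic vector; applying it to $M$ and to $M'$ separately shows each of them has a cyclic vector.

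To combine these into one common vector I would use a Baire category argument. In a separable Hilbert space, as soon as a von Neumann algebra has one cyclic vector, its set of cyclic vectors is a dense $G_\delta$. Thus the cyclic vectors for $M$ form a dense $G_\delta$ and the cyclic vectors for $M'$ form a dense $G_\delta$; by the Baire category theorem their intersection is again a (nonempty) dense $G_\delta$. Any $\Omega$ in this intersection is cyclic for both $M$ and $M'$, hence cyclic and separating for $M$, which is exactly what the reduction above required.

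The main obstacle is concentrated in the structural inputs, and it is genuinely where type III is used rather than mere $\sigma$-finiteness. The cautionary example is $M=B(\mathcal{H})$ acting on $\mathcal{H}$: it is $\sigma$-finite and even has cyclic vectors, yet it has \emph{no} separating vector, precisely because its commutant $\mathbb{C}1$ is finite rather than properly infinite. The type III hypothesis rules this out by forcing $M'$ to be type III and therefore properly infinite, which is exactly the condition that guarantees cyclic vectors for \emph{both} $M$ and $M'$. Consequently the delicate steps are the two cited facts --- ``$M$ type III implies $M'$ type III'' and ``properly infinite on separable $\mathcal{H}$ implies cyclic vector exists'' --- while the reduction to a cyclic-and-separating vector and the Baire intersection are routine.
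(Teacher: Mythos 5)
Your argument is correct, and it is worth noting up front that the paper does not actually prove this lemma --- it is quoted as a known fact with a citation to Blackadar --- so there is no in-text proof to compare against; what you have written is essentially the standard textbook argument, assembled correctly. The reduction of ``standard form'' to the existence of a cyclic and separating vector is legitimate here precisely because separability of $\mathcal{H}$ forces $M$ to be $\sigma$-finite (for non-$\sigma$-finite algebras a standard form need not carry such a vector, so this hypothesis is doing real work and you are right to flag it). The two structural inputs you isolate are both true: the commutant of a type III algebra is type III, hence both $M$ and $M'$ are properly infinite; and a properly infinite von Neumann algebra on a separable Hilbert space has a cyclic vector. The latter is the one step I would ask you to justify rather than merely cite, since it is where proper infiniteness enters irreplaceably. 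A short argument: write $1=\sum_m p_m$ with $p_m=[M\eta_m]$ mutually orthogonal cyclic projections of $M'$ (a maximality argument, countable by separability), choose isometries $u_m\in M$ with mutually orthogonal range projections $u_m u_m^*$ (proper infiniteness), and set $\xi=\sum_m c_m u_m\eta_m$ with summable positive $c_m$; then $u_k^*\xi=c_k\eta_k$, so $[M\xi]\geq p_k$ for every $k$ and $\xi$ is cyclic. Your final Baire-category step --- cyclic vectors form a dense $G_\delta$ once one exists, so the cyclic vectors of $M$ and of $M'$ intersect --- is exactly the Dixmier--Mar\'echal machinery that the paper itself develops in the lemmas immediately following this one (for common cyclic and separating vectors of two algebras), so your proof is fully consistent with the paper's toolkit and could be folded into that discussion. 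Your cautionary example $M=B(\mathcal{H})$ is well chosen: it correctly locates the failure in the finiteness of the commutant rather than in $\sigma$-finiteness.
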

Under our assumptions, both $\mathcal{A}$ and $\mathcal{B}$ acting on $\mathcal{H}$ are in standard forms. Hence, there exist two non-empty sets of cyclic and separating vectors $\mathcal{C}_\mathcal{A}$ and $\mathcal{C}_\mathcal{B}$ where $\mathcal{C}_\mathcal{A}$ is the set of cyclic and separating vectors for $\mathcal{A}$ and analogously $\mathcal{C}_\mathcal{B}$ for $\mathcal{B}$. The nontrivial questions are whether $\mathcal{C}_\mathcal{A}\cap\mathcal{C}_\mathcal{B} \neq\emptyset$ and whether $\mathcal{C}_\mathcal{A}\cap\mathcal{C}_\mathcal{B}$ is dense. 

The existence question is answered by the following observation due to Dixmier and Marechal\cite{dixmiermarechal}:
\begin{lemma}
    For a sequence of von Neumann algebras $(A_1,A_2,...)$ acting on the same Hilbert space $\mathcal{H}$, if each von Neumann algebra has a cyclic and separating vector, then there exists a common cyclic and separating vector for all $A_i$'s. 
\end{lemma}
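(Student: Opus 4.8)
The plan is to prove the statement by a Baire category argument in the Hilbert space $\mathcal{H}$, after reducing the notion ``cyclic and separating'' to a statement purely about cyclicity. Recall that a vector $\xi$ is separating for a von Neumann algebra $N$ precisely when it is cyclic for the commutant $N'$. Hence $\xi$ is cyclic and separating for $A_i$ if and only if it is cyclic for $A_i$ and cyclic for $A_i'$ simultaneously. Writing $S_i$ for the set of vectors that are cyclic and separating for $A_i$, the goal is to show that $\bigcap_i S_i \neq \emptyset$, and I will in fact show this intersection is a dense $G_\delta$.

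First I would establish that the set of cyclic vectors for any von Neumann algebra $N$ acting on the separable space $\mathcal{H}$ is a $G_\delta$. Fix an orthonormal basis $\{e_n\}$ of $\mathcal{H}$ and consider, for each $n$, the function
\begin{equation}
    \xi \longmapsto d_n(\xi) = \inf_{x\in N}\|x\xi - e_n\|,
\end{equation}
which measures the distance of $e_n$ from $\overline{N\xi}$. For each fixed $x$ the map $\xi\mapsto\|x\xi - e_n\|$ is continuous, so $d_n$, being an infimum of continuous functions, is upper semicontinuous; therefore $\{\xi : d_n(\xi)<1/k\}$ is open and $\{\xi : d_n(\xi)=0\}=\bigcap_k\{d_n<1/k\}$ is a $G_\delta$. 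A vector is cyclic for $N$ exactly when $e_n\in\overline{N\xi}$ for all $n$, i.e. when $\xi\in\bigcap_n\{d_n=0\}$, so the cyclic vectors form a $G_\delta$ set. Applying this to both $N=A_i$ and $N=A_i'$ and intersecting, each $S_i$ is a $G_\delta$.

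Next I would argue that each $S_i$ is dense. By hypothesis $A_i$ has a cyclic and separating vector, so $S_i\neq\emptyset$; the content is that a nonempty set of cyclic-and-separating vectors is automatically dense. Here I would invoke the classical genericity result of Dixmier--Mar\'echal: when a von Neumann algebra with separable predual admits a cyclic vector, its cyclic vectors are dense. The mechanism is that if $\Omega$ is cyclic for $A_i$ then $a'\Omega$ is again cyclic for every $a'\in A_i'$ with dense range, and if $\Omega$ is moreover separating for $A_i$ (hence cyclic for $A_i'$) the orbit $A_i'\Omega$ is dense; one then approximates an arbitrary target vector by $a'\Omega$ and perturbs $a'$ within $A_i'$ so as to have dense range, producing cyclic vectors arbitrarily close to the target. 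Running the same argument for $A_i'$ gives density of the separating vectors, and since each of the two families is a dense $G_\delta$ their intersection $S_i$ is again a dense $G_\delta$ by Baire's theorem.

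Finally I would assemble the conclusion. Since $\mathcal{H}$ is a complete metric space, the Baire category theorem applies: the countable intersection $\bigcap_i S_i$ of the dense $G_\delta$ sets $S_i$ is itself a dense $G_\delta$, and in particular nonempty. Any vector in this intersection is simultaneously cyclic and separating for every $A_i$, which is the claim. The main obstacle is the single-algebra density step: the $G_\delta$ structure and the Baire assembly are routine, but showing that a single cyclic (or separating) vector forces the whole set to be dense is the genuinely nontrivial input, and it is where the separable-predual and standard-form hypotheses (via Lemma~\ref{lemma:blackadar}) are essential.
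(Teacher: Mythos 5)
Your proof is correct and uses essentially the same machinery the paper deploys in this appendix: you show the cyclic and separating vectors of each $A_i$ form a dense $G_\delta$ and then intersect via the Baire category theorem, which is exactly the $G_\delta$-plus-density-plus-Baire argument the paper spells out in the two subsequent lemmas and the corollary (the paper states this particular lemma itself only as a citation to Dixmier--Mar\'echal). The only cosmetic differences are that you certify the $G_\delta$ property via upper semicontinuity of $\xi\mapsto\inf_{x\in N}\|x\xi-e_n\|$ over an orthonormal basis rather than the paper's reference-vector sets $U(\mathcal{O},n)$, and you generate nearby cyclic vectors by acting with dense-range elements of the commutant rather than invertible elements of the algebra itself; both variants rest on the same Dixmier--Mar\'echal approximation of arbitrary elements by invertibles.
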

This lemma directly shows that $\mathcal{C}_\mathcal{A}\cap\mathcal{C}_\mathcal{B} \neq \emptyset$. The density question explores whether such common cyclic and separating vectors are hard to find. The answer to this question follows from two simple observations:
\begin{lemma}
    For any von Neumann algebra $\mathcal{A}$ in its standard form acting on a Hilbert space $\mathcal{H}$, the set of cyclic vectors is dense in $\mathcal{H}$ and the set of separating vectors is dense in $\mathcal{H}$.
\end{lemma}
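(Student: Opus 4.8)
The plan is to prove density of the cyclic vectors first and then obtain density of the separating vectors for free from the commutant symmetry of the standard form, so that no factoriality or type hypothesis enters. Since $\mathcal{A}$ is in standard form there is a vector $\Omega$ that is cyclic and separating for $\mathcal{A}$; because a vector is cyclic for $\mathcal{A}$ exactly when it is separating for $\mathcal{A}'$ (and vice versa), the same $\Omega$ is cyclic and separating for $\mathcal{A}'$, and $\mathcal{A}'$ is again a von Neumann algebra in standard form on the same $\mathcal{H}$. Moreover a vector is separating for $\mathcal{A}$ precisely when it is cyclic for $\mathcal{A}'$. Hence it suffices to show that an \emph{arbitrary} von Neumann algebra in standard form has a dense set of cyclic vectors: applying this statement to $\mathcal{A}$ gives density of the cyclic vectors, and applying it to $\mathcal{A}'$ gives density of the separating vectors for $\mathcal{A}$.

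The key step will be a clean criterion for cyclicity among the vectors $b'\Omega$ with $b'\in\mathcal{A}'$. The vector $b'\Omega$ is cyclic for $\mathcal{A}$ iff it is separating for $\mathcal{A}'$, i.e. iff $c'b'\Omega=0$ with $c'\in\mathcal{A}'$ forces $c'=0$. Since $c'b'\in\mathcal{A}'$ and $\Omega$ is separating for $\mathcal{A}'$, the equation $c'b'\Omega=0$ is equivalent to $c'b'=0$; and a nonzero $c'\in\mathcal{A}'$ with $c'b'=0$ exists precisely when $\overline{\mathrm{ran}\,b'}\neq\mathcal{H}$, for one may take $c'=1-s_l(b')$, the complement of the left support projection of $b'$, which lies in $\mathcal{A}'$. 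Therefore $b'\Omega$ is cyclic for $\mathcal{A}$ iff $b'$ has dense range, equivalently $\ker b'^*=\{0\}$. This replaces the stronger and unavailable requirement that $b'$ be invertible: dense range is exactly what is needed, and, crucially, is a generic condition rather than one tied to the structure of factors.

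To finish, fix $\psi\in\mathcal{H}$ and $\epsilon>0$. Since $\Omega$ is cyclic for $\mathcal{A}'$, I can choose $a'\in\mathcal{A}'$ with $\|a'\Omega-\psi\|<\epsilon/2$, and then perturb within the dense orbit by setting $b'_\delta=a'+\delta 1\in\mathcal{A}'$, so that $\|b'_\delta\Omega-a'\Omega\|=\delta\|\Omega\|$. By the criterion above, $b'_\delta\Omega$ is cyclic iff $\ker(a'^*+\delta)=\{0\}$, i.e. iff $-\delta\notin\sigma_p(a'^*)$, where $\sigma_p$ is the point spectrum. The main (and essentially only) obstacle is guaranteeing this last condition: the naive route through invertible elements fails because invertibles need not be dense, and for a non-factor one cannot simply align components across equivalent projections as in the factor case. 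The resolution is that eigenvectors of $a'^*$ for distinct eigenvalues are orthogonal, so on the separable Hilbert space $\mathcal{H}$ the point spectrum $\sigma_p(a'^*)$ is at most countable; hence all but countably many $\delta\in(0,\epsilon/(2\|\Omega\|))$ are admissible. Choosing such a $\delta$ yields a cyclic vector $b'_\delta\Omega$ within $\epsilon$ of $\psi$, proving density of the cyclic vectors, and running the identical argument with $\mathcal{A}$ and $\mathcal{A}'$ interchanged proves density of the separating vectors.
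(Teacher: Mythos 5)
Your setup is sound and matches the paper's strategy in outline: the reduction of separating vectors to cyclic vectors via the commutant is exactly what the paper does, and your criterion that $b'\Omega$ is cyclic for $\mathcal{A}$ if and only if $b'\in\mathcal{A}'$ has dense range (using $c'=1-s_l(b')\in\mathcal{A}'$) is correct and is a genuine sharpening of the observation that invertible elements yield cyclic vectors. The gap is in the final perturbation step. The claim that eigenvectors of $a'^*$ for distinct eigenvalues are orthogonal is false for non-normal operators, and there is no reason for your approximant $a'\in\mathcal{A}'$ to be normal; consequently $\sigma_p(a'^*)$ need not be countable. Concretely, for the unilateral shift $S$ the point spectrum of $S^*$ is the entire open unit disk, and this situation does occur under the lemma's hypotheses: take $\mathcal{A}=B(\mathcal{H}_0)\otimes 1$ in standard form on $\mathcal{H}_0\otimes\mathcal{H}_0$ with cyclic separating vector $\Omega=\sum_n c_n\, e_n\otimes e_n$ (all $c_n\neq 0$), and $a'=1\otimes S\in\mathcal{A}'$. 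Then $\ker(a'^*+\delta)\neq\{0\}$ for \emph{every} $\delta\in(0,1)$, so $b'_\delta=a'+\delta 1$ fails to have dense range for every admissible $\delta$: there is no ``generic'' $\delta$ to choose, and your argument cannot close for such an $a'$.

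The missing ingredient is precisely the one you dismissed. You write that ``the naive route through invertible elements fails because invertibles need not be dense,'' but this conflates norm density with strong-operator density: by the Dixmier--Mar\'echal lemma, which the paper cites and relies on, the invertible elements of \emph{any} von Neumann algebra are dense in the strong operator topology. Since only the value on the single vector matters, SOT density is enough: given $a'$ with $\|a'\Omega-\psi\|<\epsilon/2$, choose an invertible $\mathcal{O}'\in\mathcal{A}'$ with $\|(\mathcal{O}'-a')\Omega\|<\epsilon/2$; then $\mathcal{O}'\Omega$ is cyclic (invertibles trivially have dense range, so your own criterion applies) and lies within $\epsilon$ of $\psi$. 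This is essentially the paper's proof, which runs the same argument with invertibles of $\mathcal{A}$ acting on a cyclic vector $\xi$ and then passes to $\mathcal{A}'$ for the separating statement. In either formulation, the SOT-density of invertibles is the one nontrivial input, and your scalar-perturbation trick is not a valid substitute for it.
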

\begin{proof}
    Since $\mathcal{A}$ is in its standard form, there exists a cyclic vector $\ket{\xi}\in\mathcal{H}$. For any invertible operator $\mathcal{O}\in\mathcal{A}$, the vector $\mathcal{O}\ket{\xi}\in\mathcal{H}$ is also a cyclic vector:
    \begin{equation}
        \overline{A\mathcal{O}\ket{\xi}} = \overline{A\mathcal{O}^{-1}\mathcal{O}\ket{\xi}} = \overline{A\ket{\xi}} = \mathcal{H}
    \end{equation}
    
    Finally, by an observation due to Dixmier and Marechal \cite{dixmiermarechal}, any element in $\mathcal{A}$ is a strong-operator limit of a sequence of invertible elements in $\mathcal{A}$. Therefore the set $\{\mathcal{O}\ket{\xi}: \mathcal{O}\in \text{Inv}(\mathcal{A})\}$ is dense in $\mathcal{H}$ where $\text{Inv}(\mathcal{A})$ is the set of invertible operators in $\mathcal{A}$.

    Since $\mathcal{A}$ is in its standard form, its commutant $\mathcal{A}'$ acting on $\mathcal{H}$ is also in its standard form. Hence, the set of cyclic vectors of $\mathcal{A}'$ is dense in $\mathcal{H}$. Thus, the set of separating vectors of $\mathcal{A}$ is dense in $\mathcal{H}$.
\end{proof}
The next observation is again due to Dixmier and Marachel \cite{dixmiermarechal}. The statement applies to general von Neumann algebras.
\begin{lemma}
    For any von Neumann algebra $A$ acting on a Hilbert space $\mathcal{H}$ the set of cyclic vector forms a $G_\delta$ subset of $\mathcal{H}$ and the set of separating vector forms a $G_\delta$ subset of $\mathcal{H}$
\end{lemma}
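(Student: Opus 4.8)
The plan is to reduce both assertions to a single statement about cyclic vectors. Since a vector $\xi$ is separating for $A$ exactly when it is cyclic for the commutant $A'$, and $A'$ is again a von Neumann algebra acting on the same separable $\mathcal{H}$, it suffices to prove that the set of cyclic vectors of an arbitrary von Neumann algebra is a $G_\delta$; applying this to $A'$ then yields the statement about separating vectors for $A$. So I would first treat the cyclic case and invoke this duality only at the very end.

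For the cyclic case I would use separability of $\mathcal{H}$ to fix a countable dense set $D=\{d_k\}_{k\in\mathbb{N}}\subset\mathcal{H}$. The key reformulation is that $\xi$ is cyclic, i.e. $\overline{A\xi}=\mathcal{H}$, if and only if the closed subspace $\overline{A\xi}$ contains every $d_k$, which in turn holds iff $\mathrm{dist}(d_k,\overline{A\xi})=0$ for all $k$ (here density of $D$ is what upgrades "contains each $d_k$" to "equals $\mathcal{H}$"). Setting $g_k(\xi)=\inf_{a\in A}\|a\xi-d_k\|=\mathrm{dist}(d_k,\overline{A\xi})$, I can then write
\[
\{\xi:\xi\text{ is cyclic}\}=\bigcap_{k\in\mathbb{N}}\bigcap_{m\in\mathbb{N}}\{\xi: g_k(\xi)<1/m\},
\]
so it only remains to check that each set on the right is open.

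This openness is the one point that carries the argument. For each fixed $a\in A$ the map $\xi\mapsto\|a\xi-d_k\|$ is norm-continuous because $a$ is bounded; hence $g_k$, being a pointwise infimum of continuous functions, is upper semicontinuous, and
\[
\{\xi:g_k(\xi)<1/m\}=\bigcup_{a\in A}\{\xi:\|a\xi-d_k\|<1/m\}
\]
is a union of open sets, hence open. The displayed set of cyclic vectors is therefore a countable intersection of open sets, i.e. $G_\delta$, and the duality above finishes the proof. Separability enters essentially here, precisely in producing the countable family $D$ that collapses the a priori uncountable density condition to a countable intersection.

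The main obstacle — really the only subtlety — is establishing upper semicontinuity of $g_k$ together with the equivalence between cyclicity and "zero distance to every $d_k$"; once these are in hand the rest is bookkeeping. The only mild care needed is that $A\xi$ is in general not closed, so distances must be taken to $\overline{A\xi}$; but since $\inf_{a\in A}\|a\xi-d_k\|$ equals $\mathrm{dist}(d_k,\overline{A\xi})$ regardless, this causes no difficulty.
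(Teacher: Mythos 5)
Your proof is correct, but it takes a genuinely different route from the paper's. You fix a countable dense subset $D=\{d_k\}$ of $\mathcal{H}$ and express cyclicity of $\xi$ as the countable family of open conditions $\inf_{a\in A}\|a\xi-d_k\|<1/m$; separability of $\mathcal{H}$ is what makes the resulting intersection countable. The paper instead handles the empty case trivially and otherwise fixes a \emph{single} reference cyclic vector $\xi_0$, writes the cyclic vectors as $\bigcap_{n}\bigcup_{\mathcal{O}\in A}\{\xi:\|\mathcal{O}\xi-\xi_0\|<1/n\}$, and uses the composition trick $\|\mathcal{O}\mathcal{O}'\xi'-\xi\|\le\|\mathcal{O}\|\,\|\mathcal{O}'\xi'-\xi_0\|+\|\mathcal{O}\xi_0-\xi\|$ to show that being able to approximate $\xi_0$ alone already forces cyclicity. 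The trade-off: the paper's argument nowhere uses separability of $\mathcal{H}$ (only the existence of at least one cyclic vector, with the empty case dispatched separately), so it survives in nonseparable settings where your dense sequence is unavailable; your argument needs separability — which does hold under the paper's standing assumptions — but in exchange avoids the case split and the reliance on a distinguished cyclic vector, and is arguably the more standard Baire-category bookkeeping. Both correctly reduce the separating-vector statement to the cyclic one via the commutant.
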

The proof of this lemma is elementary and short. So we include it here. Recall a $G_\delta$ set of a topological space is the intersection of countably infinite open subsets. 
\begin{proof}
    An empty set is a $G_\delta$ set by definition.

    Suppose the set of cyclic vectors of $A$ is non-empty, i.e. there exists $\ket{\xi_0}\in \mathcal{H}$ such that $\overline{A\ket{\xi_0}}= \mathcal{H}$. Then for any operator $\mathcal{O}\in A$, we consider the countably infinite collection of open sets:
    \begin{equation}
        U(\mathcal{O},n):=\{\ket{\xi}\in\mathcal{H}:||\mathcal{O}\ket{\xi} - \ket{\xi_0}|| < \frac{1}{n}\}
    \end{equation}
    If $\ket{\xi}\in\mathcal{H}$ is a cyclic vector of $A$, then for all $n > 0$ there exists an operator $\mathcal{O}_n$ such that $||\mathcal{O}_n\ket{\xi} - \ket{\xi_0}|| < \frac{1}{n}$. Hence we have:
    \begin{equation}
        \ket{\xi}\in \cap_{n > 0}\cup_{\mathcal{O}\in A}U(\mathcal{O}, n)
    \end{equation}
    Note that because $\cup_{\mathcal{O}\in A} U(\mathcal{O}, n)$ is open, the intersection $\cap_{n > 0}\cup_{\mathcal{O}\in A}U(\mathcal{O}, n)$ is a $G_\delta$ set. 
    
    Conversely, any vector in this intersection is a cyclic vector of $A$. This is because the reference vector $\ket{\xi_0}$ is a cyclic vector. Thus for an arbitrary vector $\ket{\xi}\in\mathcal{H}$ and for any $\epsilon  > 0$, there exists an operator $\mathcal{O}\in A$ such that:
    \begin{equation}
        ||\mathcal{O}\ket{\xi_0} - \ket{\xi}|| < \frac{\epsilon}{2}
    \end{equation}
    For any vector $\ket{\xi'}$ in the aforementioned $G_\delta$ set, there exists another operator $\mathcal{O}'$ such that:
    \begin{equation}
        ||\mathcal{O}'\ket{\xi'
        } - \ket{\xi_0}|| < \frac{\epsilon}{2 ||\mathcal{O}||}
    \end{equation}
    Then we have:
    \begin{equation}
        ||\mathcal{O}\mathcal{O}'\ket{\xi'} - \ket{\xi}||\leq||\mathcal{O}||\cdot||\mathcal{O}\ket{\xi'} - \ket{\xi_0}|| + ||\mathcal{O}\ket{\xi_0} - \ket{\xi}|| <\epsilon
    \end{equation}
    Hence $\overline{\mathcal{A}\ket{\xi'}} = \mathcal{H}$. 

    Thus it follows that the set of cyclic vectors is exactly the $G_\delta$ set $\cap_{n >0}\cup_{\mathcal{O}\in A}U(\mathcal{O},n)$. Apply the same argument to the commutant $A'$ and since the set of separating vectors of $A$ is the set of cyclic vectors of $A'$, we arrive at the conclusion.
\end{proof}
Based on these two observations, the density of common cyclic and separating vectors follows from the Baire category theorem. 
\begin{corollary}
    For two type III von Neumann algebras $\mathcal{A}, \mathcal{B}$ acting on a common separable Hilbert space $\mathcal{H}$, then the set of common cyclic and separating vectors is a dense $G_\delta$ subset of $\mathcal{H}$.
\end{corollary}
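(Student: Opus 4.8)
The plan is to realize the set of common cyclic and separating vectors as a finite intersection of sets that the preceding three lemmas have already shown to be \emph{dense} $G_\delta$ subsets, and then to invoke the Baire category theorem. Writing $\mathcal{C}^{\mathrm{cyc}}_\mathcal{A}, \mathcal{C}^{\mathrm{sep}}_\mathcal{A}$ for the cyclic and separating vectors of $\mathcal{A}$ and $\mathcal{C}^{\mathrm{cyc}}_\mathcal{B}, \mathcal{C}^{\mathrm{sep}}_\mathcal{B}$ for those of $\mathcal{B}$, the target set is exactly the four-fold intersection
\begin{equation}
    \mathcal{C}_\mathcal{A}\cap\mathcal{C}_\mathcal{B} = \mathcal{C}^{\mathrm{cyc}}_\mathcal{A}\cap\mathcal{C}^{\mathrm{sep}}_\mathcal{A}\cap\mathcal{C}^{\mathrm{cyc}}_\mathcal{B}\cap\mathcal{C}^{\mathrm{sep}}_\mathcal{B}.
\end{equation}
First I would record the two inputs. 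Since $\mathcal{H}$ is separable and both algebras are type III, Lemma \ref{lemma:blackadar} puts each in standard form, so all four sets above are nonempty; the density lemma then shows each is dense in $\mathcal{H}$, and the $G_\delta$ lemma shows each is a $G_\delta$ subset of $\mathcal{H}$.

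The key intermediate step is the elementary observation that a \emph{dense} $G_\delta$ set $\bigcap_n U_n$ (with each $U_n$ open) may be presented with every constituent $U_n$ dense: indeed $U_n \supseteq \bigcap_m U_m$, and the right-hand side is dense, so $U_n$ is dense. Applying this to each of the four sets expresses them as countable intersections of dense open subsets of $\mathcal{H}$. Their pooled family is still a countable collection of dense open sets, and $\mathcal{C}_\mathcal{A}\cap\mathcal{C}_\mathcal{B}$ is precisely the intersection of this countable family.

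Finally, a separable Hilbert space is a complete metric space and hence a Baire space, so the Baire category theorem guarantees that a countable intersection of dense open sets is dense; being a countable intersection of open sets, it is visibly $G_\delta$, which completes the argument. I do not expect a genuine obstacle once the three prior lemmas are available: the only point that needs care is the remark that the open sets defining a dense $G_\delta$ can be taken dense, after which Baire does all the work. One could alternatively phrase the whole proof as: the intersection of finitely many dense $G_\delta$ sets in a Baire space is again a dense $G_\delta$ set, applied to the four sets listed above.
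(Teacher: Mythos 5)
Your proof is correct and follows essentially the same route as the paper: both decompose the target set into the (dense $G_\delta$) sets of cyclic and separating vectors for each algebra and conclude by the Baire category theorem. The only difference is that you spell out the standard detail the paper leaves implicit—that a dense $G_\delta$ can be written as a countable intersection of dense open sets—which is a welcome clarification but not a different argument.
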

\begin{proof}
    Following the previous two lemmas, for the algebra $\mathcal{A}$, the set of cyclic vectors is a dense $G_\delta$ subset of $\mathcal{H}$ and the set of separating vectors is also a dense $G_\delta$ subset of $\mathcal{H}$. By Baire category theorem, the intersection of these two sets is also a dense $G_\delta$ subset of $\mathcal{H}$, i.e. $\mathcal{C}_\mathcal{A}$ is a dense $G_\delta$ subset of $\mathcal{H}$. Similarly, $\mathcal{C}_\mathcal{B}$ is a dense $G_\delta$ subset of $\mathcal{H}$.

    Finally, by another application of Baire category theorem, the intersection $\mathcal{C}_\mathcal{A}\cap\mathcal{C}_\mathcal{B}$ is a dense $G_\delta$ subset of $\mathcal{H}$.
\end{proof}
\subsection{Characteristic functions of von Neumann algebra inclusions}\label{subsection:characteristicfunctions}
Under our assumptions, consider an inclusion of two type III von Neumann algebras $\mathcal{B}\subset \mathcal{A}$ where both algebras act on a separable Hilbert space $\mathcal{H}$. We have already seen that the set of common cyclic and separating vectors is dense in $\mathcal{H}$. Fix one such common cyclic and separating vector $\ket{\Omega}\in\mathcal{C}_\mathcal{A}\cap\mathcal{C}_\mathcal{B}$. Assume $\ket{\Omega}$ is normalized. Then $\bra{\Omega}\cdot\ket{\Omega}$ is a normal faithful state on both $\mathcal{A}$ and $\mathcal{B}$. Denote the state on $\mathcal{A}$ as $\omega_\mathcal{A}$ and the state on $\mathcal{B}$ as $\omega_\mathcal{B}$. By modular theory, we can define the corresponding modular operators and modular conjugations. Denote $\Delta_\mathcal{A}$ as the modular operator of $\omega_\mathcal{A}$ on $\mathcal{A}$ and $J_\mathcal{A}$ as the modular conjugation on $\mathcal{A}$. Denote $\Delta_\mathcal{B}$ as the modular operator of $\omega_\mathcal{B}$ on $\mathcal{B}$ and $J_\mathcal{B}$ as the modular conjugation on $\mathcal{B}$.

By definition of modular operator and modular conjugation, it is clear that $\Delta_\mathcal{A}$ is an extension of $\Delta_\mathcal{B}$:
\begin{equation}
    \mathcal{S}_\mathcal{A}\mathcal{O}_\mathcal{B}\ket{\Omega} = J_\mathcal{A}\Delta_\mathcal{A}^{1/2}\mathcal{O}_\mathcal{B}\ket{\Omega} = \mathcal{O}_\mathcal{B}^*\ket{\Omega} = J_\mathcal{B}\Delta_\mathcal{B}^{1/2}\mathcal{O}_\mathcal{B}\ket{\Omega} = \mathcal{S}_\mathcal{B}\mathcal{O}_\mathcal{B}\ket{\Omega}
\end{equation}
where $\mathcal{O}_\mathcal{B}$ can be any operator in $\mathcal{B}$. Hence, by general operator theory, we have operator inequality:
\begin{equation}
    \Delta_\mathcal{B}\geq \Delta_\mathcal{A}
\end{equation}
Since the function $f(z)  = z^\alpha$ for $0\leq \alpha\leq 1$ is an operator monotone function, then we have:
\begin{equation}
    \Delta_\mathcal{B}^\alpha \geq \Delta_\mathcal{A}^\alpha
\end{equation}
In particular, this means that for any $0\leq \alpha\leq 1$, the domain $\mathcal{D}(\Delta_\mathcal{B}^\alpha)$ is contained in the domain $\mathcal{D}(\Delta_\mathcal{A}^\alpha)$. Hence, the following operator is well-defined and closable for $0\leq \alpha\leq 1/2$:
\begin{equation}
    \Delta_\mathcal{A}^\alpha \Delta_\mathcal{B}^{-\alpha}
\end{equation}
In addition, using the fact that $f(z) = z^\alpha$ for $0\leq \alpha\leq 1$ is operator monotone and $g(z) = z^{-1}$ is operator anti-monotone for positive operators, we have:
\begin{equation}
    \Delta_\mathcal{A}^{\alpha}\Delta_\mathcal{B}^{-2\alpha}\Delta_{\mathcal{A}}^{\alpha} \leq \Delta_\mathcal{A}^\alpha\Delta_\mathcal{A}^{-2\alpha}\Delta_\mathcal{A}^\alpha = 1
\end{equation}
We are now ready to introduce the characteristic function of the inclusion $\mathcal{B}\subset\mathcal{A}$\cite{revolutionizing}:
\begin{definition}
    Given a pair of von Neumann algebras $\mathcal{B}\subset\mathcal{A}$ and a common cyclic and separating vector $\ket{\Omega}\in\mathcal{H}$, the characteristic function is defined as:
    \begin{equation}
        \mathcal{D}_{\mathcal{A}, \mathcal{B}}(t) := \Delta_\mathcal{A}^{it}\Delta_\mathcal{B}^{-it}
    \end{equation}
\end{definition}
We have already seen that the characteristic function $\mathcal{D}_{\mathcal{A}, \mathcal{B}}(t)$ can be extended to the imaginary values: $\mathcal{D}_{\mathcal{A}, \mathcal{B}}(-i\alpha)$ where $0\leq \alpha\leq 1/2$. The crucial fact is that the characteristic function can be extended to the entire strip $S(0, -1/2) := \{z\in\mathbb{C}: 0\geq \Im{z}\geq -1/2\}$ \cite{revolutionizing}:
\begin{lemma}\label{lemma:characteristic}
    Given a pair of von Neumann algebra $\mathcal{B}\subset\mathcal{A}$ with a common cyclic and separating vector $\ket{\Omega}$, then the characteristic function $\mathcal{D}_{\mathcal{A}, \mathcal{B}}(t)$ can be extended to the strip $\mathcal{S}(0, -1/2)$. The extension is a bounded analytic function on $\mathcal{S}(0,-1/2)$.

    Moreover, on the boundary $\partial\mathcal{S}(0,-1/2) = \{z\in\mathbb{C}:\Im{z} = 0\}\sqcup\{z\in\mathbb{C}:\Im{z} = -1/2\}$, the characteristic function is strongly continuous.
\end{lemma}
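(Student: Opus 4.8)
The plan is to establish the four assertions — that $\mathcal{D}_{\mathcal{A},\mathcal{B}}$ extends to the strip, that the extension is bounded, that it is analytic in the interior, and that it is strongly continuous on each boundary line — by reducing every statement to the purely imaginary axis, where the necessary estimates have already been assembled above, and then propagating the conclusions in the real direction by unitary conjugation. The key structural observation is that powers of a \emph{single} modular operator commute, so the combined complex power factorizes cleanly into a real (unitary) part and an imaginary (contractive) part.

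First I would prove boundedness and define the extension. Writing $z = t - i\alpha$ with $t\in\mathbb{R}$ and $\alpha\in[0,1/2]$, I factor
\[
    \Delta_\mathcal{A}^{iz}\Delta_\mathcal{B}^{-iz} = \Delta_\mathcal{A}^{it}\,\big(\Delta_\mathcal{A}^{\alpha}\Delta_\mathcal{B}^{-\alpha}\big)\,\Delta_\mathcal{B}^{-it}.
\]
The outer factors are unitary, so the norm is controlled by the middle factor $\mathcal{D}_{\mathcal{A},\mathcal{B}}(-i\alpha)=\Delta_\mathcal{A}^{\alpha}\Delta_\mathcal{B}^{-\alpha}$, which is exactly the operator shown above to be closable with $\big(\Delta_\mathcal{A}^{\alpha}\Delta_\mathcal{B}^{-\alpha}\big)\big(\Delta_\mathcal{A}^{\alpha}\Delta_\mathcal{B}^{-\alpha}\big)^{*} = \Delta_\mathcal{A}^{\alpha}\Delta_\mathcal{B}^{-2\alpha}\Delta_\mathcal{A}^{\alpha}\leq 1$. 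Hence its closure has norm $\le 1$, and the displayed product extends to a bounded operator of norm $\le 1$ for every $z$ in the closed strip; I take this bounded operator as the definition of $\mathcal{D}_{\mathcal{A},\mathcal{B}}(z)$. Note that this factorization delivers the uniform bound $\|\mathcal{D}_{\mathcal{A},\mathcal{B}}(z)\|\le 1$ directly, without invoking the three-lines theorem.

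Next, analyticity and boundary continuity. I would exhibit a dense domain on which the matrix elements are manifestly holomorphic: take $\xi = \chi_{[1/n,n]}(\Delta_\mathcal{B})\zeta$ and $\eta = \chi_{[1/n,n]}(\Delta_\mathcal{A})\chi$, which are spectrally bounded and hence make $z\mapsto\Delta_\mathcal{B}^{-iz}\xi$ and $z\mapsto\Delta_\mathcal{A}^{-i\bar z}\eta$ entire $\mathcal{H}$-valued functions. Using $(\Delta_\mathcal{A}^{iz})^{*}=\Delta_\mathcal{A}^{-i\bar z}$ and the spectral theorem,
\[
    \langle\eta,\mathcal{D}_{\mathcal{A},\mathcal{B}}(z)\xi\rangle = \iint (\mu/\lambda)^{iz}\,d\langle E_\mathcal{A}(\mu)\eta,\,E_\mathcal{B}(\lambda)\xi\rangle,
\]
which is entire in $z$ since the spectral measure has finite total variation on the compact spectral window. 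These vectors are dense, so by the uniform bound the matrix elements are locally uniform limits of holomorphic functions and therefore holomorphic for all $\eta,\xi$; being weakly holomorphic and locally bounded, $z\mapsto\mathcal{D}_{\mathcal{A},\mathcal{B}}(z)$ is analytic on the open strip. For strong continuity on the top line I invoke Stone's theorem: $t\mapsto\Delta_\mathcal{A}^{it}\Delta_\mathcal{B}^{-it}$ is a product of strongly continuous unitary families. On the bottom line I use the relation $S_\mathcal{A}\supset S_\mathcal{B}$ proved above, which gives $\Delta_\mathcal{A}^{1/2}\Delta_\mathcal{B}^{-1/2}=J_\mathcal{A}J_\mathcal{B}$ and hence $\mathcal{D}_{\mathcal{A},\mathcal{B}}(t-i/2)=\Delta_\mathcal{A}^{it}(J_\mathcal{A}J_\mathcal{B})\Delta_\mathcal{B}^{-it}$, again a product of strongly continuous unitaries; continuity of the approach to each boundary line from inside follows from strong continuity of $\alpha\mapsto\Delta_\mathcal{A}^{\alpha}\Delta_\mathcal{B}^{-\alpha}$ on $[0,1/2]$ together with the uniform bound.

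The main obstacle I anticipate is the passage from weak (matrix-element) holomorphy to genuine analyticity of the $B(\mathcal{H})$-valued map, together with the domain bookkeeping that justifies identifying the bounded extension with the naive product $\Delta_\mathcal{A}^{iz}\Delta_\mathcal{B}^{-iz}$ on a common core. The holomorphy step is handled by combining the dense family of spectrally bounded vectors with the uniform norm bound (a Vitali/Montel-type argument for operator-valued functions), and the identification is secured because the factorized form and the naive product agree on the dense set of spectrally bounded vectors where both are unambiguously defined; care is needed only to verify that the contraction estimate $\Delta_\mathcal{A}^{\alpha}\Delta_\mathcal{B}^{-2\alpha}\Delta_\mathcal{A}^{\alpha}\leq 1$ continues to control the closure uniformly in $\alpha$ as $\alpha\to 0$ and $\alpha\to 1/2$, which is exactly what the operator-monotonicity input from the excerpt provides.
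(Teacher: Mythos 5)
Your proposal is correct, and it supplies an actual argument where the paper gives none: the paper only records the preparatory facts (closability of $\Delta_\mathcal{A}^{\alpha}\Delta_\mathcal{B}^{-\alpha}$, the contraction inequality $\Delta_\mathcal{A}^{\alpha}\Delta_\mathcal{B}^{-2\alpha}\Delta_\mathcal{A}^{\alpha}\leq 1$ from operator monotonicity, and $S_\mathcal{A}\supset S_\mathcal{B}$) and then defers the lemma itself to the cited reference. Your proof is the standard way of completing exactly those ingredients. The commuting factorization $\Delta_\mathcal{A}^{iz}\Delta_\mathcal{B}^{-iz}=\Delta_\mathcal{A}^{it}\bigl(\Delta_\mathcal{A}^{\alpha}\Delta_\mathcal{B}^{-\alpha}\bigr)\Delta_\mathcal{B}^{-it}$ correctly reduces the uniform bound $\|\mathcal{D}_{\mathcal{A},\mathcal{B}}(z)\|\leq 1$ to the imaginary axis without any three-lines argument; the spectrally truncated vectors give a dense set on which $\langle\Delta_\mathcal{A}^{-i\bar z}\eta,\Delta_\mathcal{B}^{-iz}\xi\rangle$ is entire as an inner product of entire vector-valued functions, and the uniform bound upgrades this to weak (hence, with local boundedness, norm) holomorphy on the open strip; and $\mathcal{D}_{\mathcal{A},\mathcal{B}}(t-i/2)=\Delta_\mathcal{A}^{it}J_\mathcal{A}J_\mathcal{B}\Delta_\mathcal{B}^{-it}$ handles the lower boundary line, matching property 5 in the paper's list. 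Two small points of hygiene: the expression $\iint(\mu/\lambda)^{iz}\,d\langle E_\mathcal{A}(\mu)\eta,E_\mathcal{B}(\lambda)\xi\rangle$ is a bimeasure rather than a genuine complex measure of finite total variation, so you should lean on the formulation you already give (inner product of two entire $\mathcal{H}$-valued functions) rather than on the "finite total variation" justification; and the claimed strong continuity of $\alpha\mapsto\Delta_\mathcal{A}^{\alpha}\Delta_\mathcal{B}^{-\alpha}$ on $[0,1/2]$, used for continuity of the approach to the boundary, is asserted rather than proved --- but the lemma as stated only requires strong continuity in $t$ along each boundary line, which your unitary-product argument already delivers.
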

For a full discussion of the characteristic function, we refer the reader to \cite{revolutionizing}. Here we list the properties of the characteristic function \cite{revolutionizing}:
\begin{enumerate}
    \item $\mathcal{D}_{\mathcal{A},\mathcal{B}}(t)$ and $\mathcal{D}_{\mathcal{A}, \mathcal{B}}(t - \frac{i}{2})$ are unitary and strongly continuous in $t$;
    \item $\mathcal{D}_{\mathcal{A},\mathcal{B}}(z)$ is bounded analytic on the strip $\mathcal{S}(0,-\frac{1}{2})$;
    \item $\mathcal{D}_{\mathcal{A},\mathcal{B}}(t)\ket{\Omega} = \ket{\Omega}$;
    \item The cocycle condition holds: $\mathcal{D}_{\mathcal{A},\mathcal{B}}(t+s) = \sigma_t^{\ket{\Omega}}(\mathcal{D}_{\mathcal{A},\mathcal{B}}(s))\mathcal{D}_{\mathcal{A},\mathcal{B}}(t)$ where $\sigma_t^{\ket{\Omega}}$ is the modular automorphism on $\mathcal{A}$ associated with the state $\ket{\Omega}$;
    \item $\mathcal{D}_{\mathcal{A},\mathcal{B}}(t - \frac{i}{2}) = J_\mathcal{A}\mathcal{D}_{\mathcal{A},\mathcal{B}}(t)J_\mathcal{B}$. In particular, $\mathcal{D}_{\mathcal{A},\mathcal{B}}(-\frac{i}{2}) = J_\mathcal{A}J_\mathcal{B}$. This unitary operator is crucial in subfactor theory creating a chain of algebras:
    \begin{equation}
        \mathcal{A}\supset\mathcal{B}\supset\mathcal{D}_{\mathcal{A},\mathcal{B}}(-\frac{i}{2})^*\mathcal{A}\mathcal{D}_{\mathcal{A},\mathcal{B}}(-\frac{i}{2})\supset\mathcal{D}_{\mathcal{A},\mathcal{B}}(-\frac{i}{2})^*\mathcal{B}\mathcal{D}_{\mathcal{A},\mathcal{B}}(-\frac{i}{2})\supset...
    \end{equation}
    \item For all $t\in\mathbb{R}$, $\mathcal{D}_{\mathcal{A},\mathcal{B}}(t)\mathcal{D}_{\mathcal{A},\mathcal{B}}(-\frac{i}{2})^*\mathcal{A}\mathcal{D}_{\mathcal{A},\mathcal{B}}(-\frac{i}{2})\mathcal{D}_{\mathcal{A},\mathcal{B}}(t)^*\subset\mathcal{A}$ holds.
\end{enumerate}
Any operator-valued function associated with a pair of von Neumann algebras that satisfy the above-listed properties is called a characteristic function \cite{revolutionizing}.

For our discussion on $L^2$-nuclearity, we actually only need to consider $\mathcal{D}_{\mathcal{A},\mathcal{B}}(-i/4) = \Delta_\mathcal{A}^{1/4}\Delta_\mathcal{B}^{-1/4}$. However, the important observation regarding characteristic functions is the following observation \cite{revolutionizing}:
\begin{theorem}
    Given a von Neumann algebra $\mathcal{A}$ and a cyclic and separating vector $\ket{\Omega}$, then each subalgebra $\mathcal{B}\subset\mathcal{A}$ such that $\ket{\Omega}$ is cyclic for $\mathcal{B}$ is uniquely associated with a characteristic function.
\end{theorem}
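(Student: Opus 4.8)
The plan is to read the statement as asserting that the assignment $\mathcal{B}\mapsto\mathcal{D}_{\mathcal{A},\mathcal{B}}$ is a well-defined map that singles out $\mathcal{B}$ unambiguously, and to prove it in two stages: first that each admissible $\mathcal{B}$ produces exactly one characteristic function, and second that this function in turn recovers $\mathcal{B}$, so that the association is faithful.

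First I would establish well-definedness and uniqueness of the forward map. By hypothesis $\ket{\Omega}$ is cyclic for $\mathcal{B}$, and since $\mathcal{B}\subset\mathcal{A}$ with $\ket{\Omega}$ separating for $\mathcal{A}$, any $b\in\mathcal{B}$ with $b\ket{\Omega}=0$ satisfies $b=0$, so $\ket{\Omega}$ is also separating for $\mathcal{B}$. Tomita--Takesaki theory then produces modular objects $\Delta_\mathcal{B},J_\mathcal{B}$ that are \emph{uniquely} determined by the pair $(\mathcal{B},\ket{\Omega})$. Setting $\mathcal{D}_{\mathcal{A},\mathcal{B}}(t)=\Delta_\mathcal{A}^{it}\Delta_\mathcal{B}^{-it}$ and invoking Lemma \ref{lemma:characteristic}, this extends to a bounded analytic function on $\mathcal{S}(0,-1/2)$; one then verifies the defining axioms (unitarity and strong continuity on the two boundary lines, the cocycle identity, the fixing of $\ket{\Omega}$, and $\mathcal{D}_{\mathcal{A},\mathcal{B}}(-i/2)=J_\mathcal{A}J_\mathcal{B}$) exactly as in the listed properties. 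Since $\Delta_\mathcal{A}$ is fixed and $\Delta_\mathcal{B},J_\mathcal{B}$ are unique, the characteristic function attached to $\mathcal{B}$ is unique; analyticity together with strong boundary continuity shows it is already pinned down by its values for real $t$ and the single boundary value at $-i/2$.

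Next I would show the characteristic function determines $\mathcal{B}$. From $\mathcal{D}_{\mathcal{A},\mathcal{B}}(t)$ on the real line and the known $\Delta_\mathcal{A}$ I recover the unitary group $\Delta_\mathcal{B}^{-it}=\Delta_\mathcal{A}^{-it}\mathcal{D}_{\mathcal{A},\mathcal{B}}(t)$, hence $\Delta_\mathcal{B}$ by Stone's theorem; from the boundary value and $J_\mathcal{A}^2=1$ I recover $J_\mathcal{B}=J_\mathcal{A}\,\mathcal{D}_{\mathcal{A},\mathcal{B}}(-i/2)$. Thus the characteristic function is equivalent to the data of the Tomita operator $S_\mathcal{B}=J_\mathcal{B}\Delta_\mathcal{B}^{1/2}$ of the subalgebra. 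The remaining step is to reconstruct $\mathcal{B}$ from $S_\mathcal{B}$ and $\ket{\Omega}$: the closed real subspace $K=\{\xi\in\mathcal{D}(S_\mathcal{B}):S_\mathcal{B}\xi=\xi\}$ is exactly the closure of $\mathcal{B}_{sa}\ket{\Omega}$, whence $\overline{\mathcal{B}\ket{\Omega}}=\overline{K+iK}$, and $\mathcal{B}$ is the von Neumann algebra canonically carrying this modular data.

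The main obstacle is precisely this last reconstruction: showing that the modular data $(\Delta_\mathcal{B},J_\mathcal{B})$ relative to $\ket{\Omega}$ determines $\mathcal{B}$ uniquely, i.e. that two subalgebras of $\mathcal{A}$ with the same $\ket{\Omega}$ and the same modular objects must coincide. I would attack this by observing that $K$ is a standard subspace whose own modular objects coincide with $(\Delta_\mathcal{B},J_\mathcal{B})$, so by the Rieffel--van Daele correspondence $K$ is determined by $S_\mathcal{B}$; one then uses $J_\mathcal{B}\mathcal{B}J_\mathcal{B}=\mathcal{B}'$ and the commutation theorem to argue that the bounded operators carrying $\ket{\Omega}$ into $K+iK$ and commuting with the recovered commutant are exactly $\mathcal{B}$. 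A useful sanity check before attempting the general argument is to confirm that any unitary commuting with $S_\mathcal{B}$ and fixing $\ket{\Omega}$ necessarily normalizes $\mathcal{B}$, which rules out the naive gauge ambiguities. This is the delicate point, and it is where I would lean on the structural results of \cite{revolutionizing}; the rest reduces to bookkeeping with Tomita--Takesaki theory and the analytic continuation supplied by Lemma \ref{lemma:characteristic}.
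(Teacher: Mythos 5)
Your two-stage reading of the statement is the right one, and the first stage (that the pair $(\mathcal{B},\ket{\Omega})$ determines its modular objects uniquely and hence a unique function $\mathcal{D}_{\mathcal{A},\mathcal{B}}(t)=\Delta_\mathcal{A}^{it}\Delta_\mathcal{B}^{-it}$) is correct and essentially immediate from Tomita--Takesaki theory. Be aware, though, that the paper itself offers no proof of this theorem: it defers entirely to \cite{revolutionizing}. So the only substantive part to scrutinize is your second stage, the injectivity of the association, and that is where there is a genuine gap.

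The gap is the step ``$\mathcal{B}$ is the von Neumann algebra canonically carrying this modular data.'' The Tomita operator $S_\mathcal{B}=J_\mathcal{B}\Delta_\mathcal{B}^{1/2}$, equivalently the standard subspace $K=\overline{\mathcal{B}_{sa}\ket{\Omega}}$, does \emph{not} determine a von Neumann algebra in general. Concretely, for a finite type I factor $M$ with $\ket{\Omega}$ the maximally entangled (tracial) vector one has $\Delta_M=\Delta_{M'}=1$ and $J_M=J_{M'}$, hence $S_M=S_{M'}$ and indeed $\overline{M_{sa}\ket{\Omega}}=\overline{M'_{sa}\ket{\Omega}}$, while $M\neq M'$. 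So Rieffel--van Daele recovers $K$ from $S_\mathcal{B}$, but the passage from $K$ back to an algebra is not canonical; moreover your proposed sanity check fails in this example, since the flip unitary fixes $\ket{\Omega}$ and commutes with $S_M$ yet conjugates $M$ onto $M'$, and the route through $J_\mathcal{B}\mathcal{B}J_\mathcal{B}=\mathcal{B}'$ is circular because it presupposes the algebra you are trying to reconstruct. What must rescue the theorem is precisely the data your sketch never uses: that $\mathcal{B}$ sits inside the \emph{fixed} algebra $\mathcal{A}$ for which $\ket{\Omega}$ is separating, together with the structural properties of the characteristic function beyond its pointwise values (the cocycle identity and the inclusion-preserving property $\mathcal{D}(t)\mathcal{D}(-i/2)^*\mathcal{A}\mathcal{D}(-i/2)\mathcal{D}(t)^*\subset\mathcal{A}$). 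An honest proof of injectivity must show that two subalgebras $\mathcal{B}_1,\mathcal{B}_2\subset\mathcal{A}$, each with $\ket{\Omega}$ cyclic and with identical modular data, coincide, using the inclusion in $\mathcal{A}$ in an essential way; your argument does not do this, and in the end you defer to \cite{revolutionizing} for exactly the step that carries all the content --- which, to be fair, is also all the paper does.
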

Again for details, we refer the readers to \cite{revolutionizing}. The key point is that the characteristic function fully characterizes inclusions of von Neumann algebras with common cyclic vectors.

\subsubsection{Relative characteristic functions of von Neumann algebra Inclusions}\label{subsubsection:relativecharacteristicfunction}
Under our assumptions, consider an inclusion of two type III von Neumann algebras $\mathcal{B}\subset\mathcal{A}$ where both algebras act on a separable Hilbert space $\mathcal{H}$. Since the set of common cyclic and separating vectors is dense in $\mathcal{H}$, we can fix two common cyclic and separating states $\Phi$ and $\Psi$. Let $\ket{\Omega_\Phi}$ be the vector representing $\Phi$ and let $\ket{\Omega_\Psi}$ be the vector representing $\Psi$. Instead of modular operators, we consider relative modular operators $\Delta_{\Phi|\Psi,\mathcal{A}}$ and $\Delta_{\Phi|\Psi,\mathcal{B}}$. On the dense subspace $\mathcal{B}\ket{\Omega_\Psi}$, we have the simple identity:
\begin{equation}
    J_\mathcal{A}\Delta^{1/2}_{\Phi|\Psi,\mathcal{A}}\mathcal{O}_\mathcal{B}\ket{\Omega_\Psi} = \mathcal{O}_\mathcal{B}^*\ket{\Omega_\Phi} = J_\mathcal{B}\Delta^{1/2}_{\Phi|\Psi,\mathcal{B}}\mathcal{O}_\mathcal{B}\ket{\Omega_\Psi}
\end{equation}
where $\mathcal{O}_\mathcal{B}$ can be any operator in $\mathcal{B}$. Hence by the same argument used to show that the characteristic function is well-defined, the following operator is well-defined and closable for $0\leq \alpha\leq 1/2$:
\begin{equation}
    \Delta^\alpha_{\Phi|\Psi,\mathcal{A}}\Delta^{-\alpha}_{\Phi|\Psi,\mathcal{B}}
\end{equation}
Again these operators satisfy the following monotone relation:
\begin{equation}
    \Delta_{\Phi|\Psi,\mathcal{A}}^\alpha \Delta_{\Phi|\Psi,\mathcal{B}}^{-2\alpha}\Delta_{\Phi|\Psi,\mathcal{A}}^\alpha\leq 1
\end{equation}
Using the following identity:
\begin{equation}
    J_\mathcal{A}\Delta_{\Phi|\Psi,\mathcal{A}}J_\mathcal{A} = \Delta_{\Psi|\Phi,\mathcal{A}}^{-1}
\end{equation}
We can show the following identity:
\begin{align}
    \begin{split}
        J_\mathcal{A}\Delta_{\Phi|\Psi,\mathcal{A}}^{it+1/2}\Delta_{\Phi|\Psi,\mathcal{B}}^{-it-1/2}J_\mathcal{B} &= \Delta_{\Psi|\Phi,\mathcal{A}}^{it}J_\mathcal{A}\Delta_{\Phi|\Psi,\mathcal{A}}^{1/2}\Delta_{\Phi|\Psi,\mathcal{B}}^{-1/2}J_\mathcal{B}\Delta_{\Psi|\Phi,\mathcal{B}}^{-it} = \Delta_{\Psi|\Phi,\mathcal{A}}^{it}\Delta_{\Psi|\Phi,\mathcal{B}}^{-it}
    \end{split}
\end{align}

We can now define the relative characteristic function of the inclusion $\mathcal{B}\subset\mathcal{A}$:
\begin{definition}
    Given a pair of von Neumann algebras $\mathcal{B}\subset\mathcal{A}$ and two common cyclic and separating vectors $\ket{\Omega_\Phi}\in\mathcal{H}$ and $\ket{\Omega_\Psi}\in\mathcal{H}$, the relative characteristic function is defined as:
    \begin{equation}
        \mathcal{D}_{\Phi|\Psi,\mathcal{A},\mathcal{B}}(t) := \Delta_{\Phi|\Psi,\mathcal{A}}^{it}\Delta_{\Phi|\Psi,\mathcal{A},\mathcal{B}}^{-it}
    \end{equation}
\end{definition}
The discussions above already shown that the relative characteristic function is unitary for $t\in\mathbb{R}$ and for $t-i/2\in\mathbb{R}-i/2$. And the relative characteristic function can be extended to $\mathcal{D}_{\Phi|\Psi,\mathcal{A},\mathcal{B}}(z)$ where $\Im(z) \in \mathcal{S}(0,-1/2)$. Then the same proof of Lemma \ref{lemma:characteristic} can be used to show that the following:
\begin{lemma}
    Given a pair of von Neumann algebras $\mathcal{B}\subset\mathcal{A}$ with a pair of common cyclic and separating vectors $\ket{\Omega_\Phi}$ and $\ket{\Omega_\Psi}$, then the relative characteristic function $\mathcal{D}_{\Phi|\Psi,\mathcal{A},\mathcal{B}}(t)$ can be extended to the strip $\mathcal{S}(0,-1/2)$. The extension is a bounded analytic function on $\mathcal{S}(0,-1/2)$.
\end{lemma}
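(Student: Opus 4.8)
The plan is to follow verbatim the strategy used for the non-relative characteristic function in Lemma \ref{lemma:characteristic}, replacing the modular operators $\Delta_\mathcal{A},\Delta_\mathcal{B}$ by the relative modular operators $\Delta_{\Phi|\Psi,\mathcal{A}},\Delta_{\Phi|\Psi,\mathcal{B}}$ throughout, and checking that each ingredient survives the substitution. The two facts already established just above the statement are exactly the ones that drive the argument: the operator inequality $\Delta_{\Phi|\Psi,\mathcal{B}}\ge\Delta_{\Phi|\Psi,\mathcal{A}}$ (so that operator monotonicity gives $\Delta_{\Phi|\Psi,\mathcal{B}}^\alpha\ge\Delta_{\Phi|\Psi,\mathcal{A}}^\alpha$, hence the domain inclusion $\mathcal{D}(\Delta_{\Phi|\Psi,\mathcal{B}}^\alpha)\subset\mathcal{D}(\Delta_{\Phi|\Psi,\mathcal{A}}^\alpha)$ for $0\le\alpha\le1$), and the midpoint bound $\Delta_{\Phi|\Psi,\mathcal{A}}^\alpha\Delta_{\Phi|\Psi,\mathcal{B}}^{-2\alpha}\Delta_{\Phi|\Psi,\mathcal{A}}^\alpha\le1$ for $0\le\alpha\le1/2$.

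First I would pin down the two boundary lines. On $\Im z=0$ the function $\mathcal{D}_{\Phi|\Psi,\mathcal{A},\mathcal{B}}(t)=\Delta_{\Phi|\Psi,\mathcal{A}}^{it}\Delta_{\Phi|\Psi,\mathcal{B}}^{-it}$ is a product of two unitary groups, hence unitary and strongly continuous in $t$. On $\Im z=-1/2$ I would invoke the identity derived just above the statement, $J_\mathcal{A}\Delta_{\Phi|\Psi,\mathcal{A}}^{it+1/2}\Delta_{\Phi|\Psi,\mathcal{B}}^{-it-1/2}J_\mathcal{B}=\Delta_{\Psi|\Phi,\mathcal{A}}^{it}\Delta_{\Psi|\Phi,\mathcal{B}}^{-it}$. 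The right-hand side is unitary (a product of unitary groups for the swapped relative operators), and conjugation by the antiunitaries $J_\mathcal{A},J_\mathcal{B}$ preserves unitarity, so $\mathcal{D}_{\Phi|\Psi,\mathcal{A},\mathcal{B}}(t-i/2)$ is unitary and strongly continuous as well. This replaces the role that $\mathcal{D}_{\mathcal{A},\mathcal{B}}(t-\tfrac{i}{2})=J_\mathcal{A}\mathcal{D}_{\mathcal{A},\mathcal{B}}(t)J_\mathcal{B}$ played in the non-relative case; the only change is the interchange $\Phi\leftrightarrow\Psi$, which is harmless since both sides have norm one.

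The interior is the substantive part. For $\xi$ in the dense set of analytic vectors of $\Delta_{\Phi|\Psi,\mathcal{B}}$ (spectral-subspace vectors on which all complex powers act), I would define the vector-valued map $F_\xi(z)=\Delta_{\Phi|\Psi,\mathcal{A}}^{iz}\Delta_{\Phi|\Psi,\mathcal{B}}^{-iz}\xi$ and argue it is well-defined on the whole closed strip: writing $z=t-i\alpha$ with $\alpha\in[0,1/2]$, the factor $\Delta_{\Phi|\Psi,\mathcal{B}}^{-iz}$ produces a vector in $\mathcal{D}(\Delta_{\Phi|\Psi,\mathcal{B}}^\alpha)\subset\mathcal{D}(\Delta_{\Phi|\Psi,\mathcal{A}}^\alpha)=\mathcal{D}(\Delta_{\Phi|\Psi,\mathcal{A}}^{iz})$, so the second factor may be applied. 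Using the spectral theorem, $z\mapsto\langle\eta,F_\xi(z)\rangle$ is holomorphic in the open strip and continuous up to the boundary for every $\eta\in\mathcal{H}$. I would then apply the three-lines (Hadamard/Phragm\'en--Lindel\"of) theorem on $\mathcal{S}(0,-1/2)$: the boundary estimates above give $|\langle\eta,F_\xi(z)\rangle|\le\|\eta\|\,\|\xi\|$ on both edges, hence the same bound throughout the interior. This yields $\|\mathcal{D}_{\Phi|\Psi,\mathcal{A},\mathcal{B}}(z)\|\le1$ on a dense domain, so $\mathcal{D}_{\Phi|\Psi,\mathcal{A},\mathcal{B}}(z)$ extends to a bounded operator of norm at most one, and weak analyticity of $\langle\eta,F_\xi(z)\rangle$ combined with this uniform bound upgrades to bounded analyticity of the operator-valued extension.

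The main obstacle I anticipate is not any single estimate but the careful domain bookkeeping needed to make $F_\xi(z)$ genuinely well-defined and weakly analytic across the strip, since $\Delta^{iz}$ is unbounded for non-real $z$; the inequality $\Delta_{\Phi|\Psi,\mathcal{B}}\ge\Delta_{\Phi|\Psi,\mathcal{A}}$ is precisely what guarantees that the two unbounded factors compose to something bounded at the midpoints, and it is the linchpin that lets the three-lines argument close. Everything else is a transcription of the proof of Lemma \ref{lemma:characteristic}, with $\Delta_{\Psi|\Phi,\mathcal{A}}$ and $\Delta_{\Psi|\Phi,\mathcal{B}}$ appearing in place of the plain modular-conjugation relation on the lower boundary.
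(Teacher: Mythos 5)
Your proposal is correct and follows essentially the same route as the paper: the paper establishes exactly the preliminary ingredients you cite (the extension identity on $\mathcal{B}\ket{\Omega_\Psi}$ giving $\Delta_{\Phi|\Psi,\mathcal{B}}\geq\Delta_{\Phi|\Psi,\mathcal{A}}$, the bound $\Delta_{\Phi|\Psi,\mathcal{A}}^\alpha\Delta_{\Phi|\Psi,\mathcal{B}}^{-2\alpha}\Delta_{\Phi|\Psi,\mathcal{A}}^\alpha\leq 1$, and the $J_\mathcal{A}(\cdot)J_\mathcal{B}$ identity relating the lower boundary to the swapped states $\Psi|\Phi$) and then declares that the proof of the non-relative Lemma carries over verbatim. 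Your write-up simply makes explicit the three-lines/Phragm\'en--Lindel\"of step and the domain bookkeeping that the paper leaves implicit by reference to the non-relative case.
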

We do not need other properties of relative characteristic function in this work, so we postpone a more complete discussion to a later work.
\subsection{$L^2$-Nuclearity Condition}\label{subsection:l2nuclearity}
We are now ready to present the definition of $L^2$-nuclearity.
\begin{definition}
    Given a pair of von Neumann algebras $\mathcal{B}\subset\mathcal{A}$ with a common cyclic and separating vector $\ket{\Omega}\in\mathcal{H}$, $\mathcal{D}_{\mathcal{A}, \mathcal{B}}(-i/4) = \Delta_\mathcal{A}^{1/4}\Delta_\mathcal{B}^{-1/4}$ is a well-defined map between the standard Hilbert spaces:
    \begin{equation}
        \Delta_\mathcal{A}^{1/4}\Delta_\mathcal{B}^{-1/4}: L_2(\mathcal{B})\rightarrow L_2(\mathcal{A})
    \end{equation} 
    If $\Delta_\mathcal{A}^{1/4}\Delta_\mathcal{B}^{-1/4}$ is a nuclear operator, then we say the triplet $(\mathcal{B}\subset\mathcal{A}, \ket{\Omega})$ satisfies the \textit{$L^2$-nuclearity condition}.
\end{definition}
For a map between two separable Hilbert spaces, it is nuclear if and only if it is trace-class. For maps between general Banach spaces, the notion of a nuclear operator generalizes the notion of a trace-class operator \cite{nuclearandsumming}.

For algebraic quantum field theories, one of the most important consequences of $L^2$-nuclearity is the following observation \cite{buchholz2007nuclearity}:
\begin{theorem}
    Given a pair of von Neumann algebras $\mathcal{B}\subset\mathcal{A}$ with a common cyclic and separating vector $\ket{\Omega}$, if it satisfies $L^2$-nuclearity condition, then the inclusion is split. More specifically, we have isomorphism:
    \begin{equation}
        \mathcal{B}'\vee\mathcal{A}\cong\mathcal{B}'\otimes\mathcal{A}
    \end{equation}
\end{theorem}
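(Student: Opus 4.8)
The plan is to prove the implication in two stages: first reduce $L^2$-nuclearity to the nuclearity of a map from the Banach space $\mathcal{B}$ into the Hilbert space $\mathcal{H}$ (modular nuclearity), and then extract from that an intermediate type I factor. For the first stage I would factor the modular map through the canonical embedding of the algebra into its standard $L_2$-space. Concretely, consider
\[
\Xi:\mathcal{B}\to\mathcal{H},\qquad \Xi(b)=\Delta_\mathcal{A}^{1/4}\,b\ket{\Omega},
\]
and write $\Xi=\big(\Delta_\mathcal{A}^{1/4}\Delta_\mathcal{B}^{-1/4}\big)\circ\iota$, where $\iota:\mathcal{B}\to L_2(\mathcal{B})$, $\iota(b)=\Delta_\mathcal{B}^{1/4}b\ket{\Omega}$, is the canonical $L_\infty$-into-$L_2$ inclusion. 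The map $\iota$ is a contraction: by KMS analyticity, $z\mapsto \Delta_\mathcal{B}^{z}b\ket{\Omega}$ is bounded on the strip with boundary values controlled by $\|b\|$ and $\|b^\ast\|$, so $\|\Delta_\mathcal{B}^{1/4}b\ket{\Omega}\|\le\|b\|$; the inclusion $\mathcal{D}(\Delta_\mathcal{B}^{1/4})\subset\mathcal{D}(\Delta_\mathcal{A}^{1/4})$ following from $\Delta_\mathcal{B}\ge\Delta_\mathcal{A}$ guarantees the composition is meaningful. Since $\Delta_\mathcal{A}^{1/4}\Delta_\mathcal{B}^{-1/4}$ is trace-class by the hypothesis of $L^2$-nuclearity, and the composition of a bounded map with a nuclear (trace-class) map is nuclear, $\Xi$ is a nuclear map from $\mathcal{B}$ to $\mathcal{H}$. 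This is precisely the passage from $L^2$-nuclearity to modular nuclearity alluded to in the introduction.

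The second and essential stage is to manufacture the split from the nuclearity of $\Xi$. Writing a nuclear decomposition $\Xi(\cdot)=\sum_n \ell_n(\cdot)\,\xi_n$ with $\ell_n\in\mathcal{B}^\ast$, $\xi_n\in\mathcal{H}$ and $\sum_n\|\ell_n\|\,\|\xi_n\|<\infty$, I would arrange, using the separability of the predual and the strong approximation of algebra elements by invertibles recorded in the density lemmas above, for the functionals $\ell_n$ to be normal. The target is a normal product state: using the KMS relation for $\Delta_\mathcal{A}$ together with $J_\mathcal{A}$, one rewrites, for $b\in\mathcal{B}$ and $a'\in\mathcal{A}'$, the matrix element $\langle\Omega\,|\,b\,a'\,\Omega\rangle$ in terms of inner products of vectors of the form $\Delta_\mathcal{A}^{1/4}b\ket{\Omega}$, i.e.\ in terms of $\Xi$. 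The summability of the decomposition then bounds the candidate product functional $b\otimes a'\mapsto \omega(b)\,\omega(a')$ in the spatial (minimal) tensor norm on the algebraic tensor product of the commuting pair $\mathcal{B}$ and $\mathcal{A}'$, so that the natural product homomorphism $b\otimes a'\mapsto b a'$ extends to a normal spatial isomorphism onto $\mathcal{B}\vee\mathcal{A}'$. By the Doplicher--Longo characterization this factorization is exactly the split property, with the intermediate type I factor given by the image of $\mathcal{B}\otimes\mathbf{1}$; taking commutants yields the isomorphism recorded in the statement.

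The main obstacle is the second stage, and within it the step from a merely summable decomposition of $\Xi$ to a genuinely \emph{normal, factorizing} state that is continuous in the spatial tensor norm. Two points need care: ensuring the $\ell_n$ can be taken normal without spoiling summability, and controlling the cross terms so that the limiting functional truly factorizes rather than merely remaining bounded. This is where I expect to invoke the boundedness and analyticity of the characteristic function on the strip $\mathcal{S}(0,-1/2)$ (Lemma \ref{lemma:characteristic}) together with the operator inequality $\Delta_\mathcal{B}\ge\Delta_\mathcal{A}$, interpolating between the unitary boundary data at $t\in\mathbb{R}$ and the trace-class data at $t=-i/4$ to get the needed estimates. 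The first stage is routine once the factorization is in place; following \cite{buchholz2007nuclearity}, essentially all of the analytic content is concentrated in constructing the type I factor from the nuclear map.
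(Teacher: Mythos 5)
The paper does not actually prove this theorem: it is quoted verbatim from \cite{buchholz2007nuclearity}, so there is no in-paper argument to compare against. Your two-stage route is exactly the one in that reference and its predecessor (Buchholz--D'Antoni--Longo, \emph{Nuclear maps and modular structures II}): stage one, the factorization $\Delta_\mathcal{A}^{1/4}b\ket{\Omega}=\bigl(\Delta_\mathcal{A}^{1/4}\Delta_\mathcal{B}^{-1/4}\bigr)\bigl(\Delta_\mathcal{B}^{1/4}b\ket{\Omega}\bigr)$ together with the three-lines bound $\|\Delta_\mathcal{B}^{1/4}b\ket{\Omega}\|\le\|b\|$, is complete and correct as written. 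Stage two (modular nuclearity implies split) is the genuinely hard theorem, and you have only sketched it; the two difficulties you flag yourself --- arranging the functionals $\ell_n$ in the nuclear decomposition to be normal while keeping summability, and showing the resulting functional on $\mathcal{B}\vee\mathcal{A}'$ actually factorizes and then invoking the D'Antoni--Longo characterization to get the intermediate type I factor --- are precisely the content of the cited literature and are not filled in here. So the proposal is the right proof strategy, matching the source the paper relies on, but as submitted it is an outline of stage two rather than a proof of it. (Minor point in your favor: you correctly work with the commuting pair $\mathcal{B}$ and $\mathcal{A}'$ and the isomorphism $\mathcal{B}\vee\mathcal{A}'\cong\mathcal{B}\otimes\mathcal{A}'$, which is the standard formulation; the paper's displayed isomorphism $\mathcal{B}'\vee\mathcal{A}\cong\mathcal{B}'\otimes\mathcal{A}$ pairs two algebras that do not commute and appears to be a typo.)
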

For other important consequences of $L^2$-nuclearity, we refer the readers to \cite{buchholz2007nuclearity}. 

\subsubsection{$L^2$-Nuclearity and discrete spectrum in CFT}\label{subsubsection:discrete}
In the context of algebraic quantum field theory, $L^2$-nuclearity is usually proved using representation theory. For example, in the context of massless Lorentzian conformal field theories of dimension $d + 1$, the global conformal group $SO(d + 1, 2)$ admits (holomorphic \footnote{For our discussion on the discreteness of the spectrum of conformal Hamiltonian, holomorphicity is not essential. However, it is important when constructing modular conjugation.})  discrete series representations \cite{knapp}. By the well-known Harish-Chandra rank conditions \cite{knapp}, the existence of discrete series representations is equivalent to the condition that the global conformal symmetry group admits a compact Cartan subgroup. By definition, this compact Cartan subgroup is compact and Abelian (i.e. a torus) \footnote{It should be noted that such a (maximal) torus is not unique. In fact, any semi-simple Lie group admits many maximal tori, all of which are conjugate to one another. For our discussion, we fix a maximal torus.}. By the general representation theory of compact Abelian groups, any discrete series representation of the global conformal symmetry group can be decomposed into a direct sum of one-dimensional irreducible representations of the chosen compact Cartan subgroup. Hence the operators in the compact Cartan subgroup have discrete spectra and the direct sum decomposition is the eigen-space decomposition of these operators \footnote{The operators in the compact Cartan subgroup are unitary. The corresponding generators are unbounded self-adjoint operators, whose spectra are also discrete.}\footnote{The eigen-space decomposition exists because the generators of the unitaries in the compact Cartan subgroup have discrete spectra.}. Using root space decomposition, one can construct raising/lowering operators for each generator of the compact Cartan subgroup. Then it is not difficult to see that any irreducible discrete series representation must have lowest (or highest) weights. In turn, this shows that the generators of the compact Cartan subgroup must also have spectra bounded from below (or above). In physics, we typically assume the spectra are bounded from below.

As an example, the global conformal symmetry group $SL(2,\mathbb{R})$ of a conformal field theory on a null light ray (i.e. half of the degrees of freedom in a Lorentzian conformal field theory in $1+1$-dimension) admits a compact Cartan subgroup $SO(2)\subset SL(2, \mathbb{R})$. The generator of this $SO(2)$ is the conformal Hamiltonian, which admits discrete spectrum. As another higher-dimensional example, the global conformal symmetry group $SO(d+1 , 2)$ of a conformal field theory on $\mathbb{R}^{d,1}$ admits a compact Cartan subgroup isomorphic to $SO(d+1)\times SO(2)$. The generators of this Lie group forms a $1 + \frac{(d+1)d}{2}$ dimensional Lie algebra and all generators have discrete spectra. In both examples, the existence of compact Cartan subgroups ensures the existence of lowest (or highest) weight in a unitary irreducible representation.

\subsection{Nuclear norm}\label{subsection:defnuclear}
In this subsection, we briefly review the definition of nuclear norm. Nuclear norm of an operator between two Banach spaces generalizes the notion of trace norm of an operator between two Hilbert spaces \cite{nuclearandsumming}. 
\begin{definition}
    Let $T:X\rightarrow Y$ be an operator between two Banach spaces $X,Y$. The nuclear norm is defined as:
    \begin{equation}
        \nu(T) := \inf\{\sum_{i\in I}||f_i||\cdot||y_i||: T = \sum_{i\in I}y_if_i\text{ , }y_i\in Y\text{ , }f_i\in X^*\}
    \end{equation}
    where the infimum is taken over all such decompositions of $T$. 
\end{definition}
It is clear that not all operators between two Banach spaces have finite nuclear norm. Those operators with finite nuclear norms are called nuclear operators. Notice that in our definition, we do not require $X$ or $Y$ to be separable Banach space. This is important for application to algebraic quantum field theory because any infinite dimensional von Neumann algebra (e.g. algebra of local observables) is non-separable in norm topology. 

If both $X$ and $Y$ are finite dimensional, then nuclear norm is simply the sum of the absolute values of the singular values of $T$.

\section{Conformal field theory calculation}\label{app:CFT}
\subsection{Conformal symmetry on Lorentzian cylinder}
Consider a conformal field theory in $\mathbb{R}^{d,1}$ embedded in the Lorentzian cylinder $\mathbb{R}_t\times S^d$. The universal cover of the conformal group $\widetilde{SO(d+1,2)}$ acts on the Lorentzian cylinder. It is convenient to express the conformal Lie algebra $so(d+1,2)$ using the embedding coordinates $X^A=(X^{-1},X^0,X^i,X^{d+1})$ with the metric 
\begin{eqnarray}
    ds^2=g_{AB}dX^AdX^B=-(X^{-1})^2-(X^{0})^2+(X^i)^2+(X^{(d+1)})^2
\end{eqnarray}
\begin{eqnarray}
    &&[J_{AB},J_{CD}]=-i\lb g_{AC}J_{BD}+g_{BD}J_{AC}-g_{AD}J_{BC}-g_{BC}J_{AD}\rb\ .
\end{eqnarray}
As matrices we can represent the algebra generators as
\begin{eqnarray}
    (J_{AB})^C_{\: D}=i(\delta_A^{\:\: C}g_{BD}-\delta_B^{\:\: C}g_{AD})\ .
\end{eqnarray}
We have
\begin{eqnarray}
    J_{(-1)(d+1)}=D,\qquad J_{(-1)\mu}=\frac{-1}{2}(P_\mu+K_\mu),\qquad J_{\mu(d+1)}=\frac{-1}{2}\lb P_\mu-K_\mu\rb
\end{eqnarray}
where $P^\mu$ are translations, and $K^\mu$ are special conformal transformations. Note that in our notation $J_{(-1)0}=\frac{-1}{2}(P_0+K_0)$ generates global time translations in the Lorentzian cylinder.

The coordinates $x^\mu\in \mathbb{R}^{d,1}$ in the embedding coordinates take the form 
\begin{eqnarray}
    X(x)=\lb \frac{1+x^2}{2},x^\mu, \frac{1-x^2}{2}\rb
\end{eqnarray}
This makes sure that $X^+=X^{-1}+X^{d+1}=1$. To apply a conformal transfomation we apply a $SO(d,2)$ matrix, and project again to the surface at $X^+=1$:
\begin{eqnarray}
    (x')^\mu=\frac{(\Lambda\cdot X(x))^\mu}{(\Lambda\cdot X(x))^{-1}+(\Lambda\cdot X(x))^{(d+1)}}\ .
\end{eqnarray}
We can check explicitly that in our notation we have
\begin{eqnarray}
    &&\text{Boost}:\qquad \Lambda=e^{2\pi i s J_{01}}:\qquad 
\begin{cases}
(x')^0=\cosh(2\pi s)x^0-\sinh(2\pi s)x^1\\
(x')^1=\cosh(2\pi s)x^1-\sinh(2\pi s)x^0
\end{cases}\nn\\
    &&\text{Dilatation}:\qquad \Lambda=e^{2\pi i sJ_{-1(d+1)}}:\qquad (x')^\mu=e^{2\pi s} x^\mu\nn\\
    &&\text{Translation}:\qquad \Lambda=e^{ i a^\mu P_\mu}:\qquad (x')^\mu=x^\mu+a^\mu\nn\\
    &&\text{Special conf. transf.}:\qquad \Lambda=e^{i a^\mu K_\mu}:\qquad (x')^\mu=\frac{x^\mu+a^\mu x^2}{1+2a_\mu x^\mu+a^2 x^2}\ .
\end{eqnarray}

We are interested in two particular $so(2,1)\simeq psl(2,R)$ subalgebras. 
\paragraph{Case I $(X^{0},X^1,X^{d+1})$:} This subalgebra $psl(2,\mathbb{R})$ is generated by $\{J_{01},J_{0(d+1)},J_{1(d+1)}\}$ with commutators 
\begin{eqnarray}
    &&[J_{01},J_{1(d+1)}]=iJ_{0(d+1)}\nn\\
    &&[J_{01},J_{0(d+1)}]=iJ_{1(d+1)}\nn\\
    &&[J_{0(d+1)},J_{1(d+1)}]=-iJ_{01}\ . 
\end{eqnarray}
This is the same as 
\begin{eqnarray}
    [J_{01},(P_1-K_1)]=i(P_0-K_0),\qquad  [J_{01},(P_0-K_0)]=i(P_1-K_1),\qquad [P_0-K_0,P_1-K_1]=-4i J_{01}
\end{eqnarray}
where $J_{01}$ is the generator of boost in $X^1$-direction. Defining null momenta and special conformal transformations $P_\pm=\frac{1}{2}(P_0\pm P_1)$ and $K_\pm=\frac{1}{2}(K_0\pm K_1)$ we can write these equations as 
\begin{eqnarray}
    &&[J_{01},P_\pm -K_\pm]=\pm i(P_\pm-K_\pm)\nn\\
    &&[P_+-K_+,P_--K_-]=2i J_{01}\ .
\end{eqnarray}
It is convenient to define 
\begin{eqnarray}
&&L_0^{(I)}\equiv J_{01}\nn\\
&&L_\pm^{(I)}=-(J_{0(d+1)}\pm J_{1(d+1)})=(P_\pm-K_\pm)
\end{eqnarray}
 so that the commutation relations become 
\begin{eqnarray}
    [L^{(I)}_0,L^{(I)}_\pm]=\pm i L^{(I)}_\pm,\qquad [L_+^{(I)},L_-^{(I)}]=2i L_0^{(I)} \ .
\end{eqnarray}
This subalgebra sends the worldline of any null pencil in the $X^1$ direction that passes through the origin to itself.

Start with the Rindler wedge $W$, and its modular Hamiltonian $H_W= 2\pi J_{01}$ generating boosts in $X^1$ direction. The rotation by $\pi/2$ in the $(X^0,X^{d+1})$ plane maps the Rindler wedge $W$ to the causal development of the ball $B$:
\begin{eqnarray}
    &&(x')^{\mu\neq 1}=\frac{x^\mu}{x^1+\frac{1}{2}(x^2+1)}\nn\\
    &&(x')^1=\frac{(x^2-1)/2}{x^1+\frac{1}{2}(x^2+1)}
\end{eqnarray}
To see that this map sends the Rindler wedge (points with $x^1>0$ and $x^2>0$) to the interior of the ball, we compute 
\begin{eqnarray}
    1-(x')^2=\frac{4x^1}{1+x^2+2x^1}\geq 0\ .
\end{eqnarray}

Then, the modular flow of $B$ is generated by
\begin{eqnarray}
    &&H_B=e^{-i J_{1(d+1)}\pi/2}(2\pi J_{01})e^{i J_{1(d+1)}\pi/2}=-2\pi J_{0(d+1)}=\pi(P_0-K_0)\nn\\
    &&J_{1(d+1)}=\frac{1}{2}(K_1-P_1)
\end{eqnarray}
More explicitly, one can show that the modular flow of the ball is 
\begin{eqnarray}
    \alpha_s(x)^\pm=\frac{(1+x^\pm)-e^{-2\pi s}(1-x^\pm)}{(1+x^\pm)+e^{-2\pi s}(1-x^\pm)}
\end{eqnarray}
where $x^\pm=x^0\pm r$ with $r^2=x_ix_i$. It is straightforward to check that 
\begin{eqnarray}
    \Delta_B^{-i s}H_W\Delta_B^{is}=2\pi(\sinh(2\pi s)J_{1(d+1)}+\cosh(2\pi s)J_{01})
\end{eqnarray}
which implies
\begin{eqnarray}
    &&\Delta_B^{1/4}H_W \Delta_B^{-1/4}=-i2\pi J_{1(d+1)}\nn\\
    &&\Delta_B^{1/4}\Delta_W^{is}\Delta_B^{-1/4}=e^{-2\pi s J_{1(d+1)}}\\
    &&\Delta_W^{1/4}\Delta_B^{is}\Delta_W^{-1/4}=e^{2\pi s J_{1(d+1)}}
\end{eqnarray}
where $\Delta_W^{is} = e^{-is H_W}$ in our notation. This is intuitive because $\Delta_B^{-1/4}$ acts as $(X^0,X^{d+1})\to (i X^{d+1},i X^0)$.

\paragraph{Case II $(X^{-1},X^0,X^{(d+1)})$:} This subalgebra $psl(2,\mathbb{R})$ is generated by $\{J_{(-1)0},J_{0(d+1)},J_{(-1)(d+1)}\}$ with commutation relations
\begin{eqnarray}\label{caseIIalg}
        &&[J_{(-1)(d+1)},J_{(-1)0}]=-iJ_{0(d+1)}\nn\\
        &&[J_{(-1)(d+1)},J_{0(d+1)}]=-iJ_{(-1)0}\nn\\
        &&[J_{0(d+1)},J_{(-1)0}]=iJ_{(-1)(d+1)}\ .        
\end{eqnarray}
Therefore, the Lie algebra above becomes
\begin{eqnarray}
    &&[(P_0\pm K_0),D]=i(P_0\mp K_0),\qquad [P_0,K_0]=2i D\ .
\end{eqnarray}

It is convenient to define 
\begin{eqnarray}
&&L_0^{(II)}\equiv -J_{(-1)(d+1)}=-D\nn\\
&&L_+^{(II)}=-(J_{(-1)0}+ J_{0(d+1)})=P_0\nn\\
&&L_-^{(II)}=-(J_{(-1)0}- J_{0(d+1)})=K_0
\end{eqnarray}
 so that the commutation relations become
\begin{eqnarray}
    &&[L^{(II)}_0,L^{(II)}_\pm]=\pm i L^{(II)}_\pm,\qquad [L_+^{(II)},L_-^{(II)}]=-2i L_0^{(II)} \nn\\
    &&[P_0,D]=iP_0,\qquad [K_0,D]=-iK_0,\qquad [P_0,K_0]=2i D\ .
\end{eqnarray}
This subalgebra sends the worldline of an observer sitting at rest at the origin to itself.

Start with the region $B$ and perform a rotation by $\pi/2$ in the $(X^{-1},X^{0})$ plane. This maps the double cone $B$ to the forward lightcone $F$. The modular flow of $F$ is generated by 
\begin{eqnarray}
  &&  H_F=e^{-i \pi J_{(-1)0}/2}(2\pi J_{0(d+1)})e^{i \pi J_{(-1)0}/2}=-2\pi J_{(-1)(d+1)}=-2\pi D\ .
\end{eqnarray}
This time, we find
\begin{eqnarray}\label{identityCFT}
    &&\Delta_B^{1/4}\Delta_F^{is}\Delta_B^{-1/4}:e^{-2\pi s J_{(-1)0}}
\end{eqnarray}
where $J_{(-1)0}$ generates global time-translations in the universal Lorentzian cylinder. To see this, recall that $J^{(-1)0}$ generates rotations in $(X^{-1},X^0)$ closed time-like circle of $SO(d+1,2)$. To obtain the universal cover, we decompactify this circle to get the global time of the Lorentzian cylinder.

More generally, we have 
\begin{eqnarray}
    \Delta_B^{it}\Delta_F^{is}\Delta_B^{-it}:e^{2\pi i s(-\sinh(2\pi t)J_{(-1)0}+\cosh(2\pi t)J_{(-1)(d+1)})}
\end{eqnarray}
Then, it follows that 
\begin{eqnarray}
    \Delta_B^{\alpha}\Delta_F^{is}\Delta_B^{-\alpha}=e^{-2\pi s(\sin(2\pi \alpha)J_{(-1)0}-i\cos(2\pi \alpha)J_{(-1)(d+1)})}
\end{eqnarray}
This is similar to the result that we found in Rindler wedges, except that as opposed to that case, since $J_{(-0,1)}$ and $J_{(-1)(d+1)}$ do not commute, the norm of the operator above does depend on $J_{(-1)(d+1)}$ unless we set $\alpha=1/4$.

\subsection{Characteristic function of balls in CFT}
From (\ref{identityCFT}) it follows that 
\begin{eqnarray}
    &&T\equiv\Delta_B^{1/4}\Delta_F^{-is}\Delta_B^{-1/4}\Delta_F^{is}\nn\\
    &&T T^\dagger =  e^{4\pi sJ_{(-1)0}}
\end{eqnarray}
which is a Euclidean global time evolution on the Lorentzian cylinder. Since $\Delta_F^{is}=e^{- i s K_F}$ is dilatation, it sends the ball-shaped region $B$ to $B_s$ defined by its future and past tips at $X^0=\pm e^{-2\pi s}$. As a result, we find
\begin{eqnarray}
    J_{(-1)0}=\frac{1}{2\pi s}\log|\Delta_B^{1/4}\Delta_{B_s}^{-1/4}| = \frac{1}{4\pi}\p_s(TT^\dagger)\Big|_{s=0}\ .
\end{eqnarray}
The modular Hamiltonian of $B_s$ is 
\begin{eqnarray}\label{KBs}
    K_{B_s}&&=-e^{2\pi i s D}(2\pi J_{0(d+1)})e^{-2\pi i s D}= 2\pi (-\cosh(2\pi s)J_{0(d+1)}+\sinh(2\pi s)J_{(-1)0})\nn\\
    &&=\cosh(2\pi s) K_B+\sinh(2\pi s)K_H
\end{eqnarray}
where $K_B$ is the modular Hamiltonian of the unit ball, $K_H = 2\pi H$ and $H$ is the generator of the global time-evolution of the Lorentzian cylinder.
More generally, we have
\begin{eqnarray}
    \Delta_{B_s}^{-it}\frac{K_F}{2\pi}\Delta_{B_s}^{it}=-\cosh(2\pi s)\sinh(2\pi t)J_{(-1)0}-\cosh(2\pi t)J_{(-1)(d+1)}+\sinh(2\pi s)\sinh(2\pi t)J_{0(d+1)}\nn
\end{eqnarray}
which implies
\begin{eqnarray}
    \Delta_{B_s}^{\alpha}\frac{K_F}{2\pi}\Delta_{B_s}^{-\alpha}=-i\cosh(2\pi s)\sin(2\pi \alpha)J_{(-1)0}-\cos(2\pi \alpha)J_{(-1)(d+1)}+i\sinh(2\pi s)\sin(2\pi \alpha)J_{0(d+1)}\nn\\
    \ .
\end{eqnarray}

We define a family of characteristic functions $T_{u,s}(z)$ as 
\begin{align}
    T_{u,s}(z) \equiv \Delta_{B_u}^{iz}\Delta_{B_{u+s}}^{-iz},
\end{align}
such that $T = T_{0,s}(-i/4)$. Since $T_{u,s}(z) =  \Delta_F^{-iu}\Delta_{B_0}^{iz}\Delta_{B_s}^{-iz} \Delta_F^{iu}$, we first consider the case $u=0$.

We consider $PSL(2,\mathbb{R})$ and $sl_2\mathbb{C}$ in terms of $2\times2$ matrix. The generators identified as
\begin{gather}
   J_{(-1)(d+1)} \sim \begin{pmatrix} 0 & -i/2 \\-i/2 & 0\end{pmatrix},\\
    J_{(-1)0} \sim \begin{pmatrix} 0 & i/2 \\-i/2 & 0\end{pmatrix},\\
    J_{0(d+1)}  \sim \begin{pmatrix}-i/2 & 0 \\ 0 & i/2\end{pmatrix}
\end{gather}
satisfy the algebra in (\ref{caseIIalg}). 
The group elements are
\begin{gather}
    \Delta_F^{is}=e^{2\pi i s J_{(-1)(d+1)}} \sim \exp (\pi s \begin{pmatrix}0 & 1 \\1 & 0\end{pmatrix}) = \begin{pmatrix}\cosh(\pi s) & \sinh(\pi s) \\ \sinh(\pi s) & \cosh(\pi s)\end{pmatrix},\\
      e^{2\pi i s J_{(-1)0}} \sim \exp (\pi s \begin{pmatrix}0 & -1 \\1 & 0\end{pmatrix}) = \begin{pmatrix}\cos(\pi s) & -\sin(\pi s) \\ \sin(\pi s) & \cos(\pi s)\end{pmatrix},\\
    \Delta_B^{iz}= e^{2\pi iz J_{0(d+1)}} \sim \exp (\pi z \begin{pmatrix}1 & 0 \\ 0 & -1\end{pmatrix}) = \begin{pmatrix} e^{\pi z} & 0 \\ 0 & e^{-\pi z}\end{pmatrix},\\
    \Delta_{B}^{iz}\Delta_{B_s}^{-iz} \sim \begin{pmatrix} \cosh^2(\pi s) - \sinh^2(\pi s) e^{2\pi z} & \cosh(\pi s)\sinh(\pi s) (1-e^{2\pi z}) \\\cosh(\pi s)\sinh(\pi s) (1-e^{-2\pi z}) & \cosh^2(\pi s) - \sinh^2(\pi s) e^{-2\pi z}\end{pmatrix}.
\end{gather}
By diagonalization or matrix decomposition, we can find the matrix corresponding to $\log \Delta_{B}^{iz}\Delta_{B_s}^{-iz}$. The result is
\begin{align}
    \Delta_{B}^{iz}\Delta_{B_s}^{-iz}  \sim& \begin{pmatrix} 1 &1 \\u_+ & u_-\end{pmatrix}\begin{pmatrix} A+\sqrt{A^2-1} &0 \\0 & A-\sqrt{A^2-1}\end{pmatrix}\begin{pmatrix} 1 &1 \\u_+ & u_-\end{pmatrix}^{-1}\\
    =& \exp\left[\begin{pmatrix} 1 &1 \\u_+ & u_-\end{pmatrix}\begin{pmatrix} \log(A+i\sqrt{1-A^2}) &0 \\0 & \log(A-i\sqrt{1-A^2})\end{pmatrix}\begin{pmatrix} 1 &1 \\u_+ & u_-\end{pmatrix}^{-1} \right]\\
    =& \exp\left[\frac{{\log(A+i\sqrt{1-A^2})}}{u
    _+-u_-} \begin{pmatrix} -(u_+ + u_-) & 2 \\ -2u_+u_- & u_++u_-\end{pmatrix} \right],
\end{align}
where
\begin{gather}
    A = \cosh^2 (\pi s) - \sinh^2 (\pi s) \cosh(2\pi z) = 1-2\sinh^2(\pi s)\sinh^2(\pi z),\\
    u_\pm = \frac{\sinh^2(\pi s)\sinh(2\pi z)  \pm i\sqrt{1-A^2}}{\cosh(\pi s) \sinh(\pi s) (1-e^{2\pi z})}.
\end{gather}
We can write
\begin{eqnarray}
    i\cos^{-1}(A) = \log(A+i\sqrt{1-A^2})
\end{eqnarray}
for $|A| < 1$. All matrix elements are real for $z\in\mathbb{R}$. The matrix log is not unique in general. We have used the small $z$ behaviour and continuity to determine $\log$. The expression cannot be extended to $A\leq -1$. When $A\to-1^+$, the expression diverges. The expression is restricted to the range $\sinh^2(\pi s)\sinh^2(\pi z)< 1$. A Lie group element can be interpreted as an exponential of the Lie algebra element only if the group element is close to the identity.

In terms of the generators, we have 
\begin{eqnarray}
    \log \Delta_{B}^{iz}\Delta_{B_s}^{-iz} &&=\frac{2i\log(A+i\sqrt{1-A^2})}{u_+-u_-} \left( (u_++u_-)J_{0(d+1)} + (u_+u_-+1)J_{(-1)(0)} + (u_+u_--1)J_{(-1)(d+1)}\right)\nn\\
    &&=\frac{i\log(A+i\sqrt{1-A^2})}{\pi(u_+-u_-)} \left( -(u_++u_-) K_B + (u_+u_-+1)K_H - (u_+u_--1) K_F\right)\nn\\
        &&=-\beta \lb\frac{1}{\pi}\p_sK_{B(s/2)} +\tanh(\pi z)K_F\rb\nn\\
        \beta&&=-\text{sign}(z)\frac{i\cos^{-1}(A)  \cosh{(\pi z)}}{\pi\sqrt{1-\sinh^2(\pi z)\sinh^2(\pi s)}} = -\frac{\cos^{-1}(A)}{\sqrt{1-A^2}} \frac{2i}{\pi}\sinh(\pi s)\sinh(\pi z)\cosh(\pi z)
\end{eqnarray}

For $z = -i/4$, we have the previous result 
\begin{align}
    \left|\Delta_{B}^{1/4}\Delta_{B_s}^{-1/4}\right|^2 =& \Delta_{B}^{1/4}\Delta_{B_s}^{-1/4}\Delta_{B_s}^{-1/4}\Delta_{B}^{1/4} = 
     \left[\Delta_{B}^{iz}\Delta_{B_s}^{-iz}\right]|_{z=-i/4}\left[\Delta_{B}^{iz}\Delta_{B_s}^{-iz}\right]^{-1}|_{z=i/4}\\
    &=\exp\left(-2\pi is \begin{pmatrix}0 & 1 \\-1 & 0\end{pmatrix}\right) = \exp(4\pi s J_{(-1)0}).
\end{align}
Therefore, we find 
\begin{eqnarray}
    K_H=2\pi J_{(-1)0}=\frac{1}{2}\frac{d}{ds} (TT^\dagger)\Big|_{s=0}.
\end{eqnarray}

For $z = -i/2$, we have $\Delta_{B}^{1/2}\Delta_{B_s}^{-1/2} = e^{-4\pi i s J_{(-1)(d+1)}}$. 
For general $u = -\frac{1}{2\pi}\log R$, we apply dilatation to get
\begin{eqnarray}
&&T_{u,s}(-i/4)=\Delta_{B_su}^{1/4}\Delta_{B_{u+s}}^{-1/4}\nn\\
&&\tilde{H}_{u}= \frac{1}{2}\frac{d}{ds}\log(T_{u,s}(-i/4)T_{u,s}^\dagger(-i/4))\Big|_{s=0} = \Delta_F^{-iu}K_H \Delta_F^{iu}\nn\\
&&=\cosh(2\pi u)K_H+\sinh(2\pi u)K_B=\frac{1}{2\pi}\frac{d}{du}K_{B_u}=-R\p_R K_{B(R)}\ .
\end{eqnarray}
In CFT, we can interpret $\tilde{H}_{u}= \Delta_F^{-iu}H \Delta_F^{iu}$ as the global Hamiltonian of the Lorentzian cylinder transformed under dilatation.
Combined with (\ref{KBs}), we have the following equation:
\begin{eqnarray}
    \begin{pmatrix}
        & \tilde{H}_u \\
        & -K_{B_u}
    \end{pmatrix}
    =2\pi\begin{pmatrix}
        &\cosh(2\pi u) & -\sinh(2\pi u)  \\
         & -\sinh(2\pi u)& \cosh(2\pi u)
    \end{pmatrix}
    \begin{pmatrix}
        J_{(-1)0}\\
        J_{0(d+1)}
    \end{pmatrix}
\end{eqnarray}
As a result, we find
    \begin{eqnarray}
    2\pi\begin{pmatrix}
    J_{(-1)0}\\
        J_{0(d+1)}
    \end{pmatrix}
    &&=\begin{pmatrix}
        &\cosh(2\pi u) & \sinh(2\pi u)  \\
         &\sinh(2\pi u)& \cosh(2\pi u)
    \end{pmatrix}
    \begin{pmatrix}
        & \tilde{H}_{u} \\
        & -K_{B_u}
    \end{pmatrix}\nn\\
    &&=\frac{-1}{2R}\begin{pmatrix}
        &1+R^2 & 1-R^2  \\
         &1-R^2& 1+R^2
    \end{pmatrix}
    \begin{pmatrix}
        & R\p_R K_{B(R)}\\
        & K_{B(R)}
    \end{pmatrix}\ .
\end{eqnarray}
On a constant time-slice of a CFT in the vacuum on a Lorentzian cylinder we have
\begin{eqnarray}
    K_{B(R)}=\int dr d\Omega_{d-2}\: \frac{(R^2-r^2)}{2R}T_{00}(x)
\end{eqnarray}
Then, it is straightforward to check that 
\begin{eqnarray}
&&2\pi J_{(-1)0}=-\frac{1}{2}\int dr d\Omega_{d-2}\:(1+r^2) T_{00}(x)=\int_{0}^\pi d\theta d\Omega_{d-2}\: T_{00}(\theta)
\end{eqnarray}
which is the generator of global time translations on the Lorentzian cylinder, because the embedding of Miknowski coordinates into the Lorentzian cylinder is given by the map
\begin{eqnarray}
    \tan\lb\frac{\tau\pm \theta}{2}\rb=t\pm r
\end{eqnarray}
which maps $t=0$ surface to $\tau=0$ with $\tan(\theta/2)=r$.

Finally, we consider the characteristic function for general $B(u)$ and $B(u+s)$. We have
\begin{eqnarray}
   \log T_{u,s}(z)=\log\Delta_{B_u}^{iz}\Delta_{B_{u+s}}^{-iz}=\Delta_F^{-iu} \log T_{0,s}(z) \Delta_F^{iu} 
\end{eqnarray}
Under the flow $\Delta_F^{-iu}$, this gives
\begin{eqnarray}
 &&\begin{pmatrix}
        & J_{(-1)0} \\
        & J_{0(d+1)}
    \end{pmatrix}
    \to \begin{pmatrix}
        &\cosh(2\pi u) & -\sinh(2\pi u)  \\
         & -\sinh(2\pi u)& \cosh(2\pi u)
    \end{pmatrix}
    \begin{pmatrix}
        J_{(-1)0}\\
        J_{0(d+1)}
    \end{pmatrix}\nn\\
    &&K_B\to  K_{B_u}=K_{B(R)}, \qquad K_H\to \tilde{H}_{u} =-R\p_R K_{B(R)}, \qquad K_F\to K_F.
\end{eqnarray}
Therefore, we have the following result
\begin{eqnarray}
    \log T_{0,s}&&=-\beta \lb\frac{1}{\pi}\p_sK_{B_{s/2}} +\tanh(\pi z)K_F\rb\nn\\
    \log T_{u,s}&&=-\beta \lb \frac{1}{\pi}(\p_s K_{B_{u+s/2}})+\tanh(\pi z)K_F\rb \nn\\
    &&= -\beta \lb \frac{1}{2\pi}(\p_u K_{B_{u+s/2}})+\tanh(\pi z)K_F\rb
\end{eqnarray}

We can also construct the global Hamiltonian $J_{(-1)0}$ through $T_{0,s}(z)$ with real $z$. We consider the limit $\pi s \ll 1$ and $ \sinh^2(\pi s)\sinh^2(\pi z)\ll 1$. We have
\begin{align}
    \log (\Delta_{B}^{iz}\Delta_{B_s}^{-iz}) =& \begin{pmatrix} 0& \pi s(1-e^{2\pi z}) \\ \pi s(1-e^{-2\pi z}) & 0 \end{pmatrix} + O((\pi s)^2) \\
    = &\: i 2 \pi s \sinh(2\pi z) J_{(-1)0} - i 4\pi s \sinh^2(\pi z)J_{(-1)(d+1)} + O((\pi s)^2).
\end{align}
By comparing the forward and backward flow, we have
\begin{align}
    \log (\Delta_{B}^{iz}\Delta_{B_s}^{-iz}) - \log (\Delta_{B}^{-iz}\Delta_{B_s}^{iz}) = i 4 \pi s \sinh(2\pi z) J_{(-1)0} + O((\pi s)^2).
\end{align}
Notice that it is not the same as $\log(T_{0,s}(z)(T_{0,s}(-z))^{-1}$. The two terms in the $\log$ function are not commuting. In fact, we can compute 
\begin{align}
    T_{0,s}(z)(T_{0,s}(-z))^{-1} &= \begin{pmatrix}\cosh^2(\pi s) - e^{4\pi z} \sinh^2(\pi s) & -\sinh(2\pi s)\sinh(2 \pi z) \\ \sinh(2\pi s)\sinh(2 \pi z) & \cosh^2(\pi s) - e^{-4\pi z} \sinh^2(\pi s) 
    \end{pmatrix} \\
    &=I - \sinh(2\pi s)\sinh(2\pi z)\begin{pmatrix}e^{2\pi z} \tanh(\pi s) & 1 \\-1 & -e^{-2\pi z}\tanh(\pi s)\end{pmatrix}.
\end{align}
The expression is valid in the sense of $2\times2$ matrix. We  have
\begin{align}
    &\log T_{0,s}(z)T_{0,s}(-z)^{-1}= i\kappa(\sinh(\pi s) \cosh(2\pi z) K_B + \cosh(\pi s) K_H),\\
    &\kappa = \frac{\log(C+i\sqrt{1-C^2})}{2\pi i\sqrt{1-C^2}} 4\sinh(\pi s) \sinh(2\pi z),\\
    &C = 1 -2 \sinh^2(\pi s)\sinh^2(2\pi z).
\end{align}
For small $s$ limit, we have
\begin{gather}
    \log (\Delta_{B}^{iz}\Delta_{B_s}^{-i2z}\Delta_{B}^{iz})= i 4 \pi s \sinh(2\pi z) J_{(-1)0} + O((\pi s)^2),\\
    \log (\Delta_{B}^{iz}\Delta_{B_s}^{-i2z}\Delta_{B}^{iz}) \approx \log (\Delta_{B}^{iz}\Delta_{B_s}^{-iz}) - \log (\Delta_{B}^{-iz}\Delta_{B_s}^{iz}) \approx i 4 \pi s \sinh(2\pi z) J_{(-1)0}.
\end{gather}
The two expressions are the same up to the first order of $\pi s$. If we consider the analytic continuation of $\log (\Delta_{B}^{iz}\Delta_{B_s}^{-i2z}\Delta_{B}^{iz})$ at $z = -i/4$, this gives $\log |T_{0,s}(-i/4)|^2 = -4 \pi s J_{(-1)0}$, which is exact without any higher order terms. 
\begin{lemma}[Approximation of the Global Hamiltonian]
    Consider a CFT in Minkowski space. For $B$ being the unit ball centered at the origin and $B_s$ being the ball centered at the origin with radius $e^{-2\pi s}$, the global Hamiltonian $J_{(-1)0}$ can be approximated by $T_{0,s}(z)(T_{0,s}(-z))^{-1}$ under the limit $0<\pi s \ll 1$, $ \sinh^2(\pi s)\sinh^2(\pi z)\ll 1$, and $z\in\mathbb{R}$. We can express it as 
    \begin{gather}
        \log (\Delta_{B}^{iz}\Delta_{B_s}^{-i2z}\Delta_{B}^{iz}) = i 4 \pi s \sinh(2\pi z) J_{(-1)0} + O((\pi s)^2),\\
        \log (\Delta_{B}^{iz}\Delta_{B_s}^{-iz}) - \log (\Delta_{B}^{-iz}\Delta_{B_s}^{iz}) = i 4 \pi s \sinh(2\pi z) J_{(-1)0} + O((\pi s)^2).
    \end{gather}
\end{lemma}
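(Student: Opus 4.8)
The plan is to reduce the entire statement to the faithful two-dimensional representation of the Case~II subalgebra $sl(2,\mathbb{R})=\langle J_{(-1)0},J_{0(d+1)},J_{(-1)(d+1)}\rangle$ already set up above, perform a first-order expansion in $\pi s$ there, and then lift the resulting Lie-algebra identity back to the Hilbert space. The reason this is legitimate rather than merely heuristic is that $\Delta_B^{iz}$ and $\Delta_{B_s}^{iz}$ are one-parameter subgroups generated by a single finite-dimensional algebra, and the $2\times2$ representation is faithful on $sl(2,\mathbb{R})$; hence any identity among Lie-algebra elements verified in that representation holds in the representation on the physical Hilbert space.

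First I would take the explicit matrix for $\Delta_B^{iz}\Delta_{B_s}^{-iz}$ recorded above and expand it to first order using $\cosh(\pi s)=1+O(s^2)$, $\sinh(\pi s)=\pi s+O(s^3)$, which collapses it to $I+M$ with $M$ off-diagonal of order $\pi s$. Since $M=O(\pi s)$, the matrix logarithm is simply $\log(I+M)=M+O((\pi s)^2)$, the remainder being controlled exactly in the stated regime $\sinh^2(\pi s)\sinh^2(\pi z)\ll 1$, where the group element stays in a neighborhood of the identity on which $\log$ is single-valued. Matching the entries of $M$, namely $a=\pi s(1-e^{2\pi z})$ in the upper right and $b=\pi s(1-e^{-2\pi z})$ in the lower left, against the generator matrices $J_{(-1)0}$ and $J_{(-1)(d+1)}$ gives coefficients $i(b-a)$ and $i(a+b)$; the elementary identities $e^{2\pi z}-e^{-2\pi z}=2\sinh(2\pi z)$ and $2-e^{2\pi z}-e^{-2\pi z}=-4\sinh^2(\pi z)$ then yield $\log(\Delta_B^{iz}\Delta_{B_s}^{-iz})=2\pi i s\sinh(2\pi z)\,J_{(-1)0}-4\pi i s\sinh^2(\pi z)\,J_{(-1)(d+1)}+O((\pi s)^2)$.

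The second displayed equation follows immediately by parity in $z$: under $z\to -z$ the $J_{(-1)0}$ coefficient flips sign (since $\sinh(2\pi z)$ is odd) while the $J_{(-1)(d+1)}$ coefficient is fixed (since $\sinh^2(\pi z)$ is even), so subtracting the two logarithms cancels the $J_{(-1)(d+1)}$ term and doubles the $J_{(-1)0}$ term, giving $i4\pi s\sinh(2\pi z)\,J_{(-1)0}+O((\pi s)^2)$. For the first displayed equation I would use the exact operator identity $\Delta_B^{iz}\Delta_{B_s}^{-2iz}\Delta_B^{iz}=T_{0,s}(z)\,T_{0,s}(-z)^{-1}=e^{X}e^{-Y}$ with $X=\log(\Delta_B^{iz}\Delta_{B_s}^{-iz})=O(\pi s)$ and $Y=\log(\Delta_B^{-iz}\Delta_{B_s}^{iz})=O(\pi s)$, and invoke Baker--Campbell--Hausdorff: $\log(e^{X}e^{-Y})=X-Y-\tfrac12[X,Y]+\cdots$. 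Because $X,Y=O(\pi s)$, the commutator and every higher term are $O((\pi s)^2)$, so $\log(\Delta_B^{iz}\Delta_{B_s}^{-2iz}\Delta_B^{iz})=X-Y+O((\pi s)^2)$, which coincides with the difference computed in the previous step and hence equals $i4\pi s\sinh(2\pi z)\,J_{(-1)0}+O((\pi s)^2)$.

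The main obstacle is conceptual rather than computational: one must justify that a first-order identity derived in the finite-dimensional $2\times2$ representation transfers to the genuine unbounded operators on the Hilbert space, together with control of the $O((\pi s)^2)$ remainder as an operator statement. The resolution is that faithfulness of the representation pins down the abstract Lie-algebra element order by order in $\pi s$, so the remainder at each order is a fixed Lie-algebra element whose coefficient is determined in the faithful representation; the BCH series and the matrix logarithm both converge precisely in the small-$s$, bounded-$z$ window $\sinh^2(\pi s)\sinh^2(\pi z)\ll1$ where all group elements remain near the identity.
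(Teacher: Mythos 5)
Your proof is correct and follows essentially the same route as the paper: both reduce to the faithful $2\times 2$ representation of the Case~II $sl(2,\mathbb{R})$, expand $\Delta_B^{iz}\Delta_{B_s}^{-iz}$ to first order in $\pi s$, and match the off-diagonal entries against the generator matrices for $J_{(-1)0}$ and $J_{(-1)(d+1)}$, with the second display following by parity in $z$. The only difference is cosmetic: where you invoke BCH to show that the non-commutativity of the two logarithms enters only at $O((\pi s)^2)$, the paper instead computes $T_{0,s}(z)T_{0,s}(-z)^{-1}$ and its exact logarithm before expanding --- your version is arguably cleaner, and your remark on lifting the identity from the faithful finite-dimensional representation makes explicit a point the paper leaves implicit.
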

For general real $u$, we find $T_{u,s}(z)(T_{u,s}(-z))^{-1}$ as the following
\begin{eqnarray}
    &&\frac{1}{\sinh(2\pi s)\sinh(2\pi z)}(I - T_{u,s}(z)(T_{u,s}(-z))^{-1}) =\begin{pmatrix}
        C_{11}& C_{12}\\
        C_{21}&C_{22}
    \end{pmatrix}\nn\\
    &&C_{11}=\tanh(\pi s)(e^{2\pi z}\cosh^2(\pi u) + e^{-2\pi z}\sinh^2(\pi u)) + \sinh(2\pi u) \nn\\
    &&C_{12}=\tanh(\pi s)\cosh(2\pi z)\sinh(2\pi u) + \cosh(2\pi u) 
    \nn\\
    &&C_{21}=-\tanh(\pi s)\cosh(2\pi z)\sinh(2\pi u) - \cosh(2\pi u) \nn\\
    &&C_{22}= -\tanh(\pi s)(e^{-2\pi z}\cosh^2(\pi u) + e^{2\pi z}\sinh^2(\pi u)) - \sinh(2\pi u)\ .
\end{eqnarray}
The generator is 
\begin{align}
    \log T_{u,s}(z)(T_{u,s}(-z))^{-1} = i\kappa(\sinh(\pi s) \cosh(2\pi z) K_{B_u} + \cosh(\pi s) \tilde{H}_{u}).
\end{align}

\section{Kubo-Ando mean}

\begin{corollary}
For the setup of Theorem 1, we have
\begin{enumerate}
    \item  $T_a T_a^\dagger=e^{-2a\sin(\alpha) H}$.
    \item $(T_a T_a^\dagger)^\gamma=T_{a\gamma} T_{a\gamma}^\dagger$.
    \item $\Delta_A^{-1/4}\sharp_\gamma \Delta_B^{-1/4}=\Delta_{W(a\gamma)}^{-1/2}$ 
\end{enumerate}
\end{corollary}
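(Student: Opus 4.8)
The plan is to compute the weighted (Kubo--Ando) geometric mean directly from its defining formula
\[
X\sharp_\gamma Y = X^{1/2}\big(X^{-1/2}Y X^{-1/2}\big)^{\gamma}X^{1/2},
\qquad X=\Delta_A^{-1/4},\quad Y=\Delta_B^{-1/4},
\]
with $A=W(0)$ and $B=W(a)$, and to reduce every factor to the abelian subalgebra generated by the null momenta $P_\pm=\tfrac12(H\pm P_1)$, where products and fractional powers are explicit. The key input is the $L^2$-value of the characteristic function from Theorem \ref{thmRinlder}, $T_a(-i/4)=\Delta_A^{1/4}\Delta_B^{-1/4}$, which lets me eliminate $\Delta_B$ in favour of $\Delta_A$: writing $Y=\Delta_A^{-1/4}T_a(-i/4)$ collapses the middle factor to a single modular conjugation,
\[
X^{-1/2}Y X^{-1/2}=\Delta_A^{1/8}\Delta_B^{-1/4}\Delta_A^{1/8}=\Delta_A^{-1/8}\,T_a(-i/4)\,\Delta_A^{1/8}.
\]

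First I would evaluate this conjugation with the Rindler analogue of the scaling relations of Appendix \ref{app:CFT}, namely $\Delta_A^{w}P_\pm\Delta_A^{-w}=e^{\mp 2\pi i w}P_\pm$ together with $[P_+,P_-]=0$. Since $T_a(-i/4)$ is an exponential of the commuting momenta $P_0,P_1$, the $\Delta_A^{\pm1/8}$ conjugation rotates the null directions by $e^{\pm i\pi/4}$ and converts this non-self-adjoint operator into a manifestly positive exponential of $H$ alone — this is exactly Part (1) of the corollary dressed by the extra $45^{\circ}$ modular rotations, and positivity is precisely what the Kubo--Ando mean demands of its middle factor. Because that middle factor is an exponential of $H$, raising it to the power $\gamma$ is immediate, and Part (2), $(T_aT_a^\dagger)^\gamma=T_{a\gamma}T_{a\gamma}^\dagger$, is what makes the $\gamma$-power act on the family by the simple rescaling $a\mapsto a\gamma$; this linear dependence is the real reason the mean lands back inside the one-parameter family of wedges.

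Next I would conjugate back by $\Delta_A^{-1/8}$ and recombine the three resulting exponentials into a single group element, most transparently in a faithful finite-dimensional matrix representation of the two-dimensional Poincar\'e algebra $\{H,P_1,J_{01}\}$ (the contraction of the $\mathfrak{sl}(2,\mathbb{R})$ realized by matrices in Appendix \ref{app:CFT}). In that representation both $X\sharp_\gamma Y$ and the candidate answer $\Delta_{W(a\gamma)}^{-1/2}=e^{K_{W(a\gamma)}/2}$ become explicit matrices — the latter through the translated-boost relation $K_{W(a\gamma)}=K_{W(0)}+2\pi a\gamma H$, equivalent to $-\tfrac{1}{2\pi}\partial_{x_0}K_{W(x_0)}=H$ — so the identification reduces to a finite computation.

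The main obstacle, and the decisive step, is pinning down the exact power of $\Delta_{W(a\gamma)}$ that emerges, since the scalar prefactors produced by the fractional conjugations $\Delta_A^{\pm1/8}$ (the $\sqrt2$ and $e^{\pm i\pi/4}$ factors coming from the $45^{\circ}$ rotations) must assemble into precisely the stated exponent. The sharpest check is the Kubo--Ando boundary data: $X\sharp_0 Y=X=\Delta_{W(0)}^{-1/4}$ and $X\sharp_1 Y=Y=\Delta_{W(a)}^{-1/4}$, so the right-hand side is pinned at $\gamma=0,1$ and the interior computation must interpolate consistently between these endpoints. Verifying that the normalization of the exponent delivered by the matrix computation matches the claimed right-hand side, rather than differing by an overall rescaling of the power, is where the real content of Part (3) resides and is the step I would scrutinize most carefully.
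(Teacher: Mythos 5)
Your computational skeleton is sound, and it is essentially the only proof available: the paper states this corollary without proof, and it follows by direct computation from Theorem~\ref{thmRinlder}. Your reduction of the middle factor is correct, and in fact it can be phrased even more economically than your conjugation argument: $X^{-1/2}YX^{-1/2}=\Delta_A^{1/8}\Delta_B^{-1/4}\Delta_A^{1/8}=T_a(-i/8)\,T_a(-i/8)^\dagger$, which by the second identity of Theorem~\ref{thmRinlder} at $z=-i/8$ (equivalently by part (1) at angle $\pi/4$) equals $e^{-2a\sin(\pi/4)H}$ up to the paper's sign conventions. Where your proposal has a genuine gap is exactly the step you defer: no matrix representation or Baker--Campbell--Hausdorff recombination is needed to "pin down the exponent." Part (2) turns the $\gamma$-power into the substitution $a\mapsto a\gamma$, and running the same Theorem~\ref{thmRinlder} identity backwards with $a\gamma$ in place of $a$ gives $\bigl(X^{-1/2}YX^{-1/2}\bigr)^\gamma=\Delta_A^{1/8}\Delta_{W(a\gamma)}^{-1/4}\Delta_A^{1/8}$, so the outer factors $\Delta_A^{-1/8}$ cancel identically and $\Delta_A^{-1/4}\sharp_\gamma\Delta_B^{-1/4}=\Delta_{W(a\gamma)}^{-1/4}$. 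Leaving this one-line finish "to be scrutinized" means the proof is not actually completed.

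Moreover, the tension you flag at the end is not a subtlety to scrutinize --- your own boundary check already settles it, and you should have drawn the conclusion. Since $X\sharp_0 Y=\Delta_{W(0)}^{-1/4}$ and $X\sharp_1 Y=\Delta_{W(a)}^{-1/4}$, the printed right-hand side $\Delta_{W(a\gamma)}^{-1/2}$ fails at both endpoints: part (3) as stated contains a typo in the exponent. The correct statement is $\Delta_A^{-1/4}\sharp_\gamma\Delta_B^{-1/4}=\Delta_{W(a\gamma)}^{-1/4}$, or equivalently, with the inputs the printed right-hand side suggests were intended, $\Delta_A^{-1/2}\sharp_\gamma\Delta_B^{-1/2}=\Delta_{W(a\gamma)}^{-1/2}$; the general form is $\Delta_A^{-2\alpha}\sharp_\gamma\Delta_B^{-2\alpha}=\Delta_{W(a\gamma)}^{-2\alpha}$. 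A complete solution must say this explicitly rather than treat the claimed normalization as possibly correct. One further small inconsistency: you write $K_{W(a\gamma)}=K_{W(0)}+2\pi a\gamma H$ and call it "equivalent to" $-\tfrac{1}{2\pi}\partial_{x_0}K_{W(x_0)}=H$, but the latter integrates to $K_{W(x_0)}=K_{W(0)}-2\pi x_0 H$; the sign does not affect your argument, but it should be fixed.
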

where for positive operators $X$ and $Y$ and $\gamma\in (0,1)$, the Kubo-Ando geometric mean is defined to be 
\begin{eqnarray}
    X\sharp_\gamma Y=X^{1/2}\lb X^{-1/2}YX^{-1/2} \rb^\gamma X^{1/2}\ .
\end{eqnarray}
A key property of the Kubo-Ando mean is its monotonicity
\begin{eqnarray}
    \forall X<X',Y<Y':\qquad (X\sharp_\gamma Y)<(X'\sharp_\gamma Y')\ .
\end{eqnarray}
One may wonder whether the monotinicity condition above results in new inequalities in QFT.
In our case of inclusion of wedges in QFT these monotonicity equations do not result in new equations,  because by virtue of the third part of the corollary above, the Kubo-Ando mean of modular operators becomes the modular of another region. A similar phenomenon occurs in the vacuum of conformal field theory.

\section{Perturbations of $\tilde{H}$ \label{app:Ht}}
\begin{lemma}\label{Htilde}
Let $K(s)$ be a one-parameter family of operators on a Hilbert space that is differentiable at $s=0$. Let $z \in \mathbb{C}$ be a complex parameter. Consider the operators defined by
\begin{eqnarray}
    &&S(z,s)=e^{-iz K(0)} e^{2iz K(s)} e^{-iz K(0)}\ .
\end{eqnarray}
The first variation with respect to $s$ is given by
    \begin{eqnarray}
        \tilde{H}(z) &&=\frac{{-}1}{2\sin(2\pi iz)}\p_s\log S(z,s)\Big|_{s=0}= \frac{{-}1}{2\sinh(2\pi z)}\int_{-z}^{z} dv \, e^{i v \ad_{K(0)}} \dot{K}(0)\nn\\
        &&=\frac{{-}\sin(z \ad_{K(0)})}{\sinh(2\pi z)\ad_{K(0)}} \dot{K}(0)
    \end{eqnarray}
    where $\dot{K}(0) = \p_s K(s)|_{s=0}$ and the adjoint action is defined as $\ad_A B = [A, B]$. In the limit of $z\to 0$ we have
    \begin{eqnarray}
        2\pi\tilde{H}(0)={-}\dot{K}(0)\ .
    \end{eqnarray}
\end{lemma}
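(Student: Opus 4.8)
The plan is to reduce the statement to a single-variable differentiation-under-the-exponential computation, since everything hinges on how $\log S(z,s)$ varies with $s$ at $s=0$. First I would record that at $s=0$ we have $S(z,0)=e^{-izK(0)}e^{2izK(0)}e^{-izK(0)}=\mathbbm{1}$, so $\log S(z,s)$ vanishes at $s=0$ and its first variation is simply $\p_s S(z,s)|_{s=0}$ (the derivative of $\log$ at the identity is the identity map, so no Baker--Campbell--Hausdorff corrections enter at first order). This is the key simplification: to first order in $s$ we only need $\p_s S(z,s)|_{s=0}$ itself.

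Next I would differentiate $S(z,s)=e^{-izK(0)}e^{2izK(s)}e^{-izK(0)}$ directly. Only the middle factor depends on $s$, so by the standard formula for the derivative of a matrix exponential,
\begin{eqnarray}
\p_s\, e^{2izK(s)}\Big|_{s=0}=\int_0^1 d\lambda\; e^{2iz\lambda K(0)}\,\big(2iz\,\dot K(0)\big)\,e^{2iz(1-\lambda)K(0)}\ .
\end{eqnarray}
Conjugating by the outer factors $e^{-izK(0)}$ on both sides collapses the expression: the left outer factor combines with $e^{2iz\lambda K(0)}$ and the right outer factor with $e^{2iz(1-\lambda)K(0)}$, yielding
\begin{eqnarray}
\p_s S(z,s)\Big|_{s=0}=2iz\int_0^1 d\lambda\; e^{iz(2\lambda-1)K(0)}\,\dot K(0)\,e^{-iz(2\lambda-1)K(0)}\ .
\end{eqnarray}
Recognizing $e^{izvK(0)}\,\dot K(0)\,e^{-izvK(0)}=e^{iv\,\ad_{K(0)}}\dot K(0)$ and substituting $v=z(2\lambda-1)$ so that $dv=2z\,d\lambda$ and $\lambda\in(0,1)\mapsto v\in(-z,z)$, the integral becomes $i\int_{-z}^{z}dv\,e^{iv\,\ad_{K(0)}}\dot K(0)$. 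Dividing by $-2\sin(2\pi i z)=-2i\sinh(2\pi z)$ (using $\sin(2\pi i z)=i\sinh(2\pi z)$) produces exactly the claimed integral representation $\tilde H(z)=\frac{-1}{2\sinh(2\pi z)}\int_{-z}^z dv\,e^{iv\,\ad_{K(0)}}\dot K(0)$.

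To obtain the closed form I would then carry out the $v$-integral formally against the spectral decomposition of $\ad_{K(0)}$: on an eigenvector of $\ad_{K(0)}$ with eigenvalue $\omega$ one has $\int_{-z}^z e^{iv\omega}dv=\frac{2\sin(z\omega)}{\omega}$, which assembles into the operator identity $\frac{-\sin(z\,\ad_{K(0)})}{\sinh(2\pi z)\,\ad_{K(0)}}\dot K(0)$, interpreting $\sin(x)/x\to 1$ at $x=0$. Finally the $z\to 0$ limit follows by taking $z\to0$ in the integral form, where $\frac{1}{\sinh(2\pi z)}\int_{-z}^z dv\,(\cdots)\to \frac{1}{2\pi z}\cdot 2z\,\dot K(0)=\frac{1}{\pi}\dot K(0)$, giving $2\pi\tilde H(0)=-\dot K(0)$. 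The only delicate point—the main obstacle—is justifying these manipulations at the operator level rather than formally: $K(s)$ are unbounded self-adjoint operators, so I would need to control domains and the convergence of the Duhamel integral, and argue that strong differentiability of $s\mapsto K(s)$ at $s=0$ suffices to differentiate the product of exponentials; for the purposes of this lemma I would state these as standing regularity assumptions and treat $\ad_{K(0)}$ spectrally, since in the applications $K(0)$ is a modular Hamiltonian with well-understood spectral calculus.
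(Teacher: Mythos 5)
Your proposal is correct and follows essentially the same route as the paper's proof: Duhamel's formula for the derivative of the middle exponential, collapsing the conjugation by the outer factors, a linear change of variables to an integral of $e^{iv\,\ad_{K(0)}}\dot K(0)$ over $(-z,z)$, and evaluation of that integral to get the $\sin(z\,\ad_{K(0)})/\ad_{K(0)}$ form. Your additional remarks — that $S(z,0)=\mathbbm{1}$ so the logarithm contributes trivially at first order, and that domain issues for the unbounded modular Hamiltonians must be treated as standing regularity assumptions — are correct refinements of the same argument rather than a different approach.
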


\begin{proof}
We linearize the operator $K(s)$ around $s=0$. To first order in small $s$, we have $K(s) \approx K(0) + s \dot{K}(0)$.
We define the parameter $\alpha = iz$ and apply Duhamel's formula for the derivative of the exponential map:
\begin{equation}
    e^{2iz K(s)} \approx e^{2iz K(0)} + 2iz s \int_0^1 du \, e^{2iz(1-u)K(0)} \dot{K}(0) e^{2iz u K(0)}
\end{equation}
Substituting this into $S(z,s)$ and simplifying the sandwiching terms:
\begin{align}
    S(z, s) &\approx 1 + 2iz s \int_{1}^{-1} \left(-\frac{dv}{2}\right) e^{v (iz) K(0)} \dot{K}(0) e^{-v (iz) K(0)} \\
      &= 1 + iz s \int_{-1}^{1} dv \, e^{v (iz) \ad_{K(0)}} \dot{K}(0)
\end{align}
where we changed variables to $v=1-2u$. We evaluate the integral over $v$:
\begin{equation}
    \int_{-1}^{1} dv \, e^{v \lambda} = \frac{e^{\lambda} - e^{-\lambda}}{\lambda} = \frac{2\sinh(\lambda)}{\lambda}
\end{equation}
Setting $\lambda = iz \ad_{K(0)}$, we find
\begin{eqnarray}
    S(z, s) &&\approx 1 + i s \left( 2\frac{\sin(z \ad_{K(0)})}{ \ad_{K(0)}} \dot{K}(0) \right)\nn\\
    \tilde{H}(z) &&= \frac{{-}1}{2\sin(2\pi iz)} \p_s S(z, s)\Big|_{s=0} \nn \\
    &&= \frac{{-}\sin(z \ad_{K(0)})}{\sinh(2\pi z) \ad_{K(0)}} \dot{K}(0)
\end{eqnarray}
\end{proof}
These expression should be understood using complex interpolation theory of non-commutative $L_p$-spaces \cite{furuya2023monotonic}.
\begin{lemma}\label{lemmaanitocmmuteApp}
For $\beta \in \mathbb{R}$, and $\delta K$ the first order perturbation of self-adjoint operator $K$, we have the identity
\begin{equation}
    \frac{1}{2\beta} \int_{-\beta}^\beta d\tau \, e^{\tau K} (\delta K) e^{-\tau K} = \frac{-1}{2\beta} \left\{ e^{\beta K}, \delta(e^{-\beta K}) \right\}
\end{equation}
where $\{A, B\} = AB + BA$ denotes the anti-commutator.
\end{lemma}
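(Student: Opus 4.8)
The plan is to compute the right-hand side directly from Duhamel's formula and watch the two halves of the anticommutator reassemble the single integral on the left. First I would record the first-order variation of the exponential under $K \to K + \delta K$,
\begin{equation}
    \delta(e^{-\beta K}) = -\beta \int_0^1 du\, e^{-u\beta K}(\delta K)\, e^{-(1-u)\beta K} = -\int_0^\beta ds\, e^{-sK}(\delta K)\, e^{-(\beta - s)K},
\end{equation}
where in the second equality I set $s = u\beta$. This is the only analytic input: for bounded $K$ it is exact, and for unbounded self-adjoint $K$ it holds as a first-order variational identity on the appropriate domain.

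Next I would multiply by $e^{\beta K}$ on each side and collapse the adjacent exponentials. On the left factor the exponentials combine as
\begin{equation}
    e^{\beta K}\,\delta(e^{-\beta K}) = -\int_0^\beta ds\, e^{(\beta - s)K}(\delta K)\, e^{-(\beta - s)K} = -\int_0^\beta d\tau\, e^{\tau K}(\delta K)\, e^{-\tau K},
\end{equation}
after the substitution $\tau = \beta - s$. On the right factor they combine the other way,
\begin{equation}
    \delta(e^{-\beta K})\, e^{\beta K} = -\int_0^\beta ds\, e^{-sK}(\delta K)\, e^{sK} = -\int_{-\beta}^0 d\tau\, e^{\tau K}(\delta K)\, e^{-\tau K},
\end{equation}
now with $\tau = -s$. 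Thus the left factor produces the positive half $[0,\beta]$ of the target interval and the right factor the negative half $[-\beta,0]$.

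Summing the two displays gives the anticommutator as the integral over the full interval,
\begin{equation}
    \{e^{\beta K}, \delta(e^{-\beta K})\} = -\int_{-\beta}^\beta d\tau\, e^{\tau K}(\delta K)\, e^{-\tau K},
\end{equation}
and dividing by $-2\beta$ produces exactly the claimed identity. The whole argument is bookkeeping, so the only place where care is required is the sign and limit tracking in the two changes of variables; I expect that to be the sole obstacle, and it is purely mechanical. One subtlety worth flagging is the domain question for unbounded $K$: manipulating $e^{\tau K}(\delta K)e^{-\tau K}$ presumes $\delta K$ maps the relevant domain appropriately, so I would either assume this or phrase the result as a formal first-order relation, consistent with the interpolation-theoretic reading noted after Lemma \ref{Htilde}.
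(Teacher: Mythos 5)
Your proof is correct and rests on the same ingredients as the paper's: Duhamel's formula for the first-order variation of the exponential, followed by linear changes of variables that reassemble the two halves of the anticommutator into the symmetric integral over $[-\beta,\beta]$. The only organizational difference is that the paper routes through $\delta(e^{2\beta K})$ computed two ways (product rule on $e^{\beta K}e^{\beta K}$ versus Duhamel) and then conjugates by $e^{-\beta K}$, whereas you expand $\delta(e^{-\beta K})$ directly and multiply by $e^{\beta K}$ on each side separately; this lands you exactly on the form stated in the lemma, while the paper's own derivation terminates at the equivalent expression $\{e^{-\beta K},\delta(e^{\beta K})\}$.
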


\begin{proof}
We consider the variation of the operator $e^{2\beta K}$. Using the chain rule for the variation of a squared operator $A^2$ (where $A=e^{\beta K}$):
\begin{equation}
    \delta(e^{2\beta K}) = \delta \left( e^{\beta K} e^{\beta K} \right) = \delta(e^{\beta K}) e^{\beta K} + e^{\beta K} \delta(e^{\beta K})
\end{equation}
Alternatively, using Duhamel's formula for the variation of an exponential, we have:
\begin{equation}
    \delta(e^{2\beta K}) = \int_0^1 ds \, e^{(1-s)2\beta K} (2\beta \delta K) e^{s 2\beta K}
\end{equation}
We conjugate this expression by $e^{-\beta K}$:
\begin{align}
    e^{-\beta K} \delta(e^{2\beta K}) e^{-\beta K} &= e^{-\beta K} \left( 2\beta \int_0^1 ds \, e^{2\beta K} e^{-2s \beta K} (\delta K) e^{2s \beta K} \right) e^{-\beta K} \nonumber \\
    &= 2\beta \int_0^1 ds \, e^{\beta K(1-2s)} (\delta K) e^{-\beta K(1-2s)}
\end{align}
We perform the change of variables $\tau = \beta(1-2s)$, with $d\tau = -2\beta ds$. The limits change from $[0, 1]$ to $[\beta, -\beta]$:
\begin{equation}
    e^{-\beta K} \delta(e^{2\beta K}) e^{-\beta K} = \int_{-\beta}^{\beta} d\tau \, e^{\tau K} (\delta K) e^{-\tau K}
\end{equation}
Substituting the product rule expansion into the left-hand side:
\begin{align}
    e^{-\beta K} \left( \delta(e^{\beta K}) e^{\beta K} + e^{\beta K} \delta(e^{\beta K}) \right) e^{-\beta K} &= e^{-\beta K} \delta(e^{\beta K}) + \delta(e^{\beta K}) e^{-\beta K} \nonumber \\
    &= \{ e^{-\beta K}, \delta(e^{\beta K}) \}
\end{align}
Dividing both sides by $2\beta$ completes the proof.
\end{proof}
\begin{corollary}
Take $K(s)=K_{B(R e^{-2\pi s})}$ to be the Hamiltonian of a ball-shaped region of radius $R e^{-2\pi s}$, then
\begin{eqnarray}
2\pi \tilde{H}(0)={-}\dot{K}(0)=2\pi R\p_R K_{B(R)}\ .
\end{eqnarray}
As we vary $z$ in the complex plane, the expression $\tilde{H}(z)$ for real $z$ can be viewed as a channel applied to $-R \p_R K_{B(R)}$ that averages over modular time in the range $(-z,z)$:
\begin{eqnarray}
    &&\tilde{H}(z)=\mathcal{E}_z(R \p_R K_{B(R)})\nn\\
    &&\mathcal{E}_z(X)=\frac{1}{2\sinh(2\pi z)} \int_{-z}^z du \: e^{i u K} X e^{-i u K} \nn\\
    &&\tilde{H}(z)=\frac{i}{2\sinh(2\pi z)}\{\Delta^{iz},\p_s \Delta^{-iz}\}
\end{eqnarray}
and for imaginary values of $z=-i\beta/2$ it can be interpreted as applying an average over Euclidean modular flow:
\begin{eqnarray}
    &&\mathcal{E}_{-i\beta/2}(X)=\frac{1}{2\sin(\beta/2)}\int_{-\beta/2}^{\beta/2}d\tau e^{\tau K} X e^{-\tau K}\nn\\
    &&\tilde{H}(-i\beta/2)=\frac{1}{2\sin(\beta/2)}\{\Delta^{-\beta/2},\p_s \Delta^{\beta/2}\}\ .
\end{eqnarray}
The path-integral that computes $\Delta^\alpha$ is a Euclidean double cone cut open. Our Hamiltonians compute the response of this Euclidean double cone to the regulator $\p_s$ that separates the two sides of the cone by moving them in opposite directions.
    
\end{corollary}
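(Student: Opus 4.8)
The plan is to deduce every assertion from Lemmas~\ref{Htilde} and~\ref{lemmaanitocmmuteApp} once the operator $S(z,s)$ of Lemma~\ref{Htilde} is identified with the characteristic-function combination $T_s(z)T_s(-z)^{-1}$. With the convention $\Delta^{iz}=e^{-izK}$ one has $\Delta_{B(R)}^{iz}=e^{-izK(0)}$ and $\Delta_{B(e^{-2\pi s}R)}^{-2iz}=e^{2izK(s)}$, so that
\begin{equation}
 T_s(z)T_s(-z)^{-1}=\Delta_{B(R)}^{iz}\Delta_{B(e^{-2\pi s}R)}^{-2iz}\Delta_{B(R)}^{iz}=e^{-izK(0)}e^{2izK(s)}e^{-izK(0)}=S(z,s),
\end{equation}
with $K(0)=K_{B(R)}$. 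Thus Lemma~\ref{Htilde} applies verbatim, and the corollary becomes a matter of specializing it.

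For the first displayed equation I would evaluate $\dot K(0)$ by the chain rule. Writing the radius as $r(s)=Re^{-2\pi s}$ gives $r'(0)=-2\pi R$, hence $\dot K(0)=\p_s K_{B(r(s))}\big|_{s=0}=-2\pi R\,\p_R K_{B(R)}$; combined with the $z\to0$ limit $2\pi\tilde H(0)=-\dot K(0)$ of Lemma~\ref{Htilde} this yields $2\pi\tilde H(0)=2\pi R\,\p_R K_{B(R)}$. For the averaging representation I would substitute the same $\dot K(0)$ into the integral form $\tilde H(z)=\frac{-1}{2\sinh(2\pi z)}\int_{-z}^{z}dv\,e^{iv\,\ad_{K(0)}}\dot K(0)$ and use $e^{iv\,\ad_{K(0)}}X=e^{ivK}Xe^{-ivK}$, recognizing the right-hand side, up to the normalization convention of $\mathcal{E}_z$, as $\mathcal{E}_z$ applied to $R\,\p_R K_{B(R)}$, with the overall constant fixed by the limit $\mathcal{E}_0(X)=\frac{1}{2\pi}X$.

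The anticommutator form is where Lemma~\ref{lemmaanitocmmuteApp} enters. Analytically continuing that identity to imaginary $\beta=iz$ (equivalently substituting $\tau=iv$) converts $\int_{-z}^{z}dv\,e^{ivK}\dot K(0)e^{-ivK}$ into $i\{\Delta^{-iz},\p_s\Delta^{iz}\}$, using $e^{izK}=\Delta^{-iz}$ and $e^{-izK}=\Delta^{iz}$. Differentiating the identity $\Delta^{iz}\Delta^{-iz}=1$ in $s$ gives $(\p_s\Delta^{iz})\Delta^{-iz}=-\Delta^{iz}(\p_s\Delta^{-iz})$, which upon forming anticommutators yields $\{\Delta^{-iz},\p_s\Delta^{iz}\}=-\{\Delta^{iz},\p_s\Delta^{-iz}\}$; substituting this produces the stated $\tilde H(z)=\frac{i}{2\sinh(2\pi z)}\{\Delta^{iz},\p_s\Delta^{-iz}\}$. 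The Euclidean statements then follow from the single analytic continuation $z=-i\beta/2$: now $iz=\beta/2$ is real, the Wick rotation $v\mapsto -i\tau$ turns the Lorentzian modular flow $e^{ivK}$ into the Euclidean flow $e^{\tau K}$ with integration range $(-\beta/2,\beta/2)$, and $\sinh(2\pi z)$ becomes a sine, reproducing both the Euclidean average and the anticommutator $\{\Delta^{-\beta/2},\p_s\Delta^{\beta/2}\}$.

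The main obstacle I anticipate is not the algebra but its functional-analytic justification, since $K$ and the complex powers $\Delta^{iz}$ are unbounded for the type III$_1$ local algebras in question. Lemma~\ref{lemmaanitocmmuteApp} is proven for real $\beta$, and continuing it to $\beta=iz$ with $\Im(z)\in(-1/2,0)$ requires that $\Delta^{iz}\dot K$ and the Duhamel integrand remain densely defined and closable; this is exactly the strip where the characteristic function is bounded and analytic by Lemma~\ref{lemma:characteristic}, so the continuation should be legitimate on a common core, but interchanging $\p_s$ with the $v$-integral and with the operator products relies on the smoothness of $s\mapsto K(s)$ assumed in Lemma~\ref{Htilde}. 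I would therefore state the final anticommutator expressions, as the authors suggest, within the complex interpolation theory of non-commutative $L_p$-spaces, interpreting them as densely defined closable operators rather than bounded ones.
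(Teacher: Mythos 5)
Your proposal is correct and follows essentially the same route as the paper: the corollary is stated there as an immediate specialization of Lemmas \ref{Htilde} and \ref{lemmaanitocmmuteApp} (identify $T_s(z)T_s(-z)^{-1}$ with $S(z,s)$, compute $\dot K(0)=-2\pi R\,\p_R K_{B(R)}$ by the chain rule, and Wick-rotate the anticommutator identity), which is exactly what you do. The residual factor-of-$2\pi$ ambiguities you flag are inconsistencies in the paper's own normalization of $\mathcal{E}_z$ rather than a gap in your argument, and your remark that the analytic continuation of Lemma \ref{lemmaanitocmmuteApp} should be justified within the interpolation-theoretic framework matches the authors' own caveat.
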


\begin{lemma}\label{lemmaperturbApp}
Let $K(s, \lambda)$ be a two-parameter family of operators on a Hilbert space, smooth in both $s$ and $\lambda$. Let $z \in \mathbb{C}$. 
The effective Hamiltonian $\tilde{H}(z, \lambda)$, defined by
\begin{equation}
    \tilde{H}(z, \lambda) = \frac{{-}1}{2\sin(2\pi iz)} \p_s \log \left( e^{-iz K(0, \lambda)} e^{2iz K(s, \lambda)} e^{-iz K(0, \lambda)} \right) \Big|_{s=0},
\end{equation}
admits the following expansion to first order in $\lambda$:
\begin{equation}
    \tilde{H}(z, \lambda) = \tilde{H}(z,0) + \lambda \tilde{H}'(z,0) + \mathcal{O}(\lambda^2)
\end{equation}
where 
\begin{align}
    \tilde{H}(z,0) &= \frac{{-}\sin(z \ad_K)}{\sinh(2\pi z) \ad_K} \dot{K}(0,0) \\
    \tilde{H}'(z,0) &= \frac{{-}\sin(z \ad_K)}{\sinh(2\pi z) \ad_K} (\dot{K}') {-}\frac{iz^2}{2\sinh(2\pi z)} \int_{-1}^{1} dy \, y \int_0^1 dw \, e^{izy(1-w)\ad_K} \Big[ K', \, e^{izyw\ad_K} \dot{K} \Big]
\end{align}
with $K = K(0,0)$, $\dot{K} = \p_s K$, $K' = \p_\lambda K$, $\dot{K}' = \p_\lambda \p_s K$. 
\end{lemma}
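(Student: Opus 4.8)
The plan is to reduce the two-parameter problem to a first-order $\lambda$-expansion of the one-parameter formula already proved in Lemma \ref{Htilde}. The key observation is that for each fixed $\lambda$ the map $s \mapsto K(s,\lambda)$ is itself a one-parameter family to which Lemma \ref{Htilde} applies verbatim; in particular $S(z,0,\lambda)=e^{-izK(0,\lambda)}e^{2izK(0,\lambda)}e^{-izK(0,\lambda)}=1$ identically in $\lambda$, so the logarithm is anchored at the identity for every $\lambda$ and no base-point ambiguity enters. This yields the exact closed form
\[
\tilde{H}(z,\lambda)=\frac{{-}\sin(z\,\ad_{K(0,\lambda)})}{\sinh(2\pi z)\,\ad_{K(0,\lambda)}}\,\dot{K}(0,\lambda),
\]
valid for all $\lambda$, so the entire lemma reduces to Taylor-expanding the right-hand side about $\lambda=0$.

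First I would rewrite the entire function $g(x)={-}\sin(zx)/(\sinh(2\pi z)\,x)$ via the integral representation
\[
g(\ad_{K})=\frac{{-}z}{2\sinh(2\pi z)}\int_{-1}^{1}dy\;e^{izy\,\ad_{K}},
\]
which follows from $\sin(zx)/x=\tfrac{z}{2}\int_{-1}^{1}e^{izyx}\,dy$. Writing $F(\lambda)=g(\ad_{K(0,\lambda)})\,\dot{K}(0,\lambda)$ with $K=K(0,0)$, $K'=\p_\lambda K$, $\dot K=\dot K(0,0)$, $\dot K'=\p_\lambda\dot K$, the product rule splits $\p_\lambda F|_{\lambda=0}$ into a term where the derivative hits the argument $\dot K(0,\lambda)$ and a term where it hits the superoperator $g(\ad_{K(0,\lambda)})$ through the $\lambda$-dependence of the reference operator. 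The first piece is immediate and reproduces $\tfrac{{-}\sin(z\ad_K)}{\sinh(2\pi z)\ad_K}\dot{K}'$.

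For the second piece I would apply Duhamel's formula to the superoperator exponential: with $A(\lambda)=izy\,\ad_{K(0,\lambda)}$ and $\p_\lambda A|_{0}=izy\,\ad_{K'}$, one has
\[
\p_\lambda\, e^{izy\,\ad_{K(0,\lambda)}}\Big|_{\lambda=0}=izy\int_{0}^{1}dw\;e^{izy(1-w)\ad_K}\,\ad_{K'}\,e^{izyw\,\ad_K}.
\]
Applying this to $\dot K$ and using $\ad_{K'}(\cdot)=[K',\cdot]$ converts the inner action into the nested commutator $[K',\,e^{izyw\ad_K}\dot K]$; collecting the prefactor $\tfrac{{-}z}{2\sinh(2\pi z)}$, the factor $iz$ and the surviving $y$ from Duhamel produces exactly the stated correction ${-}\tfrac{iz^{2}}{2\sinh(2\pi z)}\int_{-1}^{1}dy\,y\int_{0}^{1}dw\,e^{izy(1-w)\ad_K}[K',e^{izyw\ad_K}\dot K]$. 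Summing the two pieces gives $\tilde{H}'(z,0)$, completing the expansion.

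The main obstacle is analytic rather than algebraic: the $K$'s are unbounded modular Hamiltonians, so I must justify that the superoperator exponentials $e^{izy\,\ad_K}$ are well defined on the operators at hand, that Duhamel's formula and the interchange $\p_s\p_\lambda=\p_\lambda\p_s$ are legitimate, and that the $y$- and $w$-integrals converge on the strip $\Im(z)\in(-1/2,0)$. The smoothness hypothesis on $K(s,\lambda)$, together with the bounded analyticity of the characteristic function established in Lemma \ref{lemma:characteristic}, is what controls these points; carrying out the formal manipulations on a suitable common core for the modular operators is where the genuine work lies.
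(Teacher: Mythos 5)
Your proposal is correct and follows essentially the same route as the paper: apply Lemma \ref{Htilde} at fixed $\lambda$ to get the closed form, then differentiate in $\lambda$, handling the variation of the superoperator $\sin(z\,\ad_K)/(z\,\ad_K)$ via its integral representation over $y\in[-1,1]$ together with Duhamel's formula, which reproduces the stated double-integral commutator term with the correct prefactor $-iz^2/(2\sinh(2\pi z))$. Your explicit remark that $S(z,0,\lambda)=1$ anchors the logarithm, and your closing caveat about domains of the unbounded modular generators, make explicit points the paper leaves implicit, but the argument is the same.
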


\begin{proof}
The term $\tilde{H}(z,0)$ and the term involving $\dot{K}'$ follow directly from Lemma \ref{Htilde}. To derive the second term in (\ref{Hprimez0}). 
Starting from the integral representation derived previously:
\begin{equation}
    \frac{\sin(z \ad_K)}{z \ad_K} = \frac{1}{2} \int_{-1}^1 dy \, e^{iy z \ad_K}=\frac{1}{2z}\int_{-z}^z d\tau e^{i\tau \ad_K}
\end{equation}
This is a low-pass filter. 
We compute the variation with respect to $K$ in the direction $K'$ using Duhamel's formula:
\begin{equation}
    \delta \left( e^{X} \right) = \int_0^1 dw \, e^{(1-w)X} (\delta X) e^{wX}
\end{equation}
Here $X = iy z \ad_K$ and $\delta X = iy z \ad_{K'}$. Substituting these into the integral over $y$:
\begin{align}
    \delta \left( \frac{\sin(z \ad_K)}{z \ad_K} \right) \dot{K} &= \frac{1}{2} \int_{-1}^1 dy \, \int_0^1 dw \, e^{iyz(1-w)\ad_K} (iyz \ad_{K'}) e^{iyzw\ad_K} \dot{K} \\
    &= \frac{iz}{2} \int_{-1}^1 dy \, y \int_0^1 dw \, e^{iyz(1-w)\ad_K} \left[ K', e^{iyzw\ad_K} \dot{K} \right]\\
    &= \int_{-1}^1 dy \, i\sinh(\pi y z) [e^{iyz\ad_K/2} \mathcal{E}_{yz/2}(K'), e^{iyz\ad_K} \dot{K}]
\end{align}
{We have used $e^{i u K} X e^{-i u K} = \Ad_{e^{i u K}} X = e^{i u \ad_K} X$}.
This completes the proof. Of course, in the $z\to 0$, we recover 
\begin{eqnarray}
    \tilde{H}(z,\lambda)=\dot{K}(0,0)+\dot{K'}(0,0)+O(\lambda^2)\ .
\end{eqnarray}
\end{proof}
```````````````````````````
\bibliographystyle{unsrt}``````````
\bibliography{PRLversionNotes}


\end{document}